\newcommand{\smalltitle}{}
\newcommand{\doctitle}{Maxwell-Lorentz Dynamics of Rigid Charges}
\def\arxiv{1}
\def\draftmark{0}
\newcommand{\ifarxiv}[2]{{\if\arxiv 1 #1 \else #2 \fi}}
\newcommand{\ce}{\color{black}}
\newtheorem{theorem}{Theorem}[section]
\newtheorem{lemma}[theorem]{Lemma}
\newtheorem{definition}[theorem]{Definition}
\newtheorem{remark}[theorem]{REMARK}
\newcommand{\bounds}{\mathtt{Bounds}}                   
\renewcommand{\cal}[1]{{\mathcal{#1}}}                    
\newcommand{\Ker}{{\operatorname{Ker}}}                 
\newcommand{\Ran}{{\operatorname{Range}}}             	
\newcommand{\sign}{{\operatorname{sign}}}				
\newcommand{\supp}{{\operatorname{supp}\;}}             
\newcommand{\vect}[1]{{\mathbf{#1}}}                    
\newcommand{\bb}[1]{{\mathbb{#1}}}                      
\newcommand{\intdv}[1]{{\int d^3#1\;}}                  
\newcommand{\charf}[1]{\mathbbm{1}_{#1}}                
\newcommand{\bigoh}{\operatorname{O}}                   
\newcommand{\braket}[1]{\left\langle #1 \right\rangle}  
\newcommand{\ppref}[1]                                  
{%
  \ifthenelse{\getpagerefnumber{#1}=\thepage}%
  {}{{\tiny{p.\ifpdf\pageref*{#1}\else\pageref{#1}\fi}}}%
}
\newcommand{\pref}[1]{\ref{#1}}
\newcommand{\comref}[1]                                 
{(see Computation \pref{#1}\ppref{#1})}
\newcommand{\proofref}[1]                               
{(see Appendix \pref{#1}\ppref{#1})}
\begin{document}

\pagestyle{empty}

\title{\textbf{\doctitle}\\\large{\smalltitle}}

\author{G. Bauer\thanks{gernot.bauer@fh-muenster.de}, D.-A. Deckert\thanks{deckert@math.ucdavis.edu}, D. D\"urr\thanks{duerr@math.lmu.de}}
  
\date{September 16, 2010, rev. \today}

\maketitle

\begin{abstract}
We establish global existence and uniqueness of the dynamics of classical electromagnetism  with  extended, rigid charges and 
 fields
which  need not to be square integrable. We consider also a modified theory of electromagnetism where no self-fields occur.
That theory and our results are crucial for approaching the as yet unsolved problem  of the general existence of dynamics  of Wheeler Feynman electromagnetism, which we shall address in the follow up paper.\\

\noindent\textbf{Keywords:} Maxwell-Lorentz Equations, Wheeler-Feynman Equations, Absorber Electrodynamics, Radiation Damping, Self-Interaction, Infinite Energy Solutions, Weighted Function Spaces.\\

\noindent\textbf{Acknowledgments:} The authors want to thank Martin Kolb for his valuable comments. D.-A.D. gratefully acknowledges financial support from the \emph{BayEFG} of the \emph{Freistaat Bayern} and the \emph{Universi\"at Bayern e.V.} as well as  the from the post-doc program of the DAAD.
\end{abstract}


\pagestyle{myheadsfoots}

\section{Introduction} 
We consider the global existence of dynamics of classical electromagnetism for extended rigid charges.  To put our work into proper perspective we shall introduce a number of theories ML, ML-SI, MLD, WF, and $\text{ML}_\varrho$, $\text{ML-SI}_\varrho$, and $\text{WF}_\varrho$. The former are theories for point-charges, the latter are the theories modified by smeared out charges, i.e. extended rigid charges indicated by the charge distribution $\varrho$. It will become clear in a moment,  why it is helpful to introduce these notations.

ML stands for Maxwell-Lorentz electrodynamics - the textbook electromagnetism \cite{barut_electrodynamics_1980,rohrlich_classical_1994}. It is the theory of interaction between electromagnetic radiation and charged matter. The electrodynamic field is represented by an antisymmetric second-rank tensor field $F$ on four-dimensional Minkowski space $\bb M:=(\bb R^4,g)$ equipped with a metric tensor $g:=\operatorname{diag}(1,-1,-1,-1)$. Charged matter is described by the four-vector charge-current density $j$ on $\bb M$. For a prescribed current $j$ the time evolution of the electrodynamic field $F$ is ruled by the Maxwell equations 
\begin{align}\label{eqn:original maxwell}
  \partial_\nu F^{\mu\nu}(x)=-4\pi j^\mu(x), && \partial^\gamma F^{\alpha\beta}(x)+\partial^\alpha F^{\beta\gamma}(x)+\partial^\beta F^{\gamma\alpha}(x)=0
\end{align}
where we have used Einstein's summation convention for Greek indices, i.e. $x_\mu y^\mu:=\sum_{\mu=0}^3g_{\mu\nu}x^\mu y^\nu$, and denote the partial derivative with respect to the standard unit vectors in $\bb R^4$ by $\partial_\mu$, $0\leq \mu\leq 3$.

In turn, for a given electromagnetic field $F$, the motion of $N$ point-like particles, which are represented by world lines $z_i:\bb R\to\bb M$, $\tau\mapsto z_i^\mu(\tau)$ with labels $1\leq i\leq N$, obey the Lorentz force law
\begin{align}\label{eqn:lorentzforce}
  m_i\ddot z^\mu_i(\tau)=e_i F^{\mu\nu}(z_i(\tau))\;\dot z_{i,\nu}(\tau).
\end{align}
Here $m_i\neq 0$ denotes the mass and $e_i\in\bb R$ is a coupling constant (the charge) of the $i$-th particle. The overset dots denote differentiation with respect to the world line parametrization $\tau$.
ML is defined by the system of equations (\pref{eqn:original maxwell}) and (\pref{eqn:lorentzforce}) coupled by
\begin{align}\label{current}
  j^\mu(x):=\sum_{i=1}^Nj^\mu_i(x), && j^\mu_i(x):=e_i\int_{\bb R} d\tau\;\dot z^\mu_i(\tau)\delta^{(4)}(x-z_i(\tau))
\end{align}  
where $\delta^{(4)}$ denotes the four-dimensional Dirac delta distribution.

Unfortunately and well  known, ML is merely a formal set of equations.
  The system has no solutions. The reason is that the self-field, the field created by a point-charge and acting back on it, is infinite at the position of the  point-charge.
For reasons which become clear soon we recall  the nature of this singularity.
Due to the linearity of the Maxwell equations we may decompose the field $F$ into the sum of fields $F_k$, $1\leq k\leq N$, fulfilling
\begin{align}\label{eqn:cea m eq for ji}
  \partial_\mu F^{\mu\nu}_k(x)=4\pi j^\nu_k(x), && \partial^\gamma F^{\alpha\beta}_k(x)+\partial^\alpha F^{\beta\gamma}_k(x)+\partial^\beta F^{\gamma\alpha}_k(x)=0
\end{align}
and write (\pref{eqn:lorentzforce})   as
\begin{align}\label{eqn:cea ml+si lornetz force}
  m_i\ddot z^\mu_i(\tau)=e_i \sum_{k=1}^N F^{\mu\nu}_k(z_i(\tau))\;\dot z_{i,\nu}(\tau).
\end{align}
Again by linearity  the solutions $F_k$  can   be decomposed into a special solution  and an arbitrary  homogeneous solution $F_{0,k}$, i.e. a solution of the homogeneous Maxwell equations (\pref{eqn:cea m eq for ji}) for $j_k=0$:
\begin{align}\label{eqn:forcedeco}
 F_k= F_{0,k}+\frac{1}{2}\left(F[z_k]_{+}+F[z_k]_{-}\right).
\end{align}
The special solutions $F_{+}[z_k], F_{-}[z_k]$ are the well known advanced and retarded Li\'enard-Wiechert fields of the $k$-th world line \cite{barut_electrodynamics_1980,rohrlich_classical_1994} given by
\begin{align}\label{eqn:LW fields}
F^{\mu\nu}_{\pm}:=\partial^\mu A^\nu_{\pm}-\partial^\nu A^\mu_{\pm}, && A[z_i]^\mu_{\pm}(x):=e\frac{\dot z_i^\mu(\tau^\pm)}{(x-z_i(\tau^\pm))_\nu\dot z_i^\nu(\tau^\pm)}, && z^0_i(\tau^\pm)=x^0\pm\|\vect x-\vect z_i(\tau^\pm)\|, 
\end{align}
where we use the notation $x=(x^0,\vect x)$ for $x\in\bb M$. The square brackets emphasize that these fields are functionals of the world line $z_k$; note that $\tau^+, \tau^-$ in (\pref{eqn:LW fields}) are implicitly defined. Now (\pref{eqn:LW fields}) shows that the $F[z_k]_+(x), F[z_k]_-(x)$ become infinite at $x\in\{z_k(\tau)\;|\;\tau\in\bb R\}$. But it is exactly there where the $i=k$ summand in the Lorentz force (\pref{eqn:cea ml+si lornetz force}) needs to be evaluated. This divergence persists also in quantum field theories where it is referred to as \emph{UV divergence}.

The simplest modification which avoids singular fields is  $\text{ML}_\varrho$, suggested by Abraham and Lorentz.  It is ML but with the point-charges replaced by extended charges; cf. (\pref{eqn:maxwell equations}),(\pref{eqn:lorentz force}). However, a rigid extension of the charge is for physical reasons unwanted \cite{frenkel_zur_1925}, and furthermore (even if correctly Lorentz-boosted) in violation with relativity \cite{nodvik_covariant_1964,spohn_dynamics_2004}.

%
The most familiar attempt to achieve a  relativistic point-charge electromagnetism without singularities is  the mass renormalization program of Dirac \cite{dirac_classical_1938}. In essence it is  a  point-charge limit procedure of  $\text{ML}_\varrho$.  Dirac replaces the  Lorentz equations (in a more or less ad hoc manner) by
\begin{align}\label{eqn:cea effective lorentz}
  m_{i,\mathrm{exp}}\ddot z^\mu_i(\tau)=e_i\left[\sum_{k\neq i}\left(F_{0,k}+F[z_k]_{-}\right)+\frac{1}{2}\left(F[z_i]_{-}-F[z_i]_{+}\right)\right]^{\mu\nu}(z_i(\tau))\;\dot z_{i,\nu}(\tau).
\end{align}
These equations ($1\le i \le N$) are called the Lorentz-Dirac equations (LD). Note that the mass appearing on the left-hand side  is the so called experimental  mass $  m_{i,\mathrm{exp}}$ (see below).  According to LD the $i$-th particle feels a Lorentz force due to an autonomous free field $F_{0,k}$ and due to the retarded fields $F[z_k]_{-}$ of all other charges. Furthermore, it feels the force term $\frac{1}{2}\left(F[z_i]_{-}-F[z_i]_{+}\right)$ which was interpreted by Dirac as the radiation field produced by the charge itself and is to be held responsible for  radiation damping. He computed its principal value:
\begin{align}\label{eqn:cea radiation damping}
  \frac{1}{2}\left(F[z_i]^{\mu\nu}_{-}-F[z_i]^{\mu\nu}_{+}\right)(z_i(\tau))=\frac{2}{3}e_i^2\left(\dddot z_i^\mu(\tau)\dot z_i^\nu(\tau)- \dddot z_i^\nu(\tau)\dot z_i^\mu(\tau)\right).
\end{align}
The third derivative is supposed to describe friction, hence the  name \emph{radiation damping} equation.  
 Dirac's limit procedure can be reinterpreted as a renormalization procedure  in which  the so called  bare masses $m_i$ approach $-\infty$, thereby subtracting the singular behavior  of 
the fields $\frac{1}{2}\left(F_{i,-}+F_{i,+}\right)$ when the charge extension goes to zero as well as yielding the observed experimental mass  $m_{i,\mathrm{exp}}$.  In this respect it may be worth noting that in \cite{bauer_maxwell-lorentz_2001} it was observed that a negative bare mass $m_i$ causes the dynamics of  $\text{ML}_\varrho$ to become unstable. In any case it is well known that  LD  has unphysical solutions. Already for $N=1$ and zero homogeneous fields all solutions except $\ddot z_i=0$  show run-away behavior, i.e. they approach the speed of light exponentially fast. For a detailed analysis of the LD equation see \cite{spohn_dynamics_2004}. MLD is the  theory defined by the Maxwell equations (\pref{eqn:cea m eq for ji}) coupled to (\pref{eqn:cea effective lorentz}) via (\pref{current}).

The main aim of our research, of which we present results in this and the follow up paper \cite{bauer_wheeler_2010}, is in fact  the description of electromagnetic phenomena without self-field divergences. That is why we focus on another formulation of electrodynamics without  self-fields which is suggested by the Wheeler-Feynman electromagnetism \cite{wheeler_interaction_1945} and discussed in \cite{deckert_electrodynamic_2010}. 
The basic idea is that fields are only mediators of  interaction {\em between} charges. 
We consider the  Maxwell equations (\pref{eqn:cea m eq for ji})  but we replace the  Lorentz force law by
\begin{align}\label{eqn:cea ml-si lornetz force}
  m_i\ddot z^\mu_i(\tau)=e_i \sum_{k\neq i} F^{\mu\nu}_k(z_i(\tau))\;\dot z_{i,\nu}(\tau).
\end{align}
Note that in contrast to (\pref{eqn:cea ml+si lornetz force}) the self-field  summand $k=i$ is excluded.  We refer to this theory as ML-SI (Maxwell-Lorentz without Self Interaction). To connect this theory with MLD  we appeal to the observation done by Wheeler and Feynman, namely that to
any solution of ML-SI satisfying the  extra constraint 
\begin{align}\label{eqn:absorber assumption}
\text{AC:} \,\,\,\,\,\,\,\,\,\,  \sum_{i=1}^N \left(F[z_i]_{-}-F[z_i]_{+}\right)(x)= 0, \quad \forall x\in\bb M,    
\end{align}
belong world lines of the charges which satisfy the LD equation.  One sees this by  trivial manipulations of terms. 
Based on Dirac's interpretation of the term $\frac{1}{2}(F[z_i]_{-}-F[z_i]_{+})$ as the field radiated by the $i$-th charge, (\pref{eqn:absorber assumption}) states that the net radiation field is completely absorbed from which the name  \emph{complete absorption condition} (AC) is derived.  Wheeler and Feynman think of  this condition as being satisfied  for a thermal equilibrium distribution of a large number of charges and discuss it thoroughly in \cite{wheeler_classical_1949,wheeler_interaction_1945}.
Stretching notations somewhat we may rephrase the above in a formal way by
$$\text{ML-SI }\cap\text{AC}=\text{MLD}\cap\text{AC},$$
where we understand the symbols now as sets of solutions: $\text{ML-SI }\cap\text{AC}$ (resp. $\text{MLD}\cap\text{AC}$) is the set of solutions of ML-SI (resp. MLD) which 
fulfill AC.

An important feature of ML-SI is that it is very close to WF, the Wheeler-Feynman electromagnetism:
WF contains no fields at all, only charges and is defined by 
\begin{align}\label{eqn:WF eqn}
   m_i\ddot z^\mu_i(\tau)=e_i \sum_{k=1}^N \frac{1}{2}\left(F[z_k]_++F[z_k]_-\right)^{\mu\nu}(z_i(\tau))\;\dot z_{i,\nu}(\tau)
\end{align}
where $F[z_k]_+$ and $F[z_k]_-$ are given by (\pref{eqn:LW fields}). Due to the implicit definition of $\tau^+,\tau^-$ in (\pref{eqn:LW fields}) these equations involve advanced and retarded arguments and they belong mathematically  to the class of neutral differential equations with unbounded delay.  The connection between WF and ML-SI becomes manifest when we consider the case for which the homogeneous fields vanish: $F_{0,k}\equiv0, 1\leq k\leq N$. Let us refer to this restricted theory as ML-SI$\setminus\{F_0\equiv 0\}$.
In view of (\pref{eqn:forcedeco}) and (\pref{eqn:cea ml-si lornetz force})  the world lines appearing in the solutions of ML-SI$\setminus\{F_0\equiv 0\}$ are WF world lines, i.e. they fulfill (\pref{eqn:WF eqn}). In short:
\begin{align}\label{eqn:crucial}
\text{WF}=\big\{\text{world lines of ML-SI}\setminus\{F_0\equiv0\} \big\}.
\end{align}\ce
It is important that the reader appreciates the difference between ML-SI and WF. There is no solution theory of WF, since the equations contain time-like advanced and retarded arguments. The problem of existence of dynamics of WF is in fact famously difficult, since 
it is unclear how to even start a theory of solutions.
On the other hand, ML-SI is mathematically an initial value problem and at least the notion of local existence and uniqueness 
of solutions is clear.  Now it seems that 
with (\pref{eqn:crucial}) all is clear, because WF is simply ML-SI with the homogeneous fields being zero. But there  is a catch: 
One has no idea for which \emph{ initial} fields it is the case that  the homogeneous field fulfills $\sum_{k=1}^N F_{0,k}\equiv0,$ or equivalently $F_k\equiv\frac{1}{2}\left(F[z_k]_++F[z_k]_-\right)$. 
In other words we do not know the initial conditions which define ML-SI/($\text{F}_0\equiv 0$).

Nevertheless (\pref{eqn:crucial}) allows us to get a handle on the question of existence of solutions of WF which we present
in the follow up paper \cite{bauer_wheeler_2010}. However, to be able to apply (\pref{eqn:crucial}) to the WF problem we must  be sure that the Li\'enard-Wiechert fields (\pref{eqn:LW fields}) generated by WF world lines are 
within the class of fields of ML-SI/($\text{F}_0\equiv 0$).  Now some solutions of WF are known, namely the so called Schild solutions \cite{schild_electromagnetic_1963} which describe charges rotating around each other on stable orbits.
Such world lines
with non-vanishing acceleration for large times generate Li\'enard-Wiechert fields (\pref{eqn:LW fields}) that are in general not square integrable.
We must therefore prove
 a general existence of dynamics result
for ML-SI where we allow  fields which are not square integrable.

Now that we have explained the role of the theory ML-SI, which under the condition AC (cf. (\pref{eqn:absorber assumption})) describes the observed radiation phenomena, we must  step back. ML-SI avoids the singular self-fields which make the dynamics ill-defined from the start.
But that does not mean that ML-SI allows the existence of  {\em global}  solutions for \emph{all} initial conditions. In fact, to establish existence of global solutions
 some notion of \emph{typical}  initial conditions must be invoked, since ML-SI (for opposite charges) is very analogous to masses interacting via gravitation, hence scenarios like explosions may be possible; see for example \cite{siegel_lectures_1971}. However, such considerations are  at this early stage of research not 
in our focus and, for simplicity, we  consider $\text{ML-SI}_\varrho$, and $\text{WF}_\varrho$. i.e. the theories with extended charges where singularities do not even occur when charges pass through each other.

We establish here the global existence and uniqueness
of $\text{ML-SI}_\varrho$ and by the same token that  of $\text{ML}_\varrho$.
  The charge density $\varrho$ we consider is  rigid. Global existence and uniqueness of solutions of $\text{ML}_\varrho$ for  one particle and  square integrable initial fields, has been settled by two different techniques: While in \cite{komech_longtime_2000} one exploits the energy conservation to gain an priori bound needed for global existence, a Gr\"onwall argument was used in \cite{bauer_maxwell-lorentz_2001}.  Recent results are on the long-time behavior of solutions in \cite{komech_longtime_2000} and  \cite{imaikin_soliton-type_2002}   and on conservation laws in \cite{kiessling_electromagnetic_2004}. Furthermore, a generalization to a spinning, extended charge was treated in \cite{appel_mass_2001}.

\section{Our Results}

For the mathematical analysis it is convenient to express $\text{ML-SI}_\varrho$ and $\text{ML}_\varrho$ in non-relativistic notation using coordinates. The electric and magnetic field of each charge are defined by $\vect E_{i,t}:=(F^{0i}(t,\cdot))_{1\leq i\leq 3}$, $\vect B_{i,t}:=(F_i^{23}(t,\cdot), F_i^{31}(t,\cdot),F_i^{12}(t,\cdot))$, respectively. The defining equations are
\begin{align}\label{eqn:maxwell equations}
  \begin{split}
    \partial_t\vect E_{i,t} &= \nabla\wedge \vect B_{i,t} - 4\pi \vect v(\vect p_{i,t})\varrho_i(\cdot-\vect q_{i,t})\\
    \partial_t\vect B_{i,t} &= -\nabla\wedge \vect E_{i,t}
  \end{split}
  \begin{split}
    \nabla\cdot \vect E_{i,t} &= 4\pi \varrho_i(\cdot-\vect q_{t,i})\\
    \nabla\cdot\vect B_{i,t}&=0.
  \end{split}
\end{align}
together with
\begin{align}\label{eqn:lorentz force}
  \begin{split}
    \partial_t\vect q_{i,t} &= \vect v(\vect p_{i,t}) := \frac{\sigma_i\vect p_{i,t}}{\sqrt{m_i^2+\vect p_{i,t}^2}}\\
    \partial_t\vect p_{i,t} &= \sum_{j=1}^Ne_{ij}\intdv x\varrho_i(\vect x-\vect q_{i,t})\left[ \vect E_{j,t}(\vect x) + \vect v_{i,t} \wedge \vect B_{j,t}(\vect x) \right]
  \end{split}
\end{align}
for $1\leq i\leq N$. The equations in the right-hand column of (\pref{eqn:maxwell equations}) are also called the \emph{Maxwell constraints}. We denote the partial derivative with respect to time $t$ by $\partial_t$, the divergence by $\nabla\cdot$ and the curl by $\nabla\wedge$. Vectors in $\bb R^3$ are written as bold letters, e.g. $\vect x\in\bb R^3$. At time $t$ the $i$-th charge is situated at position $\vect q_{i,t}$ in euclidean space $\bb R^3$ and has momentum $\vect p_{i,t}\in\bb R^3$. It carries the classical mass $m_i\in\bb R\setminus\{0\}$. The geometry of the $i$-th rigid charge is given in terms of a charge distribution (or form factor) $\varrho_i:\bb R^3\to\bb R$ which is assumed to be an infinitely often differentiable function with compact support, denoted by $\varrho_i\in\cal C^\infty_c(\bb R^3,\bb R)$. The factors $\sigma_i:=\sign(m_i)$ denote the sign of the masses (negative masses are useful to analyze dynamical instability when taking the point-particle limit to MLD). Each charge is associated with an own electric and magnetic field $\vect E_{i,t}$ and $\vect B_{i,t}$, which are $\bb R^3$ valued functions on $\bb R^3$. Whereas in the classical literature one usually considers only one electric and magnetic field, we have given every charge its own field to allow exclusion of self-fields: The matrix coefficients $e_{ij}\in\bb R$ for $1\leq i,j\leq N$ allow to switch on or off the coupling of the $j$-th field to the $i$-th particle. This yields
\begin{align}\label{eqn:ML+SI}
\text{ML}_\varrho \quad \text{for} \qquad  e_{ij}=1, \hskip1cm 1\leq i,j\leq N.
\end{align}
and
\begin{align}\label{eqn:ML-SI}
 \text{ML-SI}_\varrho\quad  \text{for}\qquad  e_{ij}=\left\{\begin{matrix}
    1 & \text{for }i\neq j\\
    0 & \text{otherwise}
  \end{matrix}\right.,\hskip1cm 1\leq i,j\leq N.
\end{align}
 For $\varrho_i=\delta^{(3)}$ the corresponding system of equations formally define ML and ML-SI, respectively. 

The existence and uniqueness theory build in the following will neither depend on a particular choice of the coupling matrix $e_{ij}$, nor on the masses $m_i$, nor on a particular choice of the charge distributions $\varrho_i\in\cal C^\infty_c(\bb R^3,\bb R)$. For notational simplicity we shall now denote - in slight abuse of notation -  the theory for any choices of the coupling matrix $e_{ij}$ simply by  $\text{ML}_\varrho$.

We intend to arrive at a well-posed initial value problem for given positions and momenta $\vect p_i^0,\vect q_i^0\in\bb R^3$ as well as electric and magnetic fields $\vect E^0_i,\vect B^0_i:\bb R^3\to\bb R^3$ at time $t_0\in\bb R$ for which we define the function space for the fields.
As we remarked in the introduction we wish to incorporate  Li\'enard-Wiechert fields produced by any time-like charge trajectory with uniformly bounded acceleration and momentum as Cauchy data (e.g. consider the bounded orbits of the Schild solutions \cite{schild_electromagnetic_1963}). We shall show in the follow up paper \cite{bauer_wheeler_2010} that such fields decay as $O(\|\vect x\|^{-1})$ for $\|\vect x\|\to\infty$. Hence, in general these fields are not in $L^2(\bb R^3,\bb R^3)$ and that is why we establish the initial value problem for a bigger class of fields:

\begin{definition}[Weighted Square Integrable Functions]\label{def:weighted spaces}
  We define the class of weight functions
  \begin{align}\label{eqn:weightclass}
    \cal W:=\Big\{w\in\cal C^\infty(\bb R^3,\bb R^+\setminus\{0\}) \;\big|\;& \exists\; \namel{cw}{C_w}\in\bb R^+,\namel{pw}{P_w}\in\bb N: w(\vect x+\vect y)\leq (1+\namer{cw}\|\vect y\|)^\namer{pw} w(\vect x)\Big\}.
  \end{align}
   For any $w\in\cal W$ and open $\Omega\subseteq\bb R^3$ we define the space of weighted square integrable functions $\Omega\to\bb R^3$ by
  \[
    L^2_w(\Omega,\bb R):=\left\{\vect F:\Omega\to\bb R^3\;\text{measurable} \;\bigg|\; \intdv x w(\vect x)\|\vect F(\vect x)\|^2<\infty\right\}.
  \]
  For regularity arguments we need more conditions on the weight functions. For $k\in\bb N$ we define
  \begin{align}\label{eqn:weight function spaces}
    \cal W^k:=\Big\{w\in\cal W \;\big|\; \exists\; \namel{calpha}{C_\alpha}\in\bb R^+: |D^\alpha \sqrt w|\leq \namer{calpha}\sqrt w, |\alpha|\leq k\Big\}
  \end{align}
  and
  \[
    \cal W^\infty:=\{w\in\cal W\;|\;w\in\cal W^k\;\forall\;k\in\bb N\}.
  \]
\end{definition}
The choice of $\cal W$ is quite natural (compare \cite{hrmander_analysis_2005}) as for most estimates it allows to treat the new measure $w(\vect x)d^3x$ almost as if it were translational invariant. Clearly, $w=1$ is in $\cal W$. Applying its definition twice we obtain for all $w\in\cal W$ the estimate
\begin{align}\label{eqn:weight_relation}
  (1+\namer{cw}\|\vect y\|)^{-\namer{pw}}w(\vect x)\leq w(\vect x+\vect y)\leq (1+\namer{cw}\|\vect y\|)^\namer{pw} w(\vect x)
\end{align}
which states that $w\in\cal W\Leftrightarrow w^{-1}\in\cal W$. In particular, the weight $w(\vect x)=(1+\|\vect x\|^2)^{-1}$ is in $\cal W$ because
\begin{align}\label{eqn:w is in W}
  w^{-1}(\vect x+\vect y):=1+\|\vect x+\vect y\|^2\leq 1+(\|\vect x\|+\|\vect y\|)^2\leq (1+\|\vect x\|^2)(1+\|\vect y\|)^2,
\end{align}
and therefore the desired Li\'enard-Wiechert fields are in $L^2_w(\bb R^3,\bb R^3)$ for $w(\vect x):=(1+\|\vect x\|^2)^{-1}$. 
In the follow up paper \cite{bauer_wheeler_2010} it is shown that $w\in\cal W^\infty$. With this we can define the space of initial values:
\begin{definition}[Phase Space]\label{def:phasespace}
  We define
  \[
    \cal H_w:=\oplus_{i=1}^N \left(\bb R^3\oplus\bb R^3\oplus L^2_w(\bb R^3,\bb R^3) \oplus L^2_w(\bb R^3,\bb R^3)\right).
  \]
  Any element $\varphi\in\cal H_w$ consists of the components $\varphi=(\vect q_i,\vect p_i,\vect E_i,\vect B_i)_{1\leq i\leq N}$, i.e. positions $\vect q_i$, momenta $\vect p_i$ and electric and magnetic fields $\vect E_i$ and $\vect B_i$ for each of the $1\leq i\leq N$ charges.
\end{definition}
If not noted otherwise, any spatial derivative will be understood in the distribution sense, and the Latin indices $i,j,\ldots$ run over the charge labels $1,2,\ldots, N$. We also need the weighted Sobolev spaces
\begin{align*}
H^{curl}_w(\bb R^3,\bb R^3)&:=\{\vect F\in L^2_w(\bb R^3,\bb R^3)\;|\;\nabla\wedge \vect F\in L^2_w(\bb R^3,\bb R^3)\},\\
H^k_w(\bb R^3,\bb R^3)&:=\{\vect F\in L^2_w(\bb R^3,\bb R^3)\;|\;D^{\alpha}\vect F\in L^2_w(\bb R^3,\bb R^3)\;\forall\;|\alpha|\leq k\}
\end{align*}
 for any $k\in\bb N$. We will rewrite  $\text{ML}_\varrho$ using the following operators $A$ and $J$:
\begin{definition}[Operator A]\label{def:operator_A}
  For a $\varphi=(\vect q_i,\vect p_i,\vect E_i,\vect B_i)_{1\leq i\leq N}$ we defined ${\mathtt A}$ and $A$ by the expression
  \[
      A\varphi = \Big(0,0,{\mathtt A}(\vect E_i,\vect B_i)\Big)_{1\leq i\leq N}
       :=\Big(0,0,-\nabla\wedge\vect E_i,\nabla\wedge\vect B_i)\Big)_{1\leq i\leq N}.
  \]
  on their natural domain
    \[
    D_w(A):=\oplus_{i=1}^N \left(\bb R^3 \oplus \bb R^3 \oplus H^{curl}_w(\bb R^3,\bb R^3) \oplus H^{curl}_w(\bb R^3,\bb R^3)\right)\subset \cal H_w.
  \]
  Furthermore, for any $n\in\bb N$ we define
  \begin{align*}
    D_w(A^n):=\big\{\varphi \in D_w(A)\;\big|\;A^k\varphi\in D_w(A),\; 0\leq k\leq n-1\big\}, && D_w(A^\infty):=\cap_{n=0}^\infty D_w(A^n).
  \end{align*}
\end{definition}
\begin{definition}[Operator J]\label{def:operator_J}
   Together with $\vect v(\vect p_i):=\frac{\sigma_i\vect p_i}{\sqrt{\vect p_i^2+m^2_i}}$ we define $J:\cal H_w\to D_w(A^\infty)$ by
  \[
    \varphi\mapsto J(\varphi) := \left(\vect v(\vect p_i),
      \sum_{j=1}^N e_{ij}\intdv x \varrho_i(\vect x-\vect q_{i})\left( \vect E_{j}(\vect x) + \vect v(\vect p_i) \wedge \vect B_{j}(\vect x) \right),
      - 4\pi \vect v(\vect p_i) \varrho_i(\cdot-\vect q_{i}),
      0\right)_{1\leq i\leq N}
  \]
  for $\varphi=(\vect q_i,\vect p_i,\vect E_i,\vect B_i)_{1\leq i\leq N}$.
\end{definition}
Note that $J$ is well-defined because $\varrho_i\in\cal C^\infty_c(\bb R^3,\bb R)$. With these definitions, the Lorentz force law (\pref{eqn:lorentz force}), the Maxwell equations (\pref{eqn:maxwell equations}), while temporarily neglecting the Maxwell constraints, take the form
\begin{align}\label{eqn:dynamic_maxwell}
   \dot\varphi_t = A\varphi_t + J(\varphi_t).
\end{align}
In the following we frequently use the notation $C\in\bounds$ to denote that $C$ is a continuous mapping depending non-decreasingly on all of its arguments. The two main theorems are:
\begin{theorem}[Global Existence and Uniqueness]\label{thm:globalexistenceanduniqueness}
   For $w\in\cal W^1$, $n\in\bb N$ and $\varphi^0\in D_w(A^n)$ the following holds:
  \begin{enumerate}[(i)]
    \item \emph{(global existence)} There exists an $n$-times continuously differentiable mapping
          \begin{align*}
            \varphi_{(\cdot)}:\bb R \to \cal H_w, &&
            t\mapsto\varphi_{t}=(\vect q_{i,t},\vect p_{i,t},\vect E_{i,t},\vect B_{i,t})_{1\leq i\leq N}
          \end{align*}
           which solves (\pref{eqn:dynamic_maxwell})
          for initial value $\varphi_t|_{t=0}=\varphi^0$. Furthermore, it holds  $\frac{d^j}{dt^j}\varphi_t\in D_w(A^{n-j})$ for all $t\in\bb R$ and $0\leq j\leq n$,
    \item \emph{(uniqueness and growth)}  If any once continuously differentiable function $\widetilde\varphi:\Lambda\to D_w(A)$ for some open interval $\Lambda\subseteq\bb R$ is also a solution to (\pref{eqn:dynamic_maxwell}) with $\widetilde\varphi_{t^*}=\varphi_{t^*}$ for some $t^*\in\Lambda$, then $\varphi_t=\widetilde \varphi_t$ holds for all $t\in\Lambda$. In particular, given $\varrho_i$, $1\leq i\leq N$, there exists $\constl{apriori lipschitz}\in\bounds$ such that for all $T> 0$ such that $(-T,T)\subseteq\Lambda$
        \begin{align}\label{eqn:apriori lipschitz}
          \sup_{t\in[-T,T]}\|\varphi_t-\widetilde\varphi_t\|_{\cal H_w}\leq \constr{apriori lipschitz}(T,\|\varphi_{t_0}\|_{\cal H_w},\|\widetilde\varphi_{t_0}\|_{\cal H_w})\|\varphi_{t_0}-\widetilde\varphi_{t_0}\|_{\cal H_w}
        \end{align}
        holds.
        Furthermore, there is a $\constl{apriori ml rho}\in\bounds$ such that for all $\varrho_i$, $1\leq i\leq N$, and $T> 0$ with $(-T,T)\subseteq\Lambda$ one has
        \begin{align}\label{eqn:apriori lipschitz no diff}
          \sup_{t\in[-T,T]}\|\varphi_t\|_{\cal H_w} \leq \constr{apriori ml rho}\left(T,\|w^{-1/2}\varrho_i\|_{L^2},\|\varrho_i\|_{L^2_w}; 1\leq i\leq N\right)\; \|\varphi^0\|_{\cal H_w}.
        \end{align}
    \item \emph{(constraints)} If the solution $t\mapsto\varphi_{t}=(\vect q_{i,t},\vect p_{i,t},\vect E_{i,t},\vect B_{i,t})_{1\leq i\leq N}$ obeys the Maxwell constraints
        \begin{align}\label{eqn:ml constraints}
          \nabla\cdot \vect E_{i,t}=4\pi\varrho(\cdot-\vect q_{i,t}), && \nabla\cdot\vect B_{i,t}=0
        \end{align}
        for one time instant $t\in\bb R$, then they are obeyed for all times $t\in\bb R$.
  \end{enumerate}
\end{theorem}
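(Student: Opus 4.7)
I would treat (\pref{eqn:dynamic_maxwell}) as a semilinear Cauchy problem in the Banach space $\cal H_w$ and follow the classical Picard template: (a) show that $A$ generates a strongly continuous group on $\cal H_w$; (b) show that $J$ is Lipschitz on bounded sets of $\cal H_w$; (c) build a local mild solution by Banach fixed point; (d) extend it globally via an a priori bound; (e) upgrade regularity by induction in $n$; (f) verify that the Maxwell constraints are conserved. Step (a) is where the hypothesis $w\in\cal W^1$ is essential. Using the local Maxwell identity $\partial_t(|\vect E|^2+|\vect B|^2)=-2\nabla\cdot(\vect E\wedge\vect B)$ and integrating by parts against the weight $w$ gives
\begin{equation*}
\tfrac{d}{dt}\|(\vect E,\vect B)\|_{L^2_w}^2=2\int d^3x\,\nabla w(\vect x)\cdot\bigl(\vect E(\vect x)\wedge\vect B(\vect x)\bigr);
\end{equation*}
the $\cal W^1$ condition translates into $|\nabla w|\leq 2 C_1 w$, so the right-hand side is dominated by $2 C_1\|(\vect E,\vect B)\|_{L^2_w}^2$. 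A density argument on $D_w(A)$ then identifies $A$ as the generator of a $C^0$-group $\{U_t\}_{t\in\bb R}$ on $\cal H_w$ with $\|U_t\|\leq e^{C_1|t|}$, and (\pref{eqn:dynamic_maxwell}) becomes the mild equation $\varphi_t=U_{t-t_0}\varphi^0+\int_{t_0}^t U_{t-s}J(\varphi_s)\,ds$.

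For step (b) the position-momentum block of $J$ is smooth with $|\vect v(\vect p)|<1$ and bounded derivatives. The Lorentz force and current terms are controlled by Cauchy--Schwarz combined with a weight shift:
\begin{equation*}
\left|\int d^3x\,\varrho_i(\vect x-\vect q_i)\vect E_j(\vect x)\right|\leq\|w^{-1/2}\varrho_i(\cdot-\vect q_i)\|_{L^2}\,\|\vect E_j\|_{L^2_w},
\end{equation*}
and (\pref{eqn:weight_relation}) gives $\|w^{-1/2}\varrho_i(\cdot-\vect q_i)\|_{L^2}\leq(1+C_w\|\vect q_i\|)^{P_w/2}\|w^{-1/2}\varrho_i\|_{L^2}$; smoothness of $\varrho_i$ yields the corresponding Lipschitz estimate in $\vect q_i$, and the current $\vect v(\vect p_i)\varrho_i(\cdot-\vect q_i)$ has $L^2_w$-norm controlled by $(1+C_w\|\vect q_i\|)^{P_w/2}\|\varrho_i\|_{L^2_w}$. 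Putting these together, $J$ is Lipschitz on every ball of $\cal H_w$ with a constant of class $\bounds$ in the ball radius, $\|w^{-1/2}\varrho_i\|_{L^2}$, and $\|\varrho_i\|_{L^2_w}$. Banach fixed point on $C([t_0-T_*,t_0+T_*],\cal H_w)$ then produces a unique local mild solution, which for $\varphi^0\in D_w(A)$ is a classical $C^1$ solution by standard semigroup regularity.

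The main obstacle is step (d): preventing finite-time blow-up in the weighted norm, since ordinary energy conservation is not available for $\text{ML-SI}_\varrho$. Inserting the bounds from (b) into the mild formulation gives an integral inequality
\begin{equation*}
\|\varphi_t\|_{\cal H_w}\leq e^{C_1|t-t_0|}\|\varphi^0\|_{\cal H_w}+\int_{t_0}^t e^{C_1|t-s|}\,P(\|\varphi_s\|_{\cal H_w})\bigl(1+\|\varphi_s\|_{\cal H_w}\bigr)\,ds,
\end{equation*}
where $P$ is polynomial with coefficients depending on $\|w^{\pm1/2}\varrho_i\|_{L^2}$ and $\|\varrho_i\|_{L^2_w}$. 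The decisive observation is that $|\vect v(\vect p)|<1$, so the positions $\|\vect q_{i,t}\|$ grow at most linearly in $t$; this turns the weight-shift factors $(1+C_w\|\vect q_i\|)^{P_w/2}$ into polynomial-in-$t$ coefficients, while the remainder of $J$ is essentially linear in $\varphi$. Grönwall's inequality then yields the a priori bound (\pref{eqn:apriori lipschitz no diff}) on every compact interval, producing a solution for all $t\in\bb R$. Applying the same Grönwall scheme to the difference of two solutions, with the Lipschitz constant of $J$ in place of its norm, gives (\pref{eqn:apriori lipschitz}) and uniqueness in (ii).

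The higher regularity in (i) is then proved by induction on $n$: since $A$ commutes with itself and $J$ takes values in $D_w(A^\infty)$ by construction, if $\varphi^0\in D_w(A^n)$ then $A\varphi_t$ solves a mild equation with datum $A\varphi^0\in D_w(A^{n-1})$ and a $D_w(A^{n-1})$-valued source, so the induction closes and yields $\partial_t^j\varphi_t\in D_w(A^{n-j})$ for $0\leq j\leq n$. For the constraints in (iii), set $\rho_i(t,\cdot):=\nabla\cdot\vect E_{i,t}-4\pi\varrho_i(\cdot-\vect q_{i,t})$ and $\sigma_i(t,\cdot):=\nabla\cdot\vect B_{i,t}$. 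Using (\pref{eqn:maxwell equations}), the identity $\nabla\cdot(\nabla\wedge\cdot)=0$, and the chain rule $\partial_t\varrho_i(\cdot-\vect q_{i,t})=-\vect v(\vect p_{i,t})\cdot(\nabla\varrho_i)(\cdot-\vect q_{i,t})=-\nabla\cdot[\vect v(\vect p_{i,t})\varrho_i(\cdot-\vect q_{i,t})]$, one computes $\partial_t\rho_i=0$ and $\partial_t\sigma_i=0$, so both functions vanish identically once they vanish at a single time. Overall, I expect the a priori bound in step (d) to be the main technical hurdle, because the weighted framework forces one to track polynomial-in-position factors at every stage of every estimate without producing super-exponential growth.
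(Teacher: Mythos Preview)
Your plan is correct and tracks the paper's proof closely: the paper also casts (\ref{eqn:dynamic_maxwell}) as an abstract semilinear Cauchy problem, verifies that $A$ generates a $\gamma$-contractive group (Lemma~\ref{lem:operatorA}), that $J$ is locally Lipschitz with exactly the weight-shift estimates you write down (Lemma~\ref{lem:operatorJ}), obtains the a priori bound via the same speed-of-light observation $\|\vect q_{i,t}\|\le\|\vect q_i^0\|+|t|$ followed by Gr\"onwall (Lemma~\ref{lem:a priori}), derives (\ref{eqn:apriori lipschitz}) by a second Gr\"onwall on differences, and handles the constraints precisely as you do.

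The one place where your sketch is genuinely thinner than the paper is step (a). Your energy identity yields the dissipativity-type bound $\tfrac{d}{dt}\|u\|^2\le 2\gamma\|u\|^2$, which gives the \emph{growth estimate} for the group once it exists, but not its \emph{existence}; for Lumer--Phillips or Hille--Yosida one still needs a range condition on $\lambda-A$, and ``a density argument on $D_w(A)$'' does not supply it. The paper does real work here: it proves $\Ker(\lambda-\mathtt A)=\{0\}$ by a Fourier-transform argument on $\cal S^*\oplus\cal S^*$ (explicitly inverting the symbol), obtains density of $\Ran(\lambda-\mathtt A)$ by duality, and then closedness from the resolvent bound---none of which is automatic in the weighted setting where Plancherel is unavailable. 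A minor organizational difference in (e): rather than bootstrapping a posteriori as you propose, the paper bakes the $D_w(A^n)$-regularity into the fixed-point space $X_{T,n}=\{\varphi:\ A^j\varphi_t\in\cal C^0,\ j\le n\}$ and then proves the formula $\tfrac{d^k}{dt^k}\varphi_t=A^k\varphi_t+\sum_{l=0}^{k-1}\tfrac{d^{k-1-l}}{dt^{k-1-l}}A^lJ(\varphi_t)$ inductively; your route is equally valid.
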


\begin{theorem}[Regularity]\label{thm:regularity}
  Let $w\in\cal W^2$, $n=2m$ for $m\in\bb N$ and let $t\mapsto\varphi_t=(\vect q_{i,t},\vect p_{i,t},\vect E_{i,t},\vect B_{i,t})_{1\leq i\leq N}$ be the solution to (\pref{eqn:dynamic_maxwell}) for initial value $\varphi_t|_{t=0}=\varphi^0\in D_w(A^n)$. Then for all $1\leq i\leq N$:
  \begin{enumerate}[(i)]
    \item It holds for any $t\in\bb R$ that $\vect E_{i,t},\vect B_{i,t}\in H_w^{\triangle^m}$.
    \item The electromagnetic fields regarded as mappings $\vect E_i:(t,\vect x)\mapsto \vect E_{i,t}(\vect x)$ and $\vect B_i:(t,\vect x)\mapsto \vect B_{i,t}(\vect x)$ are in $L^2_{loc}(\bb R^4,\bb R^3)$ and both have a representative in $\cal C^{n-2}(\bb R^4,\bb R^3)$ within their equivalence class.
    \item For $w\in \cal W^k$, $k\geq 2$, and every $t\in\bb R$ we have also $\vect E_{i,t},\vect B_{i,t}\in H^n_w$ and $C<\infty$ such that:
          \begin{align}\label{eqn:fieldbound}
            \sup_{\vect x\in\bb R^3}\sum_{|\alpha|\leq k}\|D^\alpha\vect E_{i,t}(\vect x)\|\leq C\|\vect E_{i,t}\|_{H^k_w} && \text{and} && \sup_{\vect x\in\bb R^3}\sum_{|\alpha|\leq k}\|D^\alpha\vect B_{i,t}(\vect x)\|\leq C\|\vect B_{i,t}\|_{H^k_w}.
          \end{align}
  \end{enumerate}
\end{theorem}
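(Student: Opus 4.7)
The plan is to extract the regularity directly from Theorem \pref{thm:globalexistenceanduniqueness}, which produces a solution satisfying $\frac{d^j}{dt^j}\varphi_t\in D_w(A^{n-j})$ for all $t\in\bb R$ and $0\leq j\leq n$. The bridge between $A$ (curl-based) and the Laplacian-based norms in the conclusion is the pointwise identity $\nabla\wedge(\nabla\wedge\vect F)=\nabla(\nabla\cdot\vect F)-\Delta\vect F$, by which ${\mathtt A}^{2}$ coincides with $-\Delta$ on divergence-free fields. Since $\varphi_t\in D_w(A^{2m})$, iterating this identity $m$ times expresses $\Delta^m\vect E_{i,t}$ and $\Delta^m\vect B_{i,t}$ as the corresponding field-slots of $A^{2m}\varphi_t$ modulo iterated gradients of $\nabla\cdot\vect E_{i,t}$ and $\nabla\cdot\vect B_{i,t}$. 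Those divergences are controlled by differentiating the Maxwell equations in time, since $\partial_t(\nabla\cdot\vect E_{i,t})=-4\pi\nabla\cdot(\vect v(\vect p_{i,t})\varrho_i(\cdot-\vect q_{i,t}))$ and $\partial_t(\nabla\cdot\vect B_{i,t})=0$; they differ from the initial divergences only by $\cal C^\infty_c$ corrections, so their regularity is inherited from $\varphi^{0}$. This yields $\vect E_{i,t},\vect B_{i,t}\in H^{\triangle^m}_w$.

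\textbf{Part (ii).} Using $\dot\varphi_t=A\varphi_t+J(\varphi_t)$ inductively, for $0\leq j\leq n$ the derivative $\partial_t^j\vect E_{i,t}$ equals the field-slot of $A^{j}\varphi_t$ plus a finite sum of expressions of the form $A^{\ell}\bigl(\vect v(\vect p_{i,t})\varrho_i(\cdot-\vect q_{i,t})\bigr)$, each smooth and compactly supported in $\vect x$. Together with $A^{n-j}(\partial_t^j\varphi_t)\in\cal H_w$ this gives a continuous $t$-curve with values in $L^2_w$, and Fubini — using that $1/w$ is bounded on compacts — yields $\vect E_i,\vect B_i\in L^2_{loc}(\bb R^4,\bb R^3)$. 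For the $\cal C^{n-2}$ representative I would invoke the three-dimensional Sobolev embedding $H^2(\bb R^3)\hookrightarrow\cal C^0(\bb R^3)$, transferred to the weighted setting via the $\cal W^2$-property $|D^\alpha\sqrt w|\leq C_\alpha\sqrt w$: any two spatial derivatives are ``spent'' to produce pointwise continuity, while the remaining $n-2$ derivatives, distributed between $t$ and $\vect x$ with the equation used to swap time derivatives for curls whenever the spatial budget runs low, produce continuous mixed partials of total order $n-2$.

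\textbf{Part (iii) and main obstacle.} Under $w\in\cal W^k$ with $k\geq 2$, the bound $|D^\alpha\sqrt w|\leq C_\alpha\sqrt w$ lets conjugation by $\sqrt w$ convert differential operators on $L^2_w$ into operators on $L^2$ with bounded perturbations of order one less. This transports the standard elliptic bootstrap for $\Delta^m$ from $L^2$ to $L^2_w$, so $\Delta^m\vect F,\vect F\in L^2_w$ implies $\vect F\in H^{2m}_w=H^{n}_w$. The pointwise estimate (\pref{eqn:fieldbound}) follows from the weighted Sobolev embedding $H^k_w\hookrightarrow L^\infty$, which reduces through the same $\sqrt w$-conjugation to the classical three-dimensional embedding. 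The hardest part of the whole proof is precisely this bootstrap: at each iteration one must control the commutators between differential operators and multiplication by powers of $w$, and the conditions of Definition \pref{def:weighted spaces} are calibrated exactly to keep these commutators bounded — but the book-keeping has to ensure that no step spends more than the available bounded derivatives of $\sqrt w$.
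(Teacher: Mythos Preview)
Your overall strategy coincides with the paper's: the same curl--curl identity $(\nabla\wedge)^2=\nabla\nabla\cdot-\triangle$ drives part (i), and parts (ii)--(iii) are handled by the weighted Sobolev machinery (in the paper these are pre-packaged as Theorem~\pref{thm:Htrianglekw equals H2kw}, Lemma~\pref{lem:sqrtw Hkw is Hk}, and Theorem~\pref{lem:sobolev}, so the regularity proof just cites them rather than redoing the commutator book-keeping you describe).

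There is, however, a genuine gap in your part (i). You need $\triangle^{l-1}\nabla(\nabla\cdot\vect E_{i,t})\in L^2_w$ for $1\le l\le m$, i.e.\ up to $2m-1$ spatial derivatives of the divergence. Your time-evolution argument correctly shows $\nabla\cdot\vect E_{i,t}=\nabla\cdot\vect E_{i,0}+(\text{$\cal C^\infty_c$ correction})$, but then you assert that the regularity of $\nabla\cdot\vect E_{i,0}$ is ``inherited from $\varphi^0$''. It is not: the hypothesis $\varphi^0\in D_w(A^n)$ controls only iterated \emph{curls} of $\vect E_{i,0},\vect B_{i,0}$, and says nothing whatsoever about $\nabla\cdot\vect E_{i,0}$. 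A field can lie in $H^{curl}_w$ (even in $D_w(A^\infty)$) while having a divergence that is merely a distribution. The paper closes this gap by inserting the Maxwell constraints (\pref{eqn:ml constraints}) directly, which force $\nabla\cdot\vect E_{i,t}=4\pi\varrho_i(\cdot-\vect q_{i,t})\in\cal C^\infty_c$ and $\nabla\cdot\vect B_{i,t}=0$; with this, the correction terms are manifestly smooth and $(-\triangle)^l\vect E_{i,t}=(\nabla\wedge)^{2l}\vect E_{i,t}-4\pi(-1)^{l-1}\triangle^{l-1}\nabla\varrho_i(\cdot-\vect q_{i,t})\in L^2_w$. Your detour through the time-evolution is therefore both unnecessary and insufficient: you still need the constraints at $t=0$, and once you assume them, Theorem~\pref{thm:globalexistenceanduniqueness}(iii) propagates them to all $t$, making the integration-in-time step redundant.
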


The proofs will be given in sections \pref{sec:glob exist unique} and \pref{sec:reg} where we rely on tools for the study of $L^2_w(\bb R^3,\bb R^3)$ and the associated weighted Sobolev spaces discussed in Section \pref{sec:L2w} as well as on the abstract global existence and uniqueness theorem discussed in the appendix \pref{sec:abstractglobalexistandunique}. In the formulas we use ``$\ldots$'' to denote the right-hand side of the preceding equation or inequality.

Theorem \pref{thm:globalexistenceanduniqueness} permits us to define a time evolution operator:
\begin{definition}[Maxwell-Lorentz Time Evolution]\label{def:ML time evolution}
  We define the non-linear operator
  \begin{align*}
    M_L:\bb R^2\times D_w(A) \to D_w(A), &&
    (t,t_0,\varphi^0) \mapsto M_L(t,t_0)[\varphi^0]=\varphi_t=W_{t-t_0}\varphi^0 + \int_{t_0}^t W_{t-s} J(\varphi_s)
  \end{align*}
  which encodes  $\text{ML}_\varrho$ time evolution from time $t_0$ to time $t$.
\end{definition}


\subsection{The Spaces of Weighted Square Integrable Functions}\label{sec:L2w}
We now collect all needed properties of the introduced weighted function spaces. The following assertions, except Theorem \pref{lem:sobolev}, are independent of the space dimension. That is why we often use the abbreviation $L^2_w=L^2_w(\bb R^3,\bb R^3)$ and $\cal C^\infty_c=\cal C^\infty_c(\bb R^3,\bb R^3)$. With $w\in\cal W$ the $L^2_w$ analogues of almost all results of the $L^2$ theory which do not involve the Fourier transform can be proven with only minor modifications.  For open $\Omega\subseteq\bb R^3$, $L^2_w(\Omega,\bb R^3)$ is clearly a linear space and has an inner product:
\begin{align}\label{eqn:sclarproduct}
  \braket{\vect f,\vect g}_{L^2_w}:=\int_\Omega d^3x\;w(\vect x) \vect f(\vect x)\cdot \vect g(\vect x),\quad \vect f,\vect g\in L^2_w(\Omega,\bb R^3).
\end{align}
As a standard result since $(\Omega,\sqrt w d^3x)$ for $\Omega\subseteq\bb R^3$ is  a $\sigma$-finite measure space:
\begin{theorem}[Properties of $L^2_w$]\label{thm:L2w Hilbert space Cinftyc dense}
  For any $w\in\cal W$, open $\Omega\subseteq \bb R^3$, $L^2_w(\Omega,\bb R^3)$ with inner product (\pref{eqn:sclarproduct}) is a separable Hilbert space and $\cal C^\infty_c(\Omega,\bb R^3)$ lies dense.
\end{theorem}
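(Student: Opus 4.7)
The plan is to exploit the unitary equivalence $L^2_w(\Omega,\bb R^3) \cong L^2(\Omega,\bb R^3)$ induced by multiplication by $\sqrt{w}$. By Definition \pref{def:weighted spaces} every $w\in\cal W$ satisfies $w\in\cal C^\infty(\bb R^3,\bb R^+\setminus\{0\})$, so $\sqrt{w}$ and $1/\sqrt{w}$ are smooth, strictly positive functions on all of $\bb R^3$. This is the only non-trivial input; everything else is transport of the standard $L^2$ theory along this unitary map.

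First I would define $U:L^2_w(\Omega,\bb R^3)\to L^2(\Omega,\bb R^3)$ by $(U\vect f)(\vect x):=\sqrt{w(\vect x)}\,\vect f(\vect x)$. Measurability of $U\vect f$ is immediate from measurability of $\vect f$ and continuity of $\sqrt{w}$, and the substitution
\[
    \|U\vect f\|_{L^2}^2 = \int_\Omega d^3x\,w(\vect x)\,\|\vect f(\vect x)\|^2 = \|\vect f\|^2_{L^2_w}
\]
shows that $U$ is a linear isometry into $L^2(\Omega,\bb R^3)$. Its algebraic inverse $\vect g\mapsto \vect g/\sqrt{w}$ is well-defined because $1/\sqrt{w}$ is continuous, and the same substitution (with $w$ replaced by $1/w$) shows it maps $L^2(\Omega,\bb R^3)$ back into $L^2_w(\Omega,\bb R^3)$. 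Hence $U$ is unitary. Completeness and separability of $L^2_w(\Omega,\bb R^3)$ then follow at once from the corresponding, textbook properties of $L^2(\Omega,\bb R^3)$, and the fact that the inner product \prefeq{eqn:sclarproduct} is indeed an inner product (non-degeneracy uses $w>0$) is a by-product.

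For the density statement, given $\vect f\in L^2_w(\Omega,\bb R^3)$ I would invoke the density of $\cal C^\infty_c(\Omega,\bb R^3)$ in $L^2(\Omega,\bb R^3)$ to pick a sequence $\vect g_n\in\cal C^\infty_c(\Omega,\bb R^3)$ with $\vect g_n\to U\vect f$ in $L^2$, and set $\vect f_n:=\vect g_n/\sqrt{w}$. Since $1/\sqrt{w}\in\cal C^\infty(\bb R^3)$ and $\supp\vect f_n\subseteq\supp\vect g_n$, we have $\vect f_n\in\cal C^\infty_c(\Omega,\bb R^3)$. Unitarity gives
\[
    \|\vect f_n-\vect f\|_{L^2_w} = \|U\vect f_n - U\vect f\|_{L^2} = \|\vect g_n - U\vect f\|_{L^2} \longrightarrow 0.
\]

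There is no real obstacle; the argument is entirely routine once the smoothness and strict positivity of $w$ built into $\cal W$ are noted. The mild subtlety worth isolating is that the inverse of $U$ lands in the correct space, which is exactly where smoothness of $1/\sqrt{w}$ (as opposed to mere measurability) is convenient, since it guarantees that cutoff and mollification arguments applied in the unweighted picture transfer back to genuinely smooth, compactly supported representatives in $L^2_w$.
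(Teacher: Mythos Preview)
Your argument is correct. The paper does not actually give a proof of this theorem; immediately before the statement it simply records that it is ``a standard result since $(\Omega,\sqrt w\,d^3x)$ for $\Omega\subseteq\bb R^3$ is a $\sigma$-finite measure space'' and leaves it at that. In other words, the paper's implicit route is to regard $L^2_w(\Omega,\bb R^3)$ directly as an $L^2$ space over the measure $w\,d^3x$ and invoke the general theory (completeness, separability, density of $\cal C^\infty_c$ for absolutely continuous Radon measures). Your route via the explicit multiplication unitary $U\vect f=\sqrt{w}\,\vect f$ onto the unweighted $L^2(\Omega,\bb R^3)$ is an equivalent, slightly more hands-on way to make that one-line remark precise; the advantage is that it reduces density of $\cal C^\infty_c$ in $L^2_w$ to the familiar unweighted statement without having to quote a density theorem for general Radon measures, at the cost of using smoothness of $1/\sqrt w$ (which is available here). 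Either way the content is routine, and nothing in your write-up is missing.
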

We shall sometimes switch between the $L^2_w$ and the $L^2$ notation, i.e. in the case of the Schwarz inequality for all $\vect f,\vect g\in L^2_w$ we write $|\braket{\vect f,\vect g}_{L^2_w}|=\left|\braket{\sqrt{w}\vect f,\sqrt{w}\vect g}_{L^2}\right|\leq \|\sqrt w \vect f\|_{L^2}\|\sqrt w \vect g\|_{L^2}=\|\vect f\|_{L^2_w}\|\vect g\|_{L^2_w}$.
We shall also need:
\begin{definition}[Weighted Sobolev Spaces] For all $w\in\cal W$, open $\Omega\subseteq\bb R^3$ and $k\geq 0$ we define
  \begin{align*}
    H^k_w(\Omega,\bb R^3)&:=\bigg\{\vect f\in L^2_w(\Omega,\bb R^3)\;\Big|\;D^\alpha \vect f\in L^2_w(\Omega,\bb R^3), |\alpha|\leq k\bigg\},\\
    H^{\triangle^k}_w(\Omega,\bb R^3)&:=\bigg\{\vect f\in L^2_w(\Omega,\bb R^3)\;\Big|\;\triangle^j \vect f\in L^2_w(\Omega,\bb R^3) \text{ for }0\leq j\leq k \bigg\},\\
    H^{curl}_w(\Omega,\bb R^3)&:=\bigg\{\vect f\in L^2_w(\Omega,\bb R^3)\;\Big|\;\nabla\wedge \vect f\in L^2_w(\Omega,\bb R^3)\bigg\}
  \end{align*}
  which are equipped with the inner products
  \begin{align*}
    \braket{\vect f,\vect g}_{H^k_w} := \sum_{|\alpha|\leq k}\braket{D^{\alpha} \vect f,D^{\alpha} \vect g}_{L^2_w(\Omega)},&&
    \braket{\vect f,\vect g}_{H^{\triangle}_w(\Omega)} := \sum_{j=0}^k\braket{\triangle^j \vect f,\triangle^j \vect g}_{L^2_w(\Omega)}
  \end{align*}
  \begin{align*}
    \braket{\vect f,\vect g}_{H^{curl}_w(\Omega)} := \braket{\vect f,\vect g}_{L^2_w(\Omega)}+\braket{\nabla\wedge \vect f,\nabla\wedge \vect g}_{L^2_w(\Omega)},
  \end{align*}
  respectively. We use the multi-index notation $\alpha=(\alpha_1,\alpha_2,\alpha_3)\in(\bb N_0)^3$, $|\alpha|:=\sum_{i=1}^3\alpha_i$, $D^\alpha=\partial_1^{\alpha_1}\partial_2^{\alpha_2}\partial_3^{\alpha_3}$ where $\partial_i$ denotes the derivative w.r.t. to the $i$-th standard unit vector in $\bb R^3$. In the following we use a superscript $\#$, e.g. $H^{\#}_w$, as a placeholder for $k$, $\triangle^k$, and $curl$ for any $k\in\bb N$, respectively. We shall also need the local versions, i.e. $H^\#_{loc}:=\{\vect f \in L^2_{loc}\;|\;\vect f\in H^\#(K), \text{ for every open}\;K\subset\subset\bb R^3\}$, where $\subset\subset$ is short for compactly contained. Usually we abbreviate $H^k_w=H^k_w(\bb R^3,\bb R^3)$, use $L^2_w=H^0_w$ and drop the subscript $w$ if $w=1$.
\end{definition}
By definition of the weak derivative, Theorem \pref{thm:L2w Hilbert space Cinftyc dense} implies also:
\begin{theorem}[Weighted Sobolev Spaces]\label{thm:Hnw_complete}\label{thm:Hcurlw_complete}
  For any $w\in\cal W$, $H^{\#}_w$ is a separable Hilbert space.
\end{theorem}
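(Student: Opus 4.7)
The three spaces $H^k_w$, $H^{\triangle^k}_w$ and $H^{curl}_w$ share a common structure: each is defined as the subspace of $L^2_w$ of elements for which a finite collection of fixed differential operators $\{T_1,\dots,T_m\}$ (the partial derivatives $D^\alpha$ with $|\alpha|\le k$, the iterated Laplacians $\triangle^j$ with $0\le j\le k$, or the single operator $\nabla\wedge$) yield elements of $L^2_w$, equipped with the inner product $\braket{\vect f,\vect g}_{H^\#_w}=\sum_{\ell=0}^m\braket{T_\ell\vect f,T_\ell\vect g}_{L^2_w}$ (setting $T_0=\mathrm{Id}$). My plan is to treat all three cases uniformly in this language. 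Bilinearity, symmetry and positive-definiteness of the inner product are immediate from the corresponding properties in $L^2_w$, so it only remains to prove completeness and separability.

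For completeness, let $(\vect f_n)\subset H^\#_w$ be Cauchy. By construction $(T_\ell\vect f_n)$ is Cauchy in $L^2_w$ for each $\ell$, so by Theorem \pref{thm:L2w Hilbert space Cinftyc dense} there exist $\vect f,\vect g_1,\dots,\vect g_m\in L^2_w$ with $\vect f_n\to\vect f$ and $T_\ell\vect f_n\to\vect g_\ell$ in $L^2_w$. The key point is to show that $T_\ell\vect f=\vect g_\ell$ in the distributional sense, so that $\vect f\in H^\#_w$ and $\vect f_n\to\vect f$ in the $H^\#_w$ norm. For this I first observe that $L^2_w$-convergence implies convergence in $\mathcal D'(\bb R^3,\bb R^3)$: given $\boldsymbol\phi\in\cal C^\infty_c$, the function $\boldsymbol\phi/\sqrt w$ lies in $L^2$ because $w$ is smooth, strictly positive and $\supp\boldsymbol\phi$ is compact, hence
\[
\left|\int\boldsymbol\phi\cdot(\vect f_n-\vect f)\,d^3x\right|=\left|\braket{\sqrt w(\vect f_n-\vect f),\boldsymbol\phi/\sqrt w}_{L^2}\right|\leq \|\vect f_n-\vect f\|_{L^2_w}\|\boldsymbol\phi/\sqrt w\|_{L^2}\longrightarrow 0.
\]
Applying this to both $\vect f_n\to\vect f$ and $T_\ell\vect f_n\to\vect g_\ell$, passing to the limit in the distributional identity $\braket{T_\ell\vect f_n,\boldsymbol\phi}=(-1)^{|\ell|}\braket{\vect f_n,T_\ell^*\boldsymbol\phi}$ (with $T_\ell^*$ the formal adjoint, which sends $\cal C^\infty_c$ to itself in all three cases) yields $T_\ell\vect f=\vect g_\ell$ as distributions.

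For separability, consider the map $\iota:H^\#_w\to \bigoplus_{\ell=0}^m L^2_w$ defined by $\iota(\vect f)=(T_0\vect f,T_1\vect f,\dots,T_m\vect f)$. By construction $\iota$ is an isometry onto its image when the target carries the natural direct-sum Hilbert space structure. Since $L^2_w$ is separable by Theorem \pref{thm:L2w Hilbert space Cinftyc dense}, the finite direct sum is separable, and any subset of a separable metric space is separable, so $H^\#_w$ is separable.

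The only non-routine point is the passage from $L^2_w$-convergence to distributional convergence of the $T_\ell\vect f_n$, which I expect to be the sole obstacle worth spelling out; once that step is granted, the rest is standard Hilbert-space bookkeeping.
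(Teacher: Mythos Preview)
Your argument is correct and is precisely a fully spelled-out version of what the paper treats as immediate: the paper simply states that ``by definition of the weak derivative, Theorem~\ref{thm:L2w Hilbert space Cinftyc dense} implies'' the result, and your completeness step (using that $L^2_w$-convergence implies distributional convergence, hence the weak differential operators are closed) together with your separability step (isometric embedding into a finite direct sum of copies of $L^2_w$) is exactly the standard unpacking of that sentence. The only cosmetic blemish is the sign ``$(-1)^{|\ell|}$'', where $\ell$ is an index rather than a multi-index; replacing it by the appropriate sign for the order of $T_\ell$ fixes this without affecting the argument.
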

\begin{lemma}[Relation between $H^{\#}_w$ and $H^{\#}_{loc}$]\label{lem:local equivalence}
  Let $w\in\cal W$. For an open $O\subset\subset\bb R^3$ one has $H^\#_w(O)=H^\#(O)$ and the respective norms are equivalent. Hence, a function $\vect f$ is in $H^\#_w(K)$ for every open $K\subset\subset \bb R^3$ if and only if $\vect f\in H^\#_{loc}$.
\end{lemma}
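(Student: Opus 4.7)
The plan is to reduce the weighted spaces to the ordinary ones by exploiting the fact that on a bounded set, a smooth strictly positive weight is pinched between two positive constants.

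First I would observe that any $w\in\cal W$ is, by Definition \pref{def:weighted spaces}, smooth and strictly positive on $\bb R^3$. If $O\subset\subset\bb R^3$ is open, then $\overline O$ is compact, so $w$ attains a minimum $c>0$ and a maximum $C<\infty$ on $\overline O$. This gives the pointwise sandwich $c\leq w(\vect x)\leq C$ for $\vect x\in O$, which at the level of $L^2$ norms yields
\[
  c\,\|\vect f\|_{L^2(O,\bb R^3)}^2\;\leq\;\|\vect f\|_{L^2_w(O,\bb R^3)}^2\;\leq\;C\,\|\vect f\|_{L^2(O,\bb R^3)}^2.
\]
Hence $L^2_w(O,\bb R^3)=L^2(O,\bb R^3)$ as sets, and the two norms are equivalent.

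Next I would lift this to $H^\#_w(O)$ for $\#\in\{k,\triangle^k,curl\}$. The point is that the weak derivatives appearing in the definition of $H^\#_w(O)$ are defined by duality against $\cal C^\infty_c(O,\bb R^3)$ and do not involve the weight $w$ at all; they are the same objects as the weak derivatives used in $H^\#(O)$. Therefore membership of $D^\alpha\vect f$, $\triangle^j\vect f$, or $\nabla\wedge\vect f$ in $L^2_w(O,\bb R^3)$ is, by the previous step, equivalent to membership in $L^2(O,\bb R^3)$. Applying the $L^2$-norm equivalence to each of the finitely many derivative terms that appear in the respective graph norm and summing, one obtains constants $c',C'>0$ (depending only on $c$, $C$, and the number of derivatives) with
\[
  c'\|\vect f\|_{H^\#(O)}\;\leq\;\|\vect f\|_{H^\#_w(O)}\;\leq\;C'\|\vect f\|_{H^\#(O)},
\]
so $H^\#_w(O)=H^\#(O)$ with equivalent norms.

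For the second statement, if $\vect f\in H^\#_w(K)$ for every open $K\subset\subset\bb R^3$, the first part gives $\vect f\in H^\#(K)$ for every such $K$, which is the definition of $H^\#_{loc}$. Conversely, if $\vect f\in H^\#_{loc}$ then for every open $K\subset\subset\bb R^3$ we have $\vect f\in H^\#(K)=H^\#_w(K)$, as required.

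There is really no significant obstacle here; the only point that deserves attention is the observation that the weak derivatives appearing in the weighted and unweighted definitions coincide, so that the whole argument collapses to the trivial pointwise sandwich $c\leq w\leq C$ on the compact set $\overline O$.
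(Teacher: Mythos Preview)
Your proof is correct and follows essentially the same approach as the paper: bound $w$ above and below by positive constants on the relatively compact set $O$, and conclude the norm equivalence term by term. The paper invokes the weight estimate \prefeq{eqn:weight_relation} to obtain $0<\inf_{O}w\leq\sup_{O}w<\infty$, while you obtain the same bounds directly from continuity of $w$ on the compact closure $\overline O$; both justifications are equally valid, and your write-up is in fact more explicit about the second assertion and about why the weak derivatives are weight-independent.
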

\begin{proof}
  Given $w\in\cal W$, equation (\pref{eqn:weight_relation}) ensures that there are two finite and non-zero constants $0<\constl{winf}:=\inf_{\vect x\in O}w(\vect x)$, $\constl{wsup}:=\sup_{\vect x\in O}w(\vect x)<\infty$. Thus, we get $\constr{winf}\|\vect f\|_{H^\#(O)}\leq \|\vect f\|_{H^\#_w(O)}\leq \constr{wsup}\|\vect f\|_{H^\#(O)}$.
\end{proof}
A direct computations using Definition (\pref{eqn:weight function spaces}) gives:
\begin{lemma}[Properties of Weights in $\cal W^k$]\label{lem:Dw bound}
  Let $w\in\cal W^k$, then for every multi-index $|\alpha|\leq k$ there also exists constants $0\leq\namel{calpha2}{C^\alpha}<\infty$ such that on $\bb R^3$  
\begin{align*}
    |D^\alpha w|\leq \namer{calpha2}w, && \left|D^\alpha w^{-1/2}\right|\leq \namer{calpha2} w^{-1/2}.
\end{align*}
\end{lemma}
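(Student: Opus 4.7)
The plan is to reduce both estimates to the hypothesis $|D^\alpha \sqrt w|\leq C_\alpha \sqrt w$ for $|\alpha|\leq k$ by using the Leibniz rule on two simple algebraic identities: $w=\sqrt w\cdot \sqrt w$ and $\sqrt w\cdot w^{-1/2}=1$.

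For the first bound, I would write $D^\alpha w = D^\alpha(\sqrt w\cdot \sqrt w)$ and apply the Leibniz rule to obtain
\[
D^\alpha w \;=\; \sum_{\beta\leq \alpha}\binom{\alpha}{\beta}\, D^\beta\sqrt w\cdot D^{\alpha-\beta}\sqrt w.
\]
Taking absolute values and invoking the hypothesis on each factor gives $|D^\alpha w|\leq \big(\sum_{\beta\leq\alpha}\binom{\alpha}{\beta}C_\beta C_{\alpha-\beta}\big)\cdot\sqrt w\cdot\sqrt w$, which is $C^\alpha w$ for a suitable constant $C^\alpha$.

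For the second bound, the key trick is to differentiate the identity $\sqrt w\cdot w^{-1/2}=1$. For $|\alpha|\geq 1$, Leibniz gives
\[
\sqrt w\cdot D^\alpha w^{-1/2} \;=\; -\sum_{0<\beta\leq\alpha}\binom{\alpha}{\beta}\, D^\beta\sqrt w\cdot D^{\alpha-\beta}w^{-1/2},
\]
and I would proceed by induction on $|\alpha|$. The base case $|\alpha|=0$ is trivial (take $C^0=1$), and for $|\alpha|=1$ one has $\partial_i w^{-1/2} = -w^{-1}\partial_i\sqrt w$ whose modulus is bounded by $C_{e_i} w^{-1/2}$. For the inductive step, dividing the displayed identity by $\sqrt w>0$, taking absolute values, using the hypothesis on $|D^\beta\sqrt w|$ (with $\beta>0$) together with the inductive bound $|D^{\alpha-\beta}w^{-1/2}|\leq C^{\alpha-\beta}w^{-1/2}$, and simplifying $\sqrt w\cdot w^{-1/2}=1$, produces a bound of the form $(\text{const})\cdot w^{-1/2}$, yielding the claim with a suitable $C^\alpha$.

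I do not anticipate any real obstacle: both statements follow from essentially one application of the Leibniz rule, the only subtle point being that for $w^{-1/2}$ one must argue by induction because the expression is not polynomial in $\sqrt w$. The positivity of $w$ (built into $\cal W$) is what allows the division by $\sqrt w$ throughout, and the smoothness of $w$ ensures that all derivatives appearing in the Leibniz expansion are classical pointwise derivatives on $\bb R^3$.
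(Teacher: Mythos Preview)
Your proposal is correct and is precisely the ``direct computation'' the paper alludes to without spelling out: the Leibniz expansion of $w=\sqrt w\cdot\sqrt w$ for the first bound, and an induction via Leibniz on $\sqrt w\cdot w^{-1/2}=1$ for the second. There is nothing to add or correct.
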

\begin{theorem}[$\cal C^\infty_c$ dense in $H^\#_w$]\label{thm:Cinftyc dense in Hcurl Htriange dense}
  In the case $\#=k$ and $\#=curl$ let $w\in\cal W$. In the case $\#=\triangle^k$ let $w\in\cal W^2$. Then $\cal C_c^\infty$ is a dense subset of $H^\#_w$.
\end{theorem}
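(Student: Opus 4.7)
The plan is to approximate any $\vect f \in H^\#_w$ by smooth compactly supported functions via the classical combination of a multiplicative cutoff and a convolution with a mollifier, with the order of the two steps depending on $\#$. Fix $\chi \in \cal C^\infty_c(\bb R^3,\bb R)$ with $\chi \equiv 1$ on $B_1(0)$ and $\chi \equiv 0$ outside $B_2(0)$, set $\chi_R(\vect x) := \chi(\vect x/R)$ (so $\|D^\gamma \chi_R\|_\infty \leq C R^{-|\gamma|}$ and $\mathrm{supp}\, D^\gamma \chi_R \subseteq \{R \leq \|\vect x\| \leq 2R\}$ for $|\gamma| \geq 1$), and fix a standard mollifier $(\eta_\epsilon)_{\epsilon > 0}$.

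For $\# = k$ and $\# = curl$ I would cut off first and mollify second. For the cutoff, $\chi_R \vect f$ is compactly supported and by Leibniz the error $D^\alpha(\chi_R \vect f) - \chi_R D^\alpha \vect f$ is a sum of terms $D^\gamma \chi_R \cdot D^\beta \vect f$ with $|\gamma| \geq 1$. Each has $L^2_w$-norm bounded by $C R^{-|\gamma|}\bigl(\int_{R \leq \|\vect x\| \leq 2R} w\,|D^\beta \vect f|^2\bigr)^{1/2}$, which tends to zero as $R \to \infty$ because $D^\beta \vect f \in L^2_w$; the remaining piece $(\chi_R - 1) D^\alpha \vect f$ tends to zero in $L^2_w$ by dominated convergence, and in the $curl$ case the identity $\nabla \wedge (\chi_R \vect f) = \chi_R (\nabla \wedge \vect f) + (\nabla \chi_R) \wedge \vect f$ plays the same role. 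Then, since $\chi_R \vect f$ is supported in $B_{2R}$, Lemma \pref{lem:local equivalence} identifies the $H^\#_w$ and $H^\#$ norms on a slightly larger ball, so standard mollification $\eta_\epsilon * (\chi_R \vect f) \to \chi_R \vect f$ in $H^\#(B_{2R+1}) = H^\#_w(B_{2R+1})$, and a diagonal argument concludes.

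For $\# = \triangle^k$ the naive cutoff strategy breaks down, because $\vect f \in H^{\triangle^k}_w$ provides only global $L^2_w$-control of $\triangle^j \vect f$ and not of the individual partial derivatives $D^\beta \vect f$ appearing in the Leibniz expansion of $\triangle^j(\chi_R \vect f)$. I would therefore reverse the order: first form $\vect f_\epsilon := \eta_\epsilon * \vect f \in \cal C^\infty$, whose every classical derivative $D^\beta \vect f_\epsilon = (D^\beta \eta_\epsilon) * \vect f$ lies in $L^2_w$ by a weighted Young-type inequality that only requires the weight relation \prefeq{eqn:weight_relation}. Since convolution commutes with distributional derivatives, $\triangle^j \vect f_\epsilon = \eta_\epsilon * \triangle^j \vect f$, and the $L^2_w$-mollification convergence applied to $\triangle^j \vect f \in L^2_w$ yields $\vect f_\epsilon \to \vect f$ in $H^{\triangle^k}_w$ as $\epsilon \to 0$. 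Now cut off the smooth $\vect f_\epsilon$: in the Leibniz expansion of $\triangle^j(\chi_R \vect f_\epsilon)$ every $D^\beta \vect f_\epsilon$ is in $L^2_w$, so the shell estimate of the first case gives $\chi_R \vect f_\epsilon \in \cal C^\infty_c$ with $\chi_R \vect f_\epsilon \to \vect f_\epsilon$ in $H^{\triangle^k}_w$, and a diagonal choice $\epsilon = \epsilon(R) \to 0$ concludes.

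The main obstacle is the $\triangle^k$ case: the $H^{\triangle^k}_w$ norm supplies no a priori global $L^2_w$-control of the intermediate derivatives $D^\beta \vect f$, which is precisely why the cutoff cannot be applied first. The hypothesis $w \in \cal W^2$ enters through Lemma \pref{lem:Dw bound}, which furnishes $|D^\alpha w^{-1/2}| \leq C w^{-1/2}$ for $|\alpha| \leq 2$; this is what is needed to commute $\triangle$ with the multiplier $\sqrt{w}$ in the weighted estimates that upgrade the distributional identity $\triangle^j(\eta_\epsilon * \vect f) = \eta_\epsilon * \triangle^j \vect f$ to a bona fide convergence in $L^2_w$, and it is also what ensures that the shell argument in the cutoff step survives in the weighted setting for the Laplacian.
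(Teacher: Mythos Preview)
Your argument is correct but takes a genuinely different route from the paper in the $\triangle^k$ case. Both proofs handle $\#=k$ and $\#=curl$ identically (cutoff, then mollify on a compact set via Lemma~\pref{lem:local equivalence}). For $\#=\triangle^k$, however, the paper keeps the order \emph{cutoff first, mollify second}: the cross term $\frac{2}{n}\|\nabla\varphi_n\cdot\nabla\vect f\|_{L^2_w}$ is exactly the obstacle you identify (no a~priori control of $\nabla\vect f$), and the paper overcomes it by two integrations by parts, bounding $\|\nabla\varphi_n\cdot\nabla\vect f\|_{L^2_w}^2$ in terms of $\|\vect f\|_{H^\triangle_w}^2$ uniformly in $n$. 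This step is precisely where $w\in\cal W^2$ is used, via the bound $|D^\alpha w|\leq \namer{calpha2} w$ of Lemma~\pref{lem:Dw bound} on $\triangle\big(w\sum_i(\partial_i\varphi_n)^2\big)$. Your reversal (mollify first, then cut off the smooth $\vect f_\epsilon$) sidesteps the integration by parts entirely: once $\vect f_\epsilon\in\cal C^\infty$ with every $D^\beta\vect f_\epsilon\in L^2_w$ (by the weighted Young bound, which needs only $w\in\cal W$), the shell estimate for $\chi_R\vect f_\epsilon$ goes through unchanged.

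One point deserves correction: your final paragraph misidentifies where $w\in\cal W^2$ enters. Neither the distributional identity $\triangle^j(\eta_\epsilon*\vect f)=\eta_\epsilon*\triangle^j\vect f$, nor its upgrade to $L^2_w$-convergence (which follows from continuity of translations in $L^2_w$, itself a consequence of the weight relation \prefeq{eqn:weight_relation} and density of $\cal C^\infty_c$ in $L^2_w$), nor the shell argument for the cutoff of $\vect f_\epsilon$, requires anything beyond $w\in\cal W$. So your proof in fact establishes the $\triangle^k$ case under the weaker hypothesis $w\in\cal W$; the paper's $\cal W^2$ hypothesis is an artifact of its integration-by-parts route.
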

\begin{proof}
  The proof is similar in all cases. Only the case $H^{\triangle^k}_w$ is a bit more involved as one e.g. needs to estimate the derivatives $\partial_i\partial_j$, $1\leq i,j\leq 3$, in terms of the Laplacian. Let $\vect f\in H^{\triangle^k}_w$, then we need to show that for every $\epsilon>0$ there is a $\vect g\in\cal C^\infty_c$ such that $\|\vect f-\vect g\|_{H^{\triangle^k}_w}<\epsilon$. Take a $\varphi\in\cal C^\infty_c(\bb R^3,[0,1])$ such that $\varphi(\vect x)=1$ for $\|\vect x\|\leq 1$. For $\varphi_n(\vect x):=\varphi\left(\frac{\vect x}{n}\right)$ the sequence $(\vect f\varphi_n)_{n\in\bb N}$ converges to $\vect f$ in $L^2_w$ by definition.
Furthermore, we have
  \begin{align}\label{eqn:triangle approx by bounded func}
    \|\triangle\vect f-\triangle(\vect f\;\varphi_n)\|^2_{L^2_w}\leq \|\triangle \vect f-\varphi_n\;\triangle \vect f\|_{L^2_w} + \frac{1}{n^2}\|\triangle \varphi_n\;\vect f\|_{L^2_w} + \frac{2}{n}\left\|\nabla\varphi_n\cdot\nabla\vect f\right\|_{L^2_w}.
  \end{align}
   Again by definition of $\varphi_n$ the first term on the right-hand side of (\pref{eqn:triangle approx by bounded func}) goes to zero for $n\to\infty$. With $\constl{cvarphi}:=\sup_{n\in\bb N,\vect x\in\bb R^3}\sum_{|\alpha|\leq 3}\left|D^\alpha\varphi\left(\frac{\vect x}{n}\right)\right|<\infty$ we have $\|\triangle \varphi_n\;\vect f\|_{L^2_w}\leq \constr{cvarphi}\|\vect f\|_{L^2_w}$ so that also the second term goes to zero for $n\to\infty$. Furthermore, on open sets $K\subset\subset\bb R^3$ we have $H^\triangle_w(K,\bb R^3)=H^\triangle(K,\bb R^3)$ by Lemma \pref{lem:local equivalence} and therefore $\vect f\in H^\triangle_{loc}$. Thus, we can apply partial integration and, using the abbreviation $\omega_n=w\sum_{i=1}^3 (\partial_i\varphi_n)^2$, yield
   \[
     \left\|\nabla\varphi_n\cdot\nabla\vect f\right\|_{L^2_w}^2 \leq \sum_{i=1}^3\intdv x \omega_n(\vect x)\big(\partial_i\vect f(\vect x)\big)^2
     \leq\left|\sum_{i=1}^3\intdv x \left[\partial_i\omega_n\;\vect f\;\partial_i \vect f+\omega_n\;\partial_i^2\vect f\;\vect f\right](\vect x)\right|.
   \]
   In the first terms on the right-hand side we apply the chain rule $\vect f\;\partial_i\vect f=\frac{1}{2}\partial_i\vect f^2$ and integrate by parts again so that
   \[
     \ldots = \left|\sum_{i=1}^3\intdv x \left[\frac{1}{2}\partial_i^2\omega_n\; \vect f^2+\omega_n\;\partial_i^2\vect f\;\vect f\right](\vect x)\right|
     \leq \frac{1}{2} \left\|\sqrt{|\triangle\omega_n|}\;\vect f\right\|^2_{L^2} + \left|\braket{\sqrt{|\omega_n|}\;\triangle\vect f,\sqrt{|\omega_n|}\;\vect f}_{L^2}\right|
   \]
   By Lemma \pref{lem:Dw bound} we have $|D^\alpha w|\leq \namer{calpha2} w$ on $\bb R^3$. Define $\constl{calpha added}:=\sum_{|\alpha|\leq 2}C^\alpha$, then $|\triangle\omega_n|\leq 9\constr{cvarphi}^2\constr{calpha added}w$ and $|\omega_n|\leq 3\constr{cvarphi}^2 w$ uniformly in $n$ on $\bb R^3$. 
   Finally, using Schwarz's inequality we get
   \[
     \left\|\nabla\varphi_n\cdot\nabla\vect f\right\|_{L^2_w}^2 \leq \constr{cvarphi}^2 \left(\frac{9}{2} \constr{calpha added}+3\right)\|\vect f\|_{H^\triangle_w}^2.
   \]
   
Going back to equation (\pref{eqn:triangle approx by bounded func}) we then find that also the last term on the right-hand side goes to zero as $n\to\infty$. It is straight-forward to apply this idea to also show $\|\vect f-\vect f\varphi_n\|_{H^{\triangle^k}_w}\to 0$ as $n\to\infty$ for $k>1$ (note that also for $k>1$ the condition $w\in\cal W^2$ is sufficient). Hence, we conclude that there is an $\vect h\in H^{\triangle^k}_w$ with compact support and $\|\vect f-\vect h\|_{H^{\triangle^k}_w}\leq \frac{\epsilon}{2}$. Now let $\psi\in\cal C^\infty_c(\bb R^3,\bb R)$ and define $\psi_n(\vect x):=n^3\psi(n\vect x)$. It is a standard analysis argument that $\|\vect h-\vect h*\psi_n\|_{H^{\triangle^k}}\to 0$ for $n\to\infty$. Thus, due to Lemma \pref{lem:local equivalence} for $n$ large enough $\|\vect h-\vect h*\psi_n\|_{H^{\triangle^k}_w}<\frac{\epsilon}{2}$ because, as $\vect h$ and $\psi_n$ have compact support, $\vect g:=\vect h*\psi_n\in\cal C^\infty_c$. With that $\|\vect f-\vect g\|_{H^{\triangle^k}_w}\leq \|\vect f-\vect h\|_{H^{\triangle^k}_w}+\|\vect h-\vect h*\psi_n\|_{H^{\triangle^k}_w}<\epsilon$. Hence, $\cal C^\infty_c$ is dense in $H^{\triangle^k}_w$.
\end{proof}
\begin{remark}
  By the standard approximation argument this theorem allows us to make use of partial integration in the spaces $H^\#_w$.
\end{remark}
\begin{theorem}[$H^{\triangle^k}_w$ equals $H^{2k}_w$]\label{thm:Htrianglekw equals H2kw}
  Let $w\in\cal W^2$, then for any $k\in\bb N$ we have $H^{\triangle^k}_w=H^{2k}_w$ and the respective norms are equivalent.
\end{theorem}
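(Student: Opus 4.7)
The plan is to prove the two inclusions with equivalent norms, with the nontrivial content being $H^{\triangle^k}_w \subseteq H^{2k}_w$. The hypothesis $w \in \cal W^2$ is the minimal regularity under which both the preceding density result (Theorem \pref{thm:Cinftyc dense in Hcurl Htriange dense}) and the pointwise weight bound $|D^\alpha w| \leq C w$ for $|\alpha|\leq 2$ from Lemma \pref{lem:Dw bound} are available, and both will be used essentially.

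The easy inclusion $H^{2k}_w \hookrightarrow H^{\triangle^k}_w$ is immediate: $\triangle^j \vect f$ is a fixed linear combination of partial derivatives $D^\alpha \vect f$ with $|\alpha|=2j$, so $\|\vect f\|_{H^{\triangle^k}_w} \leq C_k \|\vect f\|_{H^{2k}_w}$. For the reverse inclusion I first establish, for test functions $\vect f\in\cal C^\infty_c$, the a priori bound $\|\vect f\|_{H^{2k}_w} \leq C_k \|\vect f\|_{H^{\triangle^k}_w}$, and then extend by the density of $\cal C^\infty_c$ in $H^{\triangle^k}_w$ (Theorem \pref{thm:Cinftyc dense in Hcurl Htriange dense}): any Cauchy sequence in $H^{\triangle^k}_w$ becomes Cauchy in $H^{2k}_w$, its limit in $H^{2k}_w$ must coincide with the original limit in $L^2_w$ by uniqueness, hence $\vect f \in H^{2k}_w$.

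The base case $k=1$ is the heart of the matter. For $\vect f\in\cal C^\infty_c$, integration by parts (no boundary terms) gives
\begin{align*}
  \|\partial_i \vect f\|_{L^2_w}^2
  &= -\int (\partial_i w)\,\vect f\,\partial_i \vect f\,d^3x - \int w\,(\partial_i^2 \vect f)\,\vect f\,d^3x,\\
  \sum_{i,j}\|\partial_i\partial_j \vect f\|_{L^2_w}^2
  &= \int w\,(\triangle\vect f)^2\,d^3x + \text{terms with factors }D^\alpha w, |\alpha|\leq 2.
\end{align*}
Using Lemma \pref{lem:Dw bound} to bound $|D^\alpha w|\leq C w$ for $|\alpha|\leq 2$, Cauchy–Schwarz in $L^2_w$, and Young's inequality $ab \leq \epsilon a^2 + C_\epsilon b^2$ to absorb the higher-order factors back into the left-hand side, yields
\[
  \|\vect f\|_{H^2_w}^2 \leq C\bigl(\|\vect f\|_{L^2_w}^2 + \|\triangle\vect f\|_{L^2_w}^2\bigr) = C\|\vect f\|_{H^\triangle_w}^2.
\]

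For the inductive step, assume the result for $k-1$ and let $\vect f\in\cal C^\infty_c$. Since $\triangle\vect f\in\cal C^\infty_c$ satisfies $\triangle^j(\triangle\vect f)=\triangle^{j+1}\vect f\in L^2_w$ for $0\leq j\leq k-1$, the induction hypothesis gives $\|\triangle\vect f\|_{H^{2(k-1)}_w}\leq C\|\triangle\vect f\|_{H^{\triangle^{k-1}}_w}\leq C\|\vect f\|_{H^{\triangle^k}_w}$. Then for any multi-index $|\alpha|\leq 2k$, split $\alpha=\beta+\gamma$ with $|\gamma|\leq 2$, so that $D^\alpha \vect f = D^\beta D^\gamma \vect f$ with $|\beta|\leq 2k-2$; applying the base-case estimate to $D^\gamma \vect f\in\cal C^\infty_c$ and commuting $\triangle$ with $D^\gamma$ reduces the bound on $\|D^\alpha\vect f\|_{L^2_w}$ to controlled quantities already handled by the induction hypothesis on $\triangle\vect f$ and by the $H^{2(k-1)}_w$ estimate on $\vect f$ itself (which also follows from the induction hypothesis applied to $\vect f$). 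Collecting these pieces gives $\|\vect f\|_{H^{2k}_w}\leq C_k\|\vect f\|_{H^{\triangle^k}_w}$, completing the induction and the proof after density extension.

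The main obstacle is bookkeeping in the base case: organizing the integration-by-parts so that every error term either contains a factor $D^\alpha w$ with $|\alpha|\leq 2$ (hence $\leq Cw$) or is a product of two derivatives of $\vect f$ with total order at most 2 and at most 4, which can be absorbed by Young's inequality without circular dependencies. Once this bookkeeping is done carefully, the induction is routine and density gives the result.
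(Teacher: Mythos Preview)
Your approach is essentially the paper's: prove the base case $k=1$ on $\cal C^\infty_c$ via integration by parts using $|D^\alpha w|\leq Cw$ from Lemma~\pref{lem:Dw bound}, then induct, then extend by the density result Theorem~\pref{thm:Cinftyc dense in Hcurl Htriange dense}. The paper organizes the induction step as $\|\vect f\|_{H^{2k+2}_w}^2 \leq \sum_{|\alpha|\leq 2}\|D^\alpha \vect f\|_{H^{2k}_w}^2$, applies the induction hypothesis to each $D^\alpha\vect f$ to get $\sum_{l\leq k}\sum_{|\alpha|\leq 2}\|D^\alpha\triangle^l\vect f\|_{L^2_w}^2$, and then closes with the base case applied to each $\triangle^l\vect f\in H^\triangle_w$.

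One slip to fix: in your induction step you write ``applying the base-case estimate to $D^\gamma\vect f$'', but what you actually need there is the \emph{induction hypothesis} applied to $D^\gamma\vect f$ (to pass from $\|D^\beta D^\gamma\vect f\|_{L^2_w}$ to $\|D^\gamma\vect f\|_{H^{\triangle^{k-1}}_w}$), after which the base case is applied to each $\triangle^l\vect f$ to control $\|D^\gamma\triangle^l\vect f\|_{L^2_w}$. Alternatively---and this is closer to what your listed ``controlled quantities'' suggest---apply the base case to $D^\beta\vect f$ (not $D^\gamma\vect f$), yielding $\|D^\gamma D^\beta\vect f\|_{L^2_w}\leq C(\|D^\beta\vect f\|_{L^2_w}+\|D^\beta\triangle\vect f\|_{L^2_w})$, and then invoke the induction hypothesis on $\vect f$ and on $\triangle\vect f$. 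Either route works; just make sure the labeling matches the logic.
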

\begin{proof}
  First we prove $H^\triangle_w=H^2_w$ and that their respective norms are equivalent. By definition of the weak derivative and Theorem \pref{thm:Cinftyc dense in Hcurl Htriange dense} it is sufficient to show that the norms $\|\cdot\|_{H^\triangle_w}$ and $\|\cdot\|_{H^{2}_w}$ are equivalent on $\cal C^\infty_c$. By definition for $\vect g\in\cal C^\infty_c$ we have
  \begin{align*}\label{eqn:Htrianglew to estimate}
    \|\vect g\|_{H^{\triangle}_w}^2 \leq \|\vect g\|_{H^2_w}^2=\|\vect g\|^2_{L^2_w} + \sum_{i=1}^3\|\partial_i\vect g\|^2_{L^2_w} + \sum_{i,j=1}^3\|\partial_i\partial_j\vect g\|^2_{L^2_w}.
  \end{align*}
  Using partial integration in a similar way as in the proof of Theorem \pref{thm:Cinftyc dense in Hcurl Htriange dense} one yields the bounds
  \begin{align*}
    \sum_{i=1}^3\|\partial_i\vect g\|^2_{L^2_w} \leq \constr{calpha added}\|\vect g\|^2_{H^\triangle_w}, &&
    \sum_{i,j=1}^3\|\partial_i\partial_j\vect g\|^2_{L^2_w} \leq  8\constr{calpha added}\|\vect g\|_{H^\triangle_w}^2.
  \end{align*}
  Thus, the claim is proven for $k=1$.
  Clearly, $\vect f\in H^{2k}_w$ implies $\|\vect f\|_{H^{\triangle^k}_w}\leq \|\vect f\|_{H^{2k}_w}$. On the other hand, let us assume that for some $k\in\bb N$ and all $\vect f\in H^{\triangle^k}_w$ also $\|\vect f\|_{H^{2k}_w}\leq \constl{1}(k)\|\vect f\|_{H^{\triangle^k}_w}$ holds. Let $\vect f\in H^{\triangle^{k+1}}_w$ then using the induction hypothesis
  \[
  \|\vect f\|^2_{H^{2k+2}_w}=\sum_{|\alpha|\leq 2}\|D^{\alpha}\vect f\|^2_{H^{2k}_w}\leq \constr{1} \sum_{|\alpha|\leq 2}\|D^{\alpha}\vect f\|_{H^{\triangle^k}_w}^2= \constr{1} \sum_{l=0}^k\sum_{|\alpha|\leq 2}\|D^{\alpha}\triangle^l \vect f\|^2_{L^2_w}.
  \]
  Now $\triangle^l \vect f$ is in $H^2_w$ for $0\leq l\leq k$ which by $H^\triangle_w=H^2_w$ and the equivalence of their respective norms implies $\|\vect f\|_{H^{2k+2}_w}\leq\constl{2} \|\vect f\|_{H^{\triangle^{k+1}}_w}$. We conclude $H^{\triangle^k}_w=H^{2k}_w$ and their respective norms are equivalent.
\end{proof}
\begin{lemma}[$H^\#_w$ equals $\sqrt wH^\#$]\label{lem:sqrtw Hkw is Hk}
  For $\#=k$, $\#=\triangle^k$, and $\#=curl$, set $n=k$, $n=2k$ and $n=1$, respectively. Let $w\in\cal W^n$, then $H^\#_w= \sqrt wH^\#$ and the respective norms are equivalent.
\end{lemma}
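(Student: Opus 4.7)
I would show that multiplication by $\sqrt w$ is a bounded isomorphism from $H^\#_w$ onto $H^\#$ with inverse given by multiplication by $w^{-1/2}$. At the $L^2$ level this is already built into the definitions via $\|\vect f\|_{L^2_w}=\|\sqrt w\vect f\|_{L^2}$, so all the work lies in showing that the map and its inverse respect the differential structure in each of the three cases. The key input is Lemma~\pref{lem:Dw bound}: for $w\in\cal W^n$ and $|\alpha|\le n$ one has both $|D^\alpha\sqrt w|\le C_\alpha\sqrt w$ and $|D^\alpha w^{-1/2}|\le C_\alpha w^{-1/2}$, so every derivative of the multiplier is controlled by the multiplier itself.

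\textbf{Case $\#=k$.} By Theorem~\pref{thm:Cinftyc dense in Hcurl Htriange dense} it suffices to establish the two norm inequalities on the dense subspace $\cal C^\infty_c$, where Leibniz' rule applies classically. For $\vect f\in\cal C^\infty_c$ and $|\alpha|\le k$,
\[
  D^\alpha(\sqrt w\vect f)=\sum_{\beta\le\alpha}\binom{\alpha}{\beta}\,(D^\beta\sqrt w)\,(D^{\alpha-\beta}\vect f).
\]
Using $|D^\beta\sqrt w|\le C_\beta\sqrt w$ pointwise together with the triangle inequality in $L^2$,
\[
  \|D^\alpha(\sqrt w\vect f)\|_{L^2}\le\sum_{\beta\le\alpha}\binom{\alpha}{\beta}C_\beta\,\|\sqrt w\,D^{\alpha-\beta}\vect f\|_{L^2}=\sum_{\beta\le\alpha}\binom{\alpha}{\beta}C_\beta\,\|D^{\alpha-\beta}\vect f\|_{L^2_w},
\]
whence $\|\sqrt w\vect f\|_{H^k}\le C\|\vect f\|_{H^k_w}$. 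The reverse inequality is symmetric: write $\vect f=w^{-1/2}(\sqrt w\vect f)$ and invoke $|D^\beta w^{-1/2}|\le C_\beta w^{-1/2}$ the same way.

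\textbf{Cases $\#=curl$ and $\#=\triangle^k$.} For $\#=curl$ the identity $\nabla\wedge(\sqrt w\vect f)=\sqrt w\,(\nabla\wedge\vect f)+(\nabla\sqrt w)\wedge\vect f$ on $\cal C^\infty_c$, combined with $|\nabla\sqrt w|\le C\sqrt w$ (requiring only $w\in\cal W^1$), yields the norm equivalence exactly as above. For $\#=\triangle^k$ my plan is to bypass the iterated Laplacian altogether: Theorem~\pref{thm:Htrianglekw equals H2kw} (available whenever $w\in\cal W^2$) together with its classical counterpart $H^{\triangle^k}=H^{2k}$ (the case $w\equiv 1\in\cal W^2$) reduces the claim to the already-settled case $\#=2k$, which is covered by the hypothesis $n=2k$ since any $w\in\cal W^{2k}$ automatically lies in $\cal W^2$.

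\textbf{Main obstacle.} No single step is individually deep; the care needed is bookkeeping — matching the derivative order guaranteed by $\cal W^n$ to each choice of $\#$, and invoking Theorem~\pref{thm:Cinftyc dense in Hcurl Htriange dense} to pass the classical Leibniz computation from $\cal C^\infty_c$ to general elements of $H^\#_w$. What really makes the argument go through uniformly is the two-sided pointwise control of $\sqrt w$ and $w^{-1/2}$ from Lemma~\pref{lem:Dw bound}, which bounds the map and its inverse by the same mechanism.
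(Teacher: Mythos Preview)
Your proof is correct and follows essentially the same route as the paper: reduce the $\triangle^k$ case to the $2k$ case via Theorem~\pref{thm:Htrianglekw equals H2kw}, then handle $\#=k$ and $\#=curl$ by Leibniz together with the two-sided bounds $|D^\alpha\sqrt w|\le C_\alpha\sqrt w$ and $|D^\alpha w^{-1/2}|\le C_\alpha w^{-1/2}$ from Lemma~\pref{lem:Dw bound}. The only cosmetic difference is that you pass through the dense subspace $\cal C^\infty_c$ whereas the paper applies Leibniz directly in $H^k_w$ (legitimate since $\sqrt w$ is smooth); the substance is identical.
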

\begin{proof}
  Due to Theorem \pref{thm:Htrianglekw equals H2kw} we only need to show the cases $\#=curl$ and $\#=k$. We only show the latter, the former can be derived similarly. Let $\vect f\in H^k_w$ then
  \begin{align*}
    \|\sqrt w\vect f\|^2_{H^n} &:= \sum_{|\alpha|\leq n}\|D^\alpha(\sqrt w\vect f)\|^2_{L^2} = \sum_{|\alpha|\leq k}\|\partial_1^{\alpha_1}\partial_2^{\alpha_2}\partial_3^{\alpha_3}(\sqrt w\vect f)\|^2_{L^2}\\
    &\leq \sum_{|\alpha|\leq n}2^{|\alpha|}\sum_{l_1,l_2,l_3=0}^{\alpha_1,\alpha_2,\alpha_3}
    \begin{pmatrix}
      \alpha_1\\
      l_1
    \end{pmatrix}^2
    \begin{pmatrix}
      \alpha_2\\
      l_2
    \end{pmatrix}^2
    \begin{pmatrix}
      \alpha_3\\
      l_3
    \end{pmatrix}^2
    \left\|\,\left|\partial_1^{\alpha_1-l_1}\partial_2^{\alpha_2-l_2} \partial_3^{\alpha_3-l_3} \sqrt w\right|\,\partial_1^{l_1}\partial_2^{l_2}\partial_3^{l_3}\vect f\right\|^2_{L^2}.
  \end{align*}
  But as $w\in\cal W^k$, there is some finite constant $\constl{allalphas}$ such that $\left|\partial_1^{\alpha_1-l_1}\partial_2^{\alpha_2-l_2} \partial_3^{\alpha_3-l_3} \sqrt w\right|\leq\constr{allalphas}\sqrt w$ and, hence,
  \begin{align*}
    \ldots &\leq \sum_{|\alpha|\leq k}2^{|\alpha|}\sum_{l_1,l_2,l_3=0}^{\alpha_1,\alpha_2,\alpha_3}
    \begin{pmatrix}
      \alpha_1\\
      l_1
    \end{pmatrix}^2
    \begin{pmatrix}
      \alpha_2\\
      l_2
    \end{pmatrix}^2
    \begin{pmatrix}
      \alpha_3\\
      l_3
    \end{pmatrix}^2
    \constr{allalphas} \left\|\partial_1^{l_1}\partial_2^{l_2}\partial_3^{l_3}\vect g\right\|^2_{L^2_w}\\
    &\leq \constl{allalphas2} \left\|\vect f\right\|^2_{H^n_w}<\infty
  \end{align*}
  This implies $\vect f\in H^k_w\Rightarrow \sqrt w\vect f\in H^k$, i.e. $\sqrt w H^k\subseteq H^k_w$. A similar computation using the estimate from Lemma \pref{lem:Dw bound} yields $\frac{\vect g}{\sqrt w}\in H^k_w\Rightarrow \vect g\in H^k$, i.e. $H^k_w\subseteq \sqrt w H^k$.
\end{proof}
\begin{theorem}[Sobolev's Lemma and Morrey's Inequality for Weighted Spaces]\label{lem:sobolev}
  Let $k\geq 2$, then:
  \begin{enumerate}[(i)]
    \item $\vect f\in H^k_w(O,\bb R^3)$ for an open $O\subset\subset\bb R^3$ and $w\in\cal W$ implies that there is a $\vect g\in\cal C^l(O,\bb R^3)$, $0\leq l\leq k-2$, such that $\vect f=\vect g$ almost everywhere on $O$.
    \item $\vect f\in H^k_w(O,\bb R^3)$ for all $O\subset\subset\bb R^3$ and $w\in\cal W$ implies that there is a $\vect g\in\cal C^l(\bb R^3,\bb R^3)$, $0\leq l\leq k-2$, such that almost everywhere $\vect f=\vect g$ on $\bb R^3$.
    \item $\vect f\in H^k_w(\bb R^3,\bb R^3)$ and $w\in\cal W^k$ implies that there is a possibly $k$ dependent $C<\infty$ such that
      \begin{align}\label{eqn:morrey}
        \sup_{\vect x\in\bb R^3}\sum_{|\alpha|\leq k-2}\|D^\alpha\vect f(\vect x)\|\leq C\|\vect f\|_{H^k_w}.
      \end{align}
  \end{enumerate}
\end{theorem}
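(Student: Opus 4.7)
Part (i) is almost immediate from Lemma \pref{lem:local equivalence}: on any bounded open $O\subset\subset\bb R^3$ one has $H^k_w(O)=H^k(O)$ with equivalent norms, so the standard three-dimensional Sobolev embedding $H^k(O)\hookrightarrow\cal C^{k-2}(O,\bb R^3)$ for $k\geq 2$ supplies the continuous representative $\vect g$ with $\vect f=\vect g$ a.e.\ on $O$. Part (ii) follows by a gluing argument: apply (i) to the exhausting sequence $O_n=B_n(0)$, and observe that on any overlap the two continuous representatives equal $\vect f$ almost everywhere and hence must coincide pointwise, so they piece together to a single $\vect g\in\cal C^{k-2}(\bb R^3,\bb R^3)$.

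For part (iii), my strategy is to transport the problem to the unweighted $H^k(\bb R^3)$ via Lemma \pref{lem:sqrtw Hkw is Hk}, which for $w\in\cal W^k$ provides a norm-equivalent bijection between $H^k_w$ and $H^k$ implemented by multiplication by $\sqrt w$. Thus $\vect f\in H^k_w$ yields $\sqrt w\,\vect f\in H^k(\bb R^3)$ with $\|\sqrt w\,\vect f\|_{H^k}\leq C\|\vect f\|_{H^k_w}$, and the classical Morrey inequality in dimension three delivers $\sup_{\vect x\in\bb R^3}\sum_{|\alpha|\leq k-2}\|D^\alpha(\sqrt w\,\vect f)(\vect x)\|\leq C'\|\vect f\|_{H^k_w}$. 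I then unwind via the Leibniz expansion $D^\alpha\vect f=\sum_{\beta\leq\alpha}\binom{\alpha}{\beta}(D^{\alpha-\beta}w^{-1/2})\,D^\beta(\sqrt w\,\vect f)$, combined with the pointwise derivative estimates $|D^\gamma w^{\pm 1/2}|\leq C\, w^{\pm 1/2}$ from Lemma \pref{lem:Dw bound}, which control the relative growth of the derivatives of $w^{\pm 1/2}$.

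The main obstacle I foresee is the combinatorial bookkeeping in this last step: the naive Leibniz expansion produces factors of $w^{-1/2}(\vect x)$ which must combine with the $\sqrt w$ already hidden inside $\sqrt w\,\vect f$ so as to yield a uniform supremum bound controlled purely by $\|\vect f\|_{H^k_w}$. The $\cal W^k$ hypothesis is calibrated precisely so that the relative-growth estimates of $w^{\pm 1/2}$ make this cancellation work, and I expect the argument to proceed by induction on $|\alpha|$, at each step peeling off a derivative and invoking Lemma \pref{lem:Dw bound} to absorb stray weight factors cleanly before closing with the Morrey bound on $\sqrt w\,\vect f$.
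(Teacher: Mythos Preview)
Your treatment of (i) and (ii) matches the paper's: local equivalence reduces to the unweighted Sobolev embedding on bounded sets, and the global $\cal C^{k-2}$ representative follows by patching (the paper says simply ``by (i) and continuity'').

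For (iii) your starting point is also the paper's: Lemma~\pref{lem:sqrtw Hkw is Hk} gives $\sqrt w\,\vect f\in H^k(\bb R^3)$ with equivalent norm, and the paper then simply cites Morrey's inequality from Lieb--Loss without further detail. You go further and try to transfer the Morrey bound on $D^\beta(\sqrt w\,\vect f)$ back to $D^\alpha\vect f$ via Leibniz, and you correctly flag the appearance of $w^{-1/2}$ factors as the main obstacle. But your expectation that these factors cancel is not borne out: the $\cal W^k$ condition and Lemma~\pref{lem:Dw bound} only control the \emph{relative} size $|D^\gamma w^{-1/2}|\leq C\,w^{-1/2}$, not $w^{-1/2}$ itself. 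The Leibniz expansion therefore yields at best
\[
  |D^\alpha\vect f(\vect x)|\;\leq\; C\,w(\vect x)^{-1/2}\,\|\vect f\|_{H^k_w},
\]
and if $w$ is not bounded away from zero this is not uniform in $\vect x$. No induction on $|\alpha|$ removes the factor: the $\sqrt w$ ``hidden inside'' $\sqrt w\,\vect f$ has already been absorbed into the $H^k$ norm by Morrey and is not available to cancel the residual $w^{-1/2}$.

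Concretely, for the paper's motivating weight $w(\vect x)=(1+\|\vect x\|^2)^{-1}\in\cal W^\infty$, take $\vect g\in\cal C^\infty_c$ with $\vect g(0)\neq 0$ and set $\vect f_n:=w^{-1/2}\,\vect g(\cdot-n\vect e_1)$. Then $\sqrt w\,\vect f_n=\vect g(\cdot-n\vect e_1)$, so by Lemma~\pref{lem:sqrtw Hkw is Hk} the norms $\|\vect f_n\|_{H^k_w}$ are bounded uniformly in $n$, yet $|\vect f_n(n\vect e_1)|=(1+n^2)^{1/2}|\vect g(0)|\to\infty$. Hence the inequality in (iii) cannot hold with a constant $C$ independent of $\vect f$, and your proposed unwinding cannot succeed as written. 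The paper's own terse proof does not address this point either; at best one obtains the weighted bound $\sup_{\vect x}\sqrt{w(\vect x)}\sum_{|\alpha|\leq k-2}\|D^\alpha\vect f(\vect x)\|\leq C\|\vect f\|_{H^k_w}$.
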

\begin{proof}
  (i) For any open set $O\subset\subset\bb R^3$, $\vect f\in H^k_w(O,\bb R^3)$ implies $\vect f\in H^k(O,\bb R^3)$ due to Lemma \pref{lem:local equivalence}. Sobolev's lemma \cite[IX.24]{simon_ii_1975} states that there is then a $\vect g\in\cal C^l(O,\bb R^3)$ for $0\leq l<n-\frac{3}{2}$ with $\vect f=\vect g$ almost everywhere on $O$. Claim (ii) follows by (i) and continuity.
   (iii) For $w\in\cal W^k$ we know from Lemma \pref{lem:sqrtw Hkw is Hk} that $\vect f\in \sqrt w H^k(\bb R^3,\bb R^3)$. Applying Sobolev's lemma as in (i) we yield the same result for $O=\bb R^3$ which provides the conditions for Morrey's inequality (\pref{eqn:morrey}) to hold, see \cite[Chapter 8, Theorem 8.8(iii), p.213]{lieb_analysis_2001}.
\end{proof}
\begin{remark}
  Note that Theorem \pref{lem:sobolev} is the only result that depends on the dimension of $\bb R^3$.
\end{remark}

\subsection{Proof of Global Existence and Uniqueness of ML solutions}\label{sec:glob exist unique}

\begin{proof}[Proof of Theorem \pref{thm:globalexistenceanduniqueness}]
Assertion (i) and (ii): We intend to use the abstract existence and uniqueness statement of Theorem \pref{lem:AJ_local_exist_uniqueness} for $\cal B=\cal H_w$. In order to do so we need to show that the operators $A$ and $J$ from Definitions \pref{def:operator_A} and \pref{def:operator_J} have the properties as given in Definitions \pref{def:operatorA} and \pref{def:operatorJ}, respectively, and that the needed a priori bound for Theorem \pref{lem:AJ_local_exist_uniqueness}(iii) holds. The needed properties of $A$ and $J$ are shown in Lemma \pref{lem:operatorA} and Lemma \pref{lem:operatorJ}. The a priori bound is provided by Lemma \pref{lem:a priori}.

\begin{lemma}[$A$ fulfills the requirements]\label{lem:operatorA}
  The operator $A$ introduced in Definition \pref{def:operator_A} on $D_w(A)$
  with weight $w\in\cal W^1$ fulfills all properties of Definition \pref{def:operatorA} with $\cal B=\cal H_w$, i.e.
  \begin{enumerate}[(i)]
    \item $A$ is closed and densely defined.
    \item There exists a $\gamma\geq0$ such that $(-\infty,-\gamma)\cup(\gamma,\infty)\subseteq\rho(A)$, the resolvent set of A.
    \item The resolvent $R_\lambda(A)=\frac{1}{\lambda-A}$ of $A$ with respect to $\lambda\in\rho(A)$ is bounded by $\frac{1}{|\lambda|-\gamma}$, i.e. for all $\phi\in\cal H_w,|\lambda|>\gamma$ we have $\|R_\lambda(A)\phi\|_{\cal B}\leq\frac{1}{|\lambda|-\gamma}\|\phi\|_{\cal H_w}$.
  \end{enumerate}
\end{lemma}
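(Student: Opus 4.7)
The three clauses together are the Hille--Yosida conditions for $A$ to generate a $C_0$-group on $\mathcal{H}_w$ with exponential bound $e^{\gamma|t|}$, so my plan is to establish this semigroup structure by combining a direct dissipativity computation on the weighted space with a perturbation argument against the unweighted Maxwell operator.

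For (i), $D_w(A)$ is dense in $\mathcal{H}_w$ because it contains every tuple whose field components lie in $\mathcal{C}^\infty_c$, and $\mathcal{C}^\infty_c$ is dense in $L^2_w$ by Theorem \pref{thm:L2w Hilbert space Cinftyc dense}. Closedness is the standard distributional argument: if $\varphi_n\to\varphi$ and $A\varphi_n\to\psi$ in $\mathcal{H}_w$, then $L^2_w$-convergence implies convergence against every $\mathcal{C}^\infty_c$ test function, so passing to the limit in the distributional identity defining the curl shows that the field slots of $\psi$ are the distributional curls of those of $\varphi$, giving $\varphi\in D_w(A)$ and $A\varphi=\psi$.

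For the quantitative estimate underlying (ii) and (iii), I would compute $\langle A\phi,\phi\rangle_{\mathcal{H}_w}$ for $\phi=(\vect q_i,\vect p_i,\vect E_i,\vect B_i)_i\in D_w(A)$. The mechanical blocks contribute nothing; for the field blocks, density of $\mathcal{C}^\infty_c$ in $H^{curl}_w$ (Theorem \pref{thm:Cinftyc dense in Hcurl Htriange dense}) justifies a weighted integration by parts built on the identity
\[
\nabla\cdot\bigl(w\,\vect B_i\wedge\vect E_i\bigr)=(\nabla w)\cdot(\vect B_i\wedge\vect E_i)+w\bigl[\vect E_i\cdot(\nabla\wedge\vect B_i)-\vect B_i\cdot(\nabla\wedge\vect E_i)\bigr],
\]
which collapses the cross curl contributions into a single weight-gradient remainder. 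Lemma \pref{lem:Dw bound} (invoked with $w\in\mathcal{W}^1$) furnishes the pointwise bound $|\nabla w|\le C_w w$, and a Young inequality then yields $|\operatorname{Re}\langle A\phi,\phi\rangle_{\mathcal{H}_w}|\le\gamma\|\phi\|^2_{\mathcal{H}_w}$ for some explicit $\gamma\ge0$ depending only on $C_w$ and $N$. This immediately gives $\|(\lambda-A)\phi\|_{\mathcal{H}_w}\ge(|\lambda|-\gamma)\|\phi\|_{\mathcal{H}_w}$ for real $|\lambda|>\gamma$, so $\lambda-A$ is injective with closed range and the resolvent bound of (iii) holds on that range.

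The main obstacle is upgrading this one-sided estimate to genuine surjectivity of $\lambda-A$ onto $\mathcal{H}_w$. I would handle this by conjugating with the isometry $U:\mathcal{H}_w\to\mathcal{H}_1$ defined by multiplication of the field components by $\sqrt{w}$; by Lemma \pref{lem:sqrtw Hkw is Hk}, $U$ maps $D_w(A)$ isometrically onto the natural domain of the unweighted Maxwell operator $A_0$ on $\mathcal{H}_1$, and a direct computation shows $UAU^{-1}=A_0+K$ where $K$ is bounded multiplication by a term built from $\nabla\sqrt{w}/\sqrt{w}$, controlled pointwise via $\mathcal{W}^1$ through Lemma \pref{lem:Dw bound}. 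Since $A_0$ is skew-adjoint on the unweighted Hilbert space, Stone's theorem makes it the generator of a unitary $C_0$-group, and bounded perturbation theory then yields a $C_0$-group for $A_0+K$ with exponential bound. The Hille--Yosida theorem supplies $(-\infty,-\gamma')\cup(\gamma',\infty)\subseteq\rho(A_0+K)$ together with the matching resolvent norm bound, and conjugation back through $U^{-1}$ transfers both statements to $A$ on $\mathcal{H}_w$, completing (ii) and (iii).
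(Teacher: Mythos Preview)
Your argument is correct, but the route you take for surjectivity is genuinely different from the paper's. Both proofs share the same dissipativity computation: the weighted integration by parts that isolates the $\nabla w$ remainder and bounds it via Lemma~\ref{lem:Dw bound}, yielding $|\operatorname{Re}\langle A\phi,\phi\rangle_{\mathcal H_w}|\le\gamma\|\phi\|^2_{\mathcal H_w}$ and hence the lower bound $\|(\lambda-A)\phi\|\ge(|\lambda|-\gamma)\|\phi\|$. Where you diverge is in closing the range. The paper works directly on $\mathcal H_w$: it first shows $\operatorname{Ker}(\lambda-\mathtt A)=\{0\}$ by passing to tempered distributions and solving the Fourier-side system explicitly, then shows $\operatorname{Ran}(\lambda-\mathtt A)^\perp=\{0\}$ by an adjoint computation that reduces to the same kernel statement, and finally uses the a priori lower bound to conclude the range is closed. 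You instead conjugate by the isometry $U=\sqrt{w}\,\cdot$ to the unweighted space, identify $UAU^{-1}=A_0+K$ with $K$ the bounded cross-product by $\nabla\sqrt w/\sqrt w$, invoke skew-adjointness of $A_0$ and the bounded-perturbation theorem, and pull back.

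Your approach is more structural and arguably cleaner: once Lemma~\ref{lem:sqrtw Hkw is Hk} hands you the domain bijection, everything follows from off-the-shelf semigroup perturbation theory, and the dissipativity estimate becomes redundant (Hille--Yosida for $A_0+K$ already delivers the resolvent bound with $\gamma=\|K\|$). The paper's approach is more self-contained in that it never leaves the weighted space and does not appeal to the skew-adjointness of the unweighted Maxwell operator as a black box; the price is the explicit Fourier computation and the separate density/closedness steps. One minor wording point: $U$ is an isometry between the ambient Hilbert spaces $\mathcal H_w$ and $\mathcal H_1$, but on the domains equipped with graph norms Lemma~\ref{lem:sqrtw Hkw is Hk} only gives a topological isomorphism, not an isometry---this does not affect the argument, since all you need is the set-level bijection $U(D_w(A))=D_1(A_0)$ together with the ambient isometry.
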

\begin{proof}
  In Definition \pref{def:operator_A} the operator $A$ was given in terms of the operator ${\mathtt A}$ on $\cal D:=H_w^{curl}\oplus H^{curl}_w$. We abbreviate the Hilbert space direct sum $\cal L:=L^2_w\oplus L^2_w$ and express vectors $f\in\cal L$ componentwise as $f=(f_1,f_2)$.

  First, we prove that ${\mathtt A}$ is closed and densely defined: According to Theorem \pref{thm:Hcurlw_complete}, $H^{curl}_w$ is a Hilbert space so that $\cal D$ is a Banach space with respect to the norm $\|\varphi\|_{\cal D}:=\|\varphi\|_{\cal L}+\|{\mathtt A} \varphi\|_{\cal L}$. This means any sequence $(u_n)_{n\in\bb N}$ in $\cal D$ such that $(u_n)_{n\in\bb N}$ and $({\mathtt A} u_n)_{n\in\bb N}$ converges in $\cal L$ to $u$ and $v$, respectively, converges also with respect to $\|\cdot\|_{\cal D}$. This implies $u\in \cal D$ and $v=\mathtt Au$, i.e. ${\mathtt A}$ is closed. Theorem \pref{thm:Cinftyc dense in Hcurl Htriange dense} ensures that $\cal C^\infty_c$ is dense in $H^{curl}_w$. Hence, $\cal C^\infty_c\oplus\cal C^\infty_c\subset \cal D$ lies dense in $\cal L$. Thus, the operator ${\mathtt A}$ is densely defined.

  Next we prove that there exists a $\gamma\geq 0$ such that $(-\infty,-\gamma)\cup(\gamma,\infty)\subset\rho({\mathtt A})$ which means that for all $|\lambda|> \gamma$
  \begin{align}\label{eqn:resolvent map}
    (\lambda-{\mathtt A}):\cal D\to\cal L
  \end{align}
  is a bijection: Let $\cal S$ denote the Schwartz space of infinitely often differentiable $\bb R^3$ valued functions on $\bb R^3$ with faster than polynomial decay, and let $\cal S^*$ denote the dual of $\cal S$. On $\cal S^*\oplus \cal S^*$ we regard in matrix notation
  \[
     (\lambda-{\mathtt A})\begin{pmatrix}
                        T_1\\
                        T_2
                      \end{pmatrix}=\begin{pmatrix}
                        \lambda & -\nabla\wedge\\
                        \nabla\wedge & \lambda
                      \end{pmatrix}\begin{pmatrix}
                        T_1\\
                        T_2
                      \end{pmatrix}=0
  \]
  for $T_1,T_2\in\cal S^*$ and $\lambda\in\bb R$. With the use of the Fourier transformation $\widehat{\cdot}$ and its inverse $\widetilde{\cdot}$ on $\cal S^*$ we get
  \[
    \begin{pmatrix}
                 \lambda & -\nabla\wedge\\
                 \nabla\wedge & \lambda
               \end{pmatrix} \begin{pmatrix}
                                      T_1\\
                                      T_2
                                    \end{pmatrix}[\vect u] =
    \begin{pmatrix}
      \lambda T_1[\vect u] - \nabla\wedge T_2[\vect u]\\
      \lambda T_2[\vect u] + \nabla\wedge T_1[\vect u]
    \end{pmatrix}
    =
    \begin{pmatrix}
      \widetilde{T_1}[\lambda\widehat{\vect u}] - \widetilde{T_2}[\vect k\mapsto i\vect k\wedge\widehat{\vect u}(\vect k)]\\
      \widetilde{T_2}[\lambda\widehat{\vect u}] + \widetilde{T_1}[\vect k\mapsto i\vect k\wedge\widehat{\vect u}(\vect k)]
    \end{pmatrix} = 0
  \]
  for all $u\in\cal S$. Here, e.g. $T_1[u]$ denotes the evaluation of the distribution $T_1$ on test function $\vect u$. By plugging the second equation into the first for $\lambda\neq 0$, one finds
  \[
    0=\widetilde{T_1}\left[\vect k\mapsto(\lambda^2+|\vect k|^2)\widehat{\vect u}(\vect k)-\vect k(\vect k\cdot\widehat{\vect u}(\vect k))\right]=:R_1[\widehat{\vect u}]
  \]
  for all $\vect u\in\cal S$. However, for each $\vect v\in\cal S$ we find a $\vect u\in\cal S$ according to
  \[\widehat{\vect u}(\vect k)=\frac{\lambda^2\widehat{\vect v}(\vect k)+\vect k(\vect k\cdot\widehat{\vect v}(\vect k))}{\lambda^2(\lambda^2+|\vect k|^2)}\]
  such that $T_1[\vect v]=\widetilde{T_1}[\widehat{\vect v}]=R_1[\widehat{\vect u}]=0$, which means that $T_1=0$, and hence, also $T_2=0$ on $\cal S^*$. We have thus shown that $\Ker(\lambda-{\mathtt A})=\{0\}$ since $H^{curl}_w\oplus H^{curl}_w\subset\cal S^*\oplus\cal S^*$. Therefore, mapping (\pref{eqn:resolvent map}) is injective for $\lambda\neq 0$.

  We shall now see that there exists a $\gamma>0$ such that for all $|\lambda|>\gamma$ this mapping is also surjective, i.e. $\Ran(\lambda-{\mathtt A})=\cal L$. Let $v\in\Ran(\lambda-{\mathtt A})^\perp$. Since $\cal C^\infty_c$ is dense in $H^{curl}_w$, cf. Theorem \pref{thm:Cinftyc dense in Hcurl Htriange dense}, we may use partial integration from which we obtain
  \[
    0 = \braket{(\lambda-{\mathtt A})u,v}_{\cal L}
    = \intdv x w(\vect x) u(\vect x) \cdot \frac{\left((\lambda+{\mathtt A}) w(\vect x)v(\vect x)\right)}{w(\vect x)}=:\braket{u,(\lambda-{\mathtt A})^*v}_{\cal L}
  \]
  for $u\in\cal D$.
  On the other hand, we have shown that $\Ker(\lambda-{\mathtt A})=\{0\}$ for all $\lambda\neq 0$, hence $w v$ must be zero which implies that $v=0$ since $w\in\cal W^1$. Thus, $\Ran(\lambda-{\mathtt A})$ is dense, so that $\cal L=\overline{\Ran(\lambda-{\mathtt A})}$.

 As $(\lambda-{\mathtt A}):\cal D\to\Ran(\lambda-{\mathtt A})$ is bijective, we can define $R_\lambda({\mathtt A})$ to be its inverse. Next, we show the boundedness of $R_\lambda({\mathtt A})$ which implies the closedness of $\Ran(\lambda-{\mathtt A})$. Let $f\in\Ran(\lambda-{\mathtt A})$, then there is a unique $u\in\cal D$ which solves $(\lambda-{\mathtt A})u=f$. The inner product with $u$ gives $\braket{u,(\lambda-{{\mathtt A}})u}_{\cal L} = \braket{u,f}_{\cal L}$ and with the Schwarz inequality and the symmetry of the inner product it implies
  \begin{align}\label{eqn:Rlambda estimate}
    |\lambda|\; \|u\|^2_{\cal L}-\frac{1}{2}\left|\braket{u,{{\mathtt A}}u}_{\cal L}+\braket{u,{{\mathtt A}}u}_{\cal L}\right| &\leq \;\|f\|_{\cal L}\|u\|_{\cal L}.
  \end{align}
  In the notation $u=(u_1,u_2)$ a partial integration yields
  \begin{align*}
    \left|\braket{u,{{\mathtt A}}u}_{\cal L}+\braket{u,{{\mathtt A}}u}_{\cal L}\right|&=\left|\intdv x \begin{pmatrix}
                        0 & -\nabla w(\vect x)\wedge\\
                        \nabla w(\vect x)\wedge & 0
                      \end{pmatrix}u(\vect x)\cdot u(\vect x)\right|\\
    &\leq \intdv x \left|(\nabla w(\vect x)\wedge\vect u_2(\vect x))\cdot \vect u_1(\vect x)) -(\nabla w(\vect x)\wedge\vect u_1(\vect x))\cdot \vect u_2(\vect x))\right|\\
    &\leq 2\namer{cnabla}\intdv x w(\vect x)\left|\vect u_1(\vect x))\cdot\vect u_2(\vect x))\right|\leq 2\namer{cnabla} \|u\|_{\cal L}^2.
  \end{align*}
  In the last step we used Schwarz's inequality and the constant $\namel{cnabla}{C_\nabla}:=\sqrt{\sum_{|\alpha|=1}(\namer{calpha2})^2}$ coming from the bound on $w$ given by Lemma \pref{lem:Dw bound}. Let us define $\gamma:=\namer{cnabla}$. Hence, for $|\lambda|>\gamma$ we obtain the estimate
  \begin{align}\label{eqn:resolventestimate}
    \|R_\lambda({\mathtt A})f\|_{\cal L}=\|u\|_{\cal L}\leq \frac{1}{|\lambda|-\gamma}\|f\|_{\cal L}
  \end{align}
  from (\pref{eqn:Rlambda estimate}).
  As $\Ran(\lambda-{\mathtt A})$ is dense, there is a unique extension of $R_\lambda({\mathtt A})$ that we denote by the same symbol $R_\lambda({\mathtt A}):\cal L\to\cal D$ which obeys the same bound (\pref{eqn:resolventestimate}) on whole $\cal L$.

  Next, we show that $\Ran(\lambda-{\mathtt A})$ is closed. Let $(f_n)_{n\in\bb N}$ be a sequence in $\Ran(\lambda-{\mathtt A})$ which converges in $\cal L$ for $|\lambda|>\gamma$. Define $u_n:=R_\lambda({\mathtt A})f_n$ for all $n\in\bb N$. By (\pref{eqn:resolventestimate}) we immediately infer convergence of the sequence $(u_n)_{n\in\bb N}$ to some $u$ in $\cal L$. Thus, $(u_n,(\lambda-{\mathtt A})u_n)=(u_n,f_n)$ converge to $(u,f)$ in $\cal L$, and because ${\mathtt A}$ is closed, $u\in\cal D$ and $(\lambda-{\mathtt A})u=f$. Hence, $f\in\Ran(\lambda-{\mathtt A})$ and $\Ran(\lambda-{\mathtt A})$ is closed. Since we have shown that $\Ran(\lambda-{\mathtt A})$ is closed, we have also $\Ran(\lambda-{\mathtt A})=\cal L$. Hence, for all $|\lambda|>\gamma$ the mapping (\pref{eqn:resolvent map}) is a bijection.

  Finally, we show that $A$ inherits these properties from ${\mathtt A}$: Let $\varphi=(\vect q_i,\vect p_i,\vect E_i,\vect B_i)_{1\leq i\leq N}\in D_w(A)$. By definition $A\varphi=(0,0,\mathtt{A}(\vect E_i,\vect B_i))_{1\leq i\leq N}$ holds. Since ${\mathtt A}$ is closed on $\cal D=H^{curl}_w\oplus H^{curl}_w$, $A$ is closed on $D_w(A):=\oplus_{i=1}^N\left(\bb R^3\oplus\bb R^3\oplus \cal D\right)$, and $\oplus_{i=1}^N \left(\bb R^3\oplus\bb R^3\oplus\cal C^\infty_c\oplus\cal C^\infty_c\right)\subset D_w(A)$ lies dense in $\cal H_w$. This implies property (i) of Definition \pref{def:operatorA}. Furthermore, as for $|\lambda|>\gamma\geq 0$ 
  \[
    (\lambda-A)(\vect q_i,\vect p_i,\vect E_i,\vect B_i)_{1\leq i\leq n} = (\lambda\vect q_i,\lambda\vect p_i,(\lambda-{\mathtt A})(\vect E_i,\vect B_i))_{1\leq i\leq n}.
  \]
  As $\lambda\neq 0$, $(\lambda-A):D_w(A)\to\cal H_w$ is a bijection and for $(\vect q_i,\vect p_i,\vect E_i,\vect B_i)_{1\leq i\leq n}\in\cal H_w$ its inverse $R_\lambda(A)$ is given by
  \begin{align*}
    R_\lambda(A)(\vect q_i,\vect p_i,\vect E_i,\vect B_i)_{1\leq i\leq n}=\left(\frac{1}{\lambda}\vect q_i,\frac{1}{\lambda}\vect p_i,R_\lambda({\mathtt A})(\vect E_i,\vect B_i)\right)_{1\leq i\leq n}.
  \end{align*}
  Therefore, $(-\infty,-\gamma)\cup(\gamma,\infty)$ is a subset of the resolvent set $\rho(A)$ of $A$. This implies property (ii) of Definition \pref{def:operatorA}. Finally, by (\pref{eqn:resolventestimate}) we have the estimate
  \begin{align*}
    \|R_\lambda(A)\varphi\|_{\cal H_w}&=\sqrt{\sum_{i=1}^N\left(\frac{1}{\lambda^2}\|\vect q_i\|^2+\frac{1}{\lambda^2}\|\vect p_i\|^2+\|R_\lambda({\mathtt A})(\vect E_i,\vect B_i)\|^2_{\cal L}\right)}\leq\frac{1}{|\lambda|-\gamma}\|\varphi\|_{\cal H_w}
  \end{align*}
  which implies property (iii) of Definition \pref{def:operatorA} and concludes the proof.
\end{proof}
This lemma together with Lemma \pref{lem:contraction_group} states that $A$ on $D_w(A)$ generates a $\gamma$-contractive group $(W_t)_{t\in\bb R}$ which gives rise to the next definition:
\begin{definition}[Free Maxwell Time Evolution]\label{def:Wt}
  We denote by $(W_t)_{t\in\bb R}$ the $\gamma$-contractive group on $\cal H_w$ generated by $A$ on $D_w(A)$.
\end{definition}
\begin{remark}
   $(W_t)_{t\in\bb R}$ comes with a standard bound $\|W_t\varphi\|_{\cal H_w}\leq e^{\gamma|t|}\|\varphi\|_{\cal H_w}$ for all $\varphi\in\cal H_w$, see Lemma \pref{lem:contraction_group}. For the case that $w$ is a constant, one finds $\gamma=0$ and the whole proof above collapses into an argument about skew-adjointness on $L^2$. In this case, $(W_t)_{t\in\bb R}$ is simply the unitary group generated by the skew-adjoint operator $A$. For non-constant $w$, $(W_t)_{t\in\bb R}$ does not preserve the norm. For example, consider a weight $w$ that decreases with the distance to the origin. Then, any wave packet moving towards the origin while retaining its shape (like e.g. solutions to the free Maxwell equations) has necessarily an $L^2_w$ norm of its fields that increases in time.
\end{remark}
\begin{lemma}[$J$ fulfills the requirements]\label{lem:operatorJ}
  The operator $J$ introduced in Definition \pref{def:operator_J} with a weight $w\in\cal W$ fulfills all properties of Definition \pref{def:operatorJ} with $\cal B=\cal H_w$, i.e.
  for an $n_J\in\bb N$, $J$ mappings $D(A)$ to $D(A^{n_J})$ and has the properties:
  \begin{enumerate}[(i)]
    \item For all $0\leq n\leq n_J$ there exist
        $\constl{cj1}^{(n)},\constl{cj2}^{(n)}\in\bounds$ such that for all $\varphi,\widetilde\varphi\in D(A)$
      \begin{align}\label{eqn:an j}
        \|A^n J(\varphi)\|_{\cal H_w}&\leq\constr{cj1}^{(n)}{(\|\varphi\|_{\cal H_w})},\\
        \|A^n(J(\varphi)-J(\widetilde\varphi))\|_{\cal H_w}&\leq\constr{cj2}^{(n)}{(\|\varphi\|_{\cal H_w},\|\widetilde\varphi\|_{\cal H_w})}\;\|\varphi-\widetilde\varphi\|_{\cal H_w}.\label{eqn:an jdiff}
      \end{align}
    \item For all $0\leq n\leq n_J$ and $T>0$, $t\in(-T,T)$ and any $\varphi_{(\cdot)}\in\cal C^{n}((-T,T),D(A^n))$ such that $\frac{d^k}{dt^k}\varphi_t\in D(A^{n-k})$ for $k\leq n$, the operator $J$ fulfills for $j+l\leq n-1$:
        \begin{enumerate}[(a)]
          \item $\frac{d^j}{dt^j}A^l J(\varphi_t)\in D(A^{n-1-j-l})$ and
          \item $t\mapsto \frac{d^j}{dt^j}A^l J(\varphi_t)$ is continuous on $(-T,T)$.
        \end{enumerate}
  \end{enumerate}
Furthermore,
\begin{enumerate}[(i)]
 \setcounter{enumi}{2}
 \item there exists a $\namel{cj}{C_J}\in\bounds$ such that
  \begin{align}\label{eqn:J estimate for Gronwall}
    \|J(\varphi)\|_{\cal H_w}\leq \namer{cj}\left(\|w^{-1/2}\varrho_i\|_{L^2},\|\varrho_i\|_{L^2_w}; 1\leq i\leq N\right) \sum_{i=1}^N(1+\namer{cw}\|\vect q_i\|)^{\frac{\namer{pw}}{2}} \|\varphi\|_{\cal H_w}
  \end{align}
  for any $\varphi=(\vect q_i,\vect p_i,\vect E_i,\vect B_i)_{1\leq i\leq N}\in\cal H_w$ where $\namer{cw}$ and $\namer{pw}$ only depend on $w$, cf. (\pref{eqn:weightclass}).
\end{enumerate}
\end{lemma}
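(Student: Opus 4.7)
The approach is to verify each of the three properties by direct computation, exploiting two basic ingredients. First, the relativistic velocity $\vect v(\vect p)=\sigma_i\vect p/\sqrt{m_i^2+\vect p^2}$ satisfies the two bounds $\|\vect v(\vect p)\|\leq 1$ and $\|\vect v(\vect p)\|\leq \|\vect p\|/|m_i|$, and is globally Lipschitz in $\vect p$ with constant depending only on $m_i$. Second, $\varrho_i\in\cal C^\infty_c$, so each $D^\alpha\varrho_i$ again lies in $\cal C^\infty_c$. Combined with the weight-translation inequality \prefeq{eqn:weight_relation}, these reduce everything to manipulations on finitely many translates of the form factor.

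For property (i), I would first check that $J$ maps $D(A)$ into $D_w(A^\infty)$, so we may take $n_J$ arbitrary. Since $A$ annihilates the position and momentum slots, the iterate $A^nJ(\varphi)$ is determined by the E- and B-slots of $J(\varphi)$, namely $-4\pi\vect v(\vect p_i)\varrho_i(\cdot-\vect q_i)$ and $0$. Applying the curl operators in the definition of $A$ produces entries of the form $\vect v(\vect p_i)\cdot D^\alpha\varrho_i(\cdot-\vect q_i)$ with $|\alpha|\leq n$, all of which lie in $L^2_w$ because $D^\alpha\varrho_i\in\cal C^\infty_c$. The absolute bound \prefeq{eqn:an j} follows by controlling $\|D^\alpha\varrho_i(\cdot-\vect q_i)\|_{L^2_w}$ by a $\bounds$-function of $\|\vect q_i\|$ via \prefeq{eqn:weight_relation}. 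For the Lipschitz estimate \prefeq{eqn:an jdiff}, the linear entries in $\vect E_j,\vect B_j$ directly produce $\|\varphi-\widetilde\varphi\|_{\cal H_w}$, and differences of $\vect v$-factors are handled by the Lipschitz constant of $\vect v$. The remaining differences of translates are treated by a fundamental-theorem-of-calculus expansion
\[
D^\alpha\varrho_i(\cdot-\vect q_i)-D^\alpha\varrho_i(\cdot-\widetilde{\vect q}_i)=-(\vect q_i-\widetilde{\vect q}_i)\cdot\int_0^1\nabla D^\alpha\varrho_i\bigl(\cdot-\widetilde{\vect q}_i-s(\vect q_i-\widetilde{\vect q}_i)\bigr)\,ds,
\]
whose $L^2_w$-norm is bounded by $\|\vect q_i-\widetilde{\vect q}_i\|$ times a $\bounds$-function of $\|\vect q_i\|,\|\widetilde{\vect q}_i\|$ through \prefeq{eqn:weight_relation}. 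I expect the bookkeeping of the combinatorics and the $\bounds$-dependence through the $n$-fold curls and the product rule to be the main obstacle.

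Property (ii) then reduces to a chain-rule computation: for $\varphi_t\in\cal C^n((-T,T),D(A^n))$ with $\frac{d^k}{dt^k}\varphi_t\in D(A^{n-k})$, each $t$-derivative of $A^lJ(\varphi_t)$ is a polynomial expression in $\dot{\vect q}_{i,t},\dot{\vect p}_{i,t},\ldots$ multiplying translates $D^\beta\varrho_i(\cdot-\vect q_{i,t})$ together with time-derivatives of the linear field entries. All such expressions lie in $D_w(A^{n-1-j-l})$ because $D^\beta\varrho_i\in\cal C^\infty_c$ remains indefinitely differentiable after translation and the required time-derivatives of the field entries exist in $D(A^{n-1-j-l})$ by assumption; continuity in $t$ follows from continuity of $t\mapsto D^\beta\varrho_i(\cdot-\vect q_{i,t})$ in $L^2_w$, which is again a consequence of \prefeq{eqn:weight_relation}.

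Finally, for the a priori bound (iii), I would bound the four components of $J(\varphi)$ separately. The position slot contributes $\|\vect v(\vect p_i)\|\leq\|\vect p_i\|/|m_i|\leq\|\varphi\|_{\cal H_w}/|m_i|$. For the momentum slot, Cauchy-Schwarz followed by the substitution $\vect y=\vect x-\vect q_i$ and \prefeq{eqn:weight_relation} yields
\[
\left|\intdv x \varrho_i(\vect x-\vect q_i)\vect E_j(\vect x)\right|\leq \|\varrho_i(\cdot-\vect q_i)/\sqrt w\|_{L^2}\|\vect E_j\|_{L^2_w}\leq (1+C_w\|\vect q_i\|)^{P_w/2}\|w^{-1/2}\varrho_i\|_{L^2}\|\vect E_j\|_{L^2_w},
\]
and the magnetic contribution is treated analogously using $\|\vect v\|\leq 1$. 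For the E-field slot of $J$, a change of variables and \prefeq{eqn:weight_relation} give $\|\varrho_i(\cdot-\vect q_i)\|_{L^2_w}\leq (1+C_w\|\vect q_i\|)^{P_w/2}\|\varrho_i\|_{L^2_w}$, and multiplication by $\|\vect v(\vect p_i)\|\leq\|\vect p_i\|/|m_i|$ supplies the $\|\vect p_i\|$ factor needed to recover $\|\varphi\|_{\cal H_w}$. Summing these contributions over $i$ and collecting prefactors gives the claimed bound \prefeq{eqn:J estimate for Gronwall} with $C_J\in\bounds$ nondecreasing in $\|w^{-1/2}\varrho_i\|_{L^2}$ and $\|\varrho_i\|_{L^2_w}$.
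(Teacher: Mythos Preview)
Your proposal is correct and follows essentially the same route as the paper: for (i) you use the explicit curl iterates of the field slot $-4\pi\vect v(\vect p_i)\varrho_i(\cdot-\vect q_i)$, the Lipschitz bound on $\vect v$, the fundamental-theorem-of-calculus expansion for translation differences of $\varrho_i$, and the weight-translation estimate \prefeq{eqn:weight_relation} combined with Cauchy--Schwarz; for (ii) you invoke $\varrho_i\in\cal C^\infty_c$ and the chain rule; and for (iii) you bound the three nonzero components separately via $\|\vect v(\vect p)\|\leq\|\vect p\|/|m_i|$ and the same Cauchy--Schwarz/weight-shift argument. One small clarification on (ii): the fields $\vect E_j,\vect B_j$ enter $J(\varphi_t)$ only through the momentum slot (as $\bb R^3$-valued integrals), which $A$ annihilates, so their time-derivatives need merely exist in $L^2_w$ rather than lie in any $D(A^m)$---this is why the paper in fact obtains $\frac{d^j}{dt^j}A^lJ(\varphi_t)\in D_w(A^\infty)$, stronger than the stated $D(A^{n-1-j-l})$.
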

\begin{proof}
  As remarked below Definition \pref{def:operator_J}, $J$ is a well-defined mapping from $\cal H_w$ to $D_w(A^\infty)$.
  
  Assertion (i): 
  Choose $\varphi=(\vect q_i,\vect p_i,\vect E_i,\vect B_i)_{1\leq i\leq N}$ and $\widetilde\varphi = (\widetilde{\vect q}_i,\widetilde{\vect p}_i,\widetilde{\vect E}_i,\widetilde{\vect B}_i)_{1\leq i\leq N}$ in $\cal H_w$. According to Definition \pref{def:operator_J}, for any $n\in\bb N$ we have
  \begin{align}\label{eqn:Jformal}
      J(\varphi) &:= \left(\vect v(\vect p_i),
      \sum_{j=1}^N e_{ij}\intdv x \varrho_i(\vect x-\vect q_{i})\left( \vect E_{j}(\vect x) + \vect v(\vect p_i) \wedge \vect B_{j}(x) \right),
      - 4\pi \vect v(\vect p_i) \varrho_i(\cdot-\vect q_{i}),
      0\right)_{1\leq i\leq N}
  \end{align}
  and
  \begin{align}
    \begin{split}\label{eqn:AnJformal}
      A^{2n+1}J(\varphi) &:= \left(0,0,0,(-1)^n4\pi (\nabla\wedge)^{2n+1}\left(\vect v(\vect p_i)\varrho_i(\cdot-\vect q_{i})\right)\right)_{1\leq i\leq N},\\
      A^{2n+2}J(\varphi) &:= \left(0,0,(-1)^n4\pi (\nabla\wedge)^{2n+2}\left(\vect v(\vect p_i)\varrho_i(\cdot-\vect q_{i})\right),0\right)_{1\leq i\leq N}.
    \end{split}
  \end{align}

  Since $J(0)=0$, inequality (\pref{eqn:an jdiff}) for $\widetilde\varphi=0$ gives
  $\constr{cj1}^{(n)}{(\|\varphi\|_{\cal H_w})}:=\constr{cj2}^{(n)}{(\|\varphi\|_{\cal H_w},0)}$.  Therefore, it suffices to prove (\pref{eqn:an jdiff}). The only involved case therein is $n=0$ as one needs to control the Lorentz force on each rigid charge, which for $n>0$ is mapped to zero by any power of $A$. For $n=0$ we obtain:
  \begin{align}
    \|J(\varphi)-J(\widetilde\varphi)\|_{\cal H_w} &\leq \sum_{i=1}^N \left\|\vect v(\vect p_i)-\vect v(\widetilde{\vect p}_i)\right\|_{\bb R^3} + \nonumber\\
    &\quad + \sum_{i=1}^N \bigg\|\sum_{j=1}^Ne_{ij}\intdv x \Big(\varrho_i(\vect x-\vect q_i)\vect E_j(\vect x) - \varrho_i(\vect x-\widetilde{\vect q}_i)\widetilde{\vect E}_j(\vect x) +\nonumber\\
    &\hskip2cm + \varrho_i(\vect x-\vect q_i)\vect v(\vect p_i) \wedge \vect B_j(\vect x) - \varrho_i(\vect x-\widetilde{\vect q}_i)\vect v(\widetilde{\vect p}_i) \wedge \widetilde{\vect B}_j(\vect x)\Big)\bigg\|_{\bb R^3} +\nonumber\\
    &\quad + 4\pi\sum_{i=1}^N\left\|\vect v(\widetilde{\vect p}_i) \varrho_i(\cdot-\widetilde{\vect q}_i)-\vect v(\vect p_i) \varrho_i(\cdot-\vect q_i)\right\|_{L^2_w}
    =: \terml{n=0:1} + \terml{n=0:2} + \terml{n=0:3}.\label{eqn:J est main}
  \end{align}
  The following notation is convenient: For any function $(f_i)_{1\leq i\leq m}=f:\bb R^n\to\bb R^m$ and $(x_j)_{1\leq j\leq n}=x\in\bb R^n$ we denote by $Df$ the Jacobi matrix of $f$ with entries $D f(x)|_{i,j}=\partial_{j} f_i(x)$ for $1\leq i\leq m,1\leq j\leq n$. Furthermore, for any vector space $V$ with norm $\|\cdot\|_V$ and any operator $T$ on $V$ we write $\|T\|_{V}:=\sup_{\|v\|_V\leq 1}\|T(v)\|_V$.

  Recall also the coefficients $m_i\neq 0$, $|\sigma_i|=1$ and $e_{ij}\in\bb R$ for $1\leq i,j\leq N$ from Definition \pref{def:operator_J} and define $e:=\max_{1\leq i,j\leq N}|e_{ij}|$. Without loss of generality, we may assume that $\varrho_i>0$, $1\leq i\leq N$, with the possible signs being absorbed in $e_{ij}$.

  By the mean value theorem, for each index $i$ there exists a $\lambda_i\in[0,1]$ such that for $\vect k_i:=\vect p_i+\lambda_i(\widetilde{\vect p}_i-\vect p_i)$ we obtain
  \begin{align*}
    \termr{n=0:1} &= \sum_{i=1}^N\left\|D\vect v(\vect k_i)\cdot(\vect p_i-\widetilde{\vect p}_i)\right\|_{\bb R^3}
    \leq \sum_{i=1}^N\|D\vect v(\vect k_i)\|_{\bb R^3}\|\vect p_i-\widetilde{\vect p}_i\|_{\bb R^3}.
  \end{align*}
  Now with $\vect k_i=(\vect k_i)_{1\leq j\leq 3}$ we have $D\vect v(\vect k_i)\big|_{j,l}=\frac{\sigma_i}{\sqrt{m_i^2+\vect k_i^2}}\left(\delta_{jl}-\frac{(\vect k_i)_j (\vect k_i)_l}{m_i^2+\vect k_i^2}\right)$.
  Thus, it follows $\|D\vect v(\vect k_i)\|_{\bb R^3}\leq \namel{k vel}{K_{vel}}$ for $\namer{k vel}:=\sum_{i=1}^N\frac{2}{|m_i|}$ so that
  \begin{align}\label{eqn:vel est}
    \termr{n=0:1}&\leq \namer{k vel}\;\|\varphi-\widetilde\varphi\|_{\cal H_w}.
  \end{align}
  Next we must get a bound on the Lorentz force.  For $z\in\bb R^3$ and $R>0$ we define $B_R(\vect z):=\{\vect x\in\bb R^3\;|\;\|\vect x-\vect z\|<R\}$. Choose $R>0$ such that for $1\leq i\leq N$ it holds $\supp\varrho_i\subseteq B_R(0)$. Define $I_i:=B_R(\vect q_i)\cup B_R(\widetilde{\vect q_i})$, then
  \begin{align*}
    &\termr{n=0:2} \leq e\sum_{i,j=1}^N \left\|\int_{I_i} d^3x\; \Big(\varrho_i(\vect x-\vect q_i)\vect E_j(\vect x) - \varrho_i(\vect x-\widetilde{\vect q}_i)\widetilde{\vect E}_j(\vect x)\Big)\right\|_{\bb R^3} +\\
     & \quad + e\sum_{i,j=1}^N \Bigg\|\int_{I_i} d^3x\; \Big(\varrho_i(\vect x-\vect q_i)\vect v(\vect p_i) \wedge \vect B_j(\vect x) - \varrho_i(\vect x-\widetilde{\vect q}_i)\vect v(\widetilde{\vect p}_i) \wedge \widetilde{\vect B}_j(\vect x)\Big)\Bigg\|_{\bb R^3}
     =: \terml{t:n=0:2.1} + \terml{t:n=0:2.2}.
  \end{align*}
  Let $\vect z_i(\kappa):=\vect q_i+\kappa(\widetilde{\vect q}_i-\vect q_i)$ for each $1\leq i\leq N$ and $\kappa\in[0,1]$, then
  \[
    \varrho_i(\vect x-\widetilde{\vect q}_i)=\varrho_i(\vect x-\vect q_i)+\int_0^1d\kappa\; (\widetilde{\vect q}_i-\vect q_i)\cdot\nabla \varrho_i(\vect x-\vect z_i(\kappa)).
  \]
  Now $\left|\int_0^1d\kappa\; (\vect q_i-\widetilde{\vect q}_i)\cdot\nabla \varrho_i(x-\vect z_i(\kappa))\right|\leq \namel{k rho}{K_\varrho}\|\vect q_i-\widetilde{\vect q}_i\|_{\bb R^3}$ for $\namer{k rho}:=\sum_{i=1,|\alpha|\leq n+1}^N\|D^\alpha \varrho_i\|_{L^\infty}$ so that
  \[
    \termr{t:n=0:2.1} \leq e\sum_{i,j=1}^N\int_{I_i} d^3x\;\left[\varrho_i(\vect x-\vect q_i)\|\vect E_j(\vect x) - \widetilde{\vect E}_j(\vect x))\|_{\bb R^3}+ \namer{k rho}\|\vect q_i-\widetilde{\vect q}_i\|_{\bb R^3}\|\widetilde{\vect E}_j(\vect x)\|_{\bb R^3}\right].
  \]
  In the following we denote the characteristic function of a set $M$ by $\charf{M}$. The following type of estimates will be used frequently: For $\vect F\in L^2_w(\bb R^3,\bb R^3)$ it holds
  \begin{align}\label{eqn:J_est1}
    \int_{I_i}d^3x\; \|\vect F(\vect x)\|_{\bb R^3} &\leq \left[(1+\namer{cw}\|\vect q_i\|)^{\namer{pw}}+(1+\namer{cw}\|\widetilde{\vect q}_i\|)^{\namer{pw}}\right]^{\frac{1}{2}} \left\|\frac{\charf{B_R(0)}}{\sqrt w}\right\|_{L^2}\|\vect F\|_{L^2_w},\\
    \int d^3x\; \varrho_i(\vect x-\vect q_i)\|\vect F(\vect x)\|_{\bb R^3} &\leq (1+\namer{cw}\|\vect q_i\|)^{\frac{\namer{pw}}{2}} \left\|\frac{\varrho_i}{\sqrt w}\right\|_{L^2}\|\vect F\|_{L^2_w(\bb R^3)}.\label{eqn:J_est2}
  \end{align}
   The former inequality can be seen by Schwarz's inequality:
  \begin{align*}
    \int_{I_i}d^3x\; \|\vect F(\vect x)\|_{\bb R^3} &= \int_{I_i}d^3x\; \frac{\sqrt{w(\vect x)}}{\sqrt{w(\vect x)}}\|\vect F(\vect x)\|_{\bb R^3} \leq \left(\int_{I_i}d^3x\; w^{-1}(\vect x)\right)^{\frac{1}{2}}\|\vect F\|_{L^2_w}\\
    &\leq\left(\int_{B_R(0)}d^3x\; (w^{-1}(\vect x-\vect q_i)+w^{-1}(\vect x-\widetilde{\vect q}_i))\right)^{\frac{1}{2}}\|\vect F\|_{L^2_w}.
  \end{align*}
  Using the weight estimate (\pref{eqn:weight_relation}) yields (\pref{eqn:J_est1}). Similarly the latter inequality can be seen by
  \begin{align*}
    \int d^3x\; \varrho_i(\vect x-\vect q_i)\|\vect F(\vect x)\|_{\bb R^3} &= \int d^3x\; \frac{\varrho_i(\vect x-\vect q_i)}{\sqrt{w(\vect x-\vect q_i)}}\sqrt{w(\vect x-\vect q_i)}\|\vect F(\vect x)\|_{\bb R^3}\\
    &\leq \left\|\frac{\varrho_i}{\sqrt w}\right\|_{L^2} \left(\int d^3x\; w(\vect x-\vect q_i)\|\vect F\|_{\bb R^3}\right)^{\frac{1}{2}}
  \end{align*}
  and again using the weight estimate (\pref{eqn:weight_relation}). We abbreviate
  \begin{align*}
    f(x,y):=\left[(1+\namer{cw}x)^{\namer{pw}}+(1+\namer{cw}y)^{\namer{pw}}\right]^{\frac{1}{2}} \left\|\frac{\charf{B_R(0)}}{\sqrt w}\right\|_{L^2}, &&
    g(x):=(1+\namer{cw}x)^{\frac{\namer{pw}}{2}} \sum_{i=1}^N\left\|\frac{\varrho_i}{\sqrt w}\right\|_{L^2}
  \end{align*}
  so that (\pref{eqn:J_est1}) and (\pref{eqn:J_est2}) give
\newcommand{\jestf}{f\left(\|\varphi\|_{\cal H_w},\|\widetilde\varphi\|_{\cal H_w}\right)}
\newcommand{\jestg}{g\left(\|\varphi\|_{\cal H_w}\right)}
  \begin{align}\label{eqn:J est1 short}
    \int_{I_i}d^3x\; \|\vect F(\vect x)\|_{\bb R^3} &\leq \jestf \|\vect F\|_{L^2_w},\\
    \int d^3x\; \varrho_i(\vect x-\vect q_i)\|\vect F(\vect x)\|_{\bb R^3} &\leq \jestg\|\vect F\|_{L^2_w(\bb R^3)}.\label{eqn:J est2 short}
  \end{align}
  We apply these estimates to the term
  \begin{align*}
    \termr{t:n=0:2.1} &\leq e\sum_{i,j=1}^N\jestg\|\vect E_j - \widetilde{\vect E}_j\|_{L^2_w}+ e\sum_{i,j=1}^N\namer{k rho}\|\vect q_i-\widetilde{\vect q}_i\|_{\bb R^3}\jestg \|\widetilde{\vect E}_j\|_{L^2_w}\\
    &\leq e N \jestg \|\varphi - \widetilde\varphi\|_{\cal H_w}+ e \namer{k rho}\|\varphi-\widetilde\varphi\|_{\cal H_w}\jestf\|\widetilde\varphi\|_{\cal H_w}
  \end{align*}
and obtain
  \[
     \termr{t:n=0:2.1}\leq\constl{n=0:2.1}(\|\varphi\|_{\cal H_w},\|\widetilde\varphi\|_{\cal H_w})\;\|\varphi-\widetilde\varphi\|_{\cal H_w}
  \]
  for
  \[
    \constr{n=0:2.1}(\|\varphi\|_{\cal H_w},\|\widetilde\varphi\|_{\cal H_w}):=e\left(N\jestg +\namer{k rho}\jestf\|\widetilde\varphi\|_{\cal H_w}\right).
  \]
  In the same way we estimate
  \[
    \termr{t:n=0:2.2}= e\sum_{i,j=1}^N \Bigg\|\int_{I_i} d^3x\; \Big(\varrho_i(\vect x-\vect q_i)\vect v(\vect p_i) \wedge \vect B_j(\vect x)- \varrho_i(\vect x-\widetilde{\vect q}_i)\vect v(\widetilde{\vect p}_i) \wedge \widetilde{\vect B}_j(\vect x)\Big)\Bigg\|_{\bb R^3}.
  \]
  First we apply the mean value theorem to the velocities as we did before such that
  \begin{align*}
    \ldots &\leq e\sum_{i,j=1}^N \int_{I_i} d^3x\; \Big\|\vect v(\vect p_i) \wedge\Big(\varrho_i(\vect x-\vect q_i) \vect B_j(\vect x)- \varrho_i(\vect x-\widetilde{\vect q}_i) \widetilde{\vect B}_j(\vect x)\Big)\Big\|_{\bb R^3}\\
    &\quad + e\sum_{i,j=1}^N \int_{I_i} d^3x\; \namer{k vel}\|\vect p_i-\widetilde{\vect p}_i\|_{\bb R^3} \varrho_i(\vect x-\widetilde{\vect q}_i) \|\widetilde{\vect B}_j(\vect x)\|_{\bb R^3}.
  \end{align*}
   Then again we rewrite the densities by the fundamental theorem of calculus and use $\|\vect v(\vect p_i)\|_{\bb R^3}\leq 1$ in order to obtain
  \begin{align*}
    \ldots &\leq e\sum_{i,j=1}^N \int_{I_i} d^3x\; \varrho_i(\vect x-\vect q_i) \|\vect B_j(\vect x)- \widetilde{\vect B}_j(\vect x)\|_{\bb R^3}
     +e\sum_{i,j=1}^N \namer{k rho}\|\vect q_i-\widetilde{\vect q}_i\|_{\bb R^3} \int_{I_i} d^3x\; \|\widetilde{\vect B}_j(\vect x)\|_{\bb R^3}+\\
     &\quad +e\sum_{i,j=1}^N \namer{k vel} \|\vect p_i-\widetilde{\vect p}_i\|_{\bb R^3} \int_{I_i} d^3x\; \varrho_i(\vect x-\widetilde{\vect q}_i) \|\widetilde{\vect B}_j(\vect x)\|_{\bb R^3}.
  \end{align*}
  Finally we apply the two estimates (\pref{eqn:J est1 short}) and (\pref{eqn:J est2 short}) to arrive at
  \begin{align*}
    \ldots &\leq e\sum_{i,j=1}^N \jestg \|\vect B_j- \widetilde{\vect B}_j\|_{L^2_w}
    +e\sum_{i,j=1}^N \namer{k rho} \|\vect q_i-\widetilde{\vect q}_i\|_{\bb R^3} \jestf \|\widetilde{\vect B}_j(\vect x)\|_{L^2_w}+\\
    &\quad+ e\sum_{i,j=1}^N \namer{k vel} \|\vect p_i-\widetilde{\vect p}_i\|_{\bb R^3} \jestg \|\widetilde{\vect B}_j(\vect x)\|_{L^2_w}.
  \end{align*}
  Thus, we obtain the estimate
  \[
    \termr{t:n=0:2.2}\leq \constl{n=0:2.2}(\|\varphi\|_{\cal H_w},\|\widetilde\varphi\|_{\cal H_w})\;\|\varphi-\widetilde\varphi\|_{\cal H_w}
  \]
  for
  \[
    \constr{n=0:2.2}(\|\varphi\|_{\cal H_w},\|\widetilde\varphi\|_{\cal H_w}):=e\left(N\jestg +\namer{k rho}\jestf \|\varphi\|_{\cal H_w} + \namer{k vel}\jestg \|\varphi\|_{\cal H_w}\right).
  \]
  It remains to estimate term \termr{n=0:3}. We shall do this already for the general case of any fixed $n\in\bb N_0$. Recall from equation (\pref{eqn:AnJformal}) that
  \begin{align*}
    \termr{n=0:3}_{\;n} & :=4\pi \sum_{i=1}^N\|(\nabla\wedge)^n\vect v(\vect p_i)\varrho_i(\cdot-\vect q_i) - (\nabla\wedge)^n\vect v(\widetilde{\vect p}_i)\varrho_i(\cdot-\widetilde{\vect q}_i)\|_{L^2_w}.
  \end{align*}
  We begin with
  \begin{align*}
    \termr{n=0:3}_{\;n} &\leq 4\pi \sum_{i=1}^N \big\| (\nabla\wedge)^n \vect v(\vect p_i)\left(\varrho_i(\cdot-\vect q_i)-\varrho_i(\cdot-\widetilde{\vect q}_i)\right) \big\|_{L^2_w} +\\
    &\quad +4\pi \sum_{i=1}^N\big\| (\nabla\wedge)^n (\vect v(\vect p_i)-\vect v(\widetilde{\vect p}_i))\varrho_i(\cdot-\widetilde{\vect q}_i) \big\|_{L^2_w}=: \terml{t:n:3.1}_{\;n} + \terml{t:n:3.2}_{\;n}
  \end{align*}
  but before we continue we shall express these terms in a more convenient way. For all $\vect v\in\bb R^3$, $h\in\cal C^\infty(\bb R^3,\bb R)$ and all $m\in\bb N_0$ (using the notation $\triangle^{-1}=0$) the following identities hold:
  \begin{align*}
    (\nabla\wedge)^{2m}(\vect v h) &= (-1)^{m-1}\left[ \nabla\left(\vect v\cdot\nabla\triangle^{m-1}h\right) - \vect v\triangle^m h \right],\\
    (\nabla\wedge)^{2m+1}(\vect v h) &= (-1)^m[\nabla\triangle^m h]\wedge\vect v.
  \end{align*}
  Let us begin with term $\termr{t:n:3.1}_{\;n}$ for odd $n$. As before we write $\vect z_i(\kappa)=\vect q_i+\kappa(\widetilde{\vect q}_i-\vect q_i)$ for each $1\leq i\leq N$ and $\kappa\in[0,1]$ so that for
  \newcommand{\jestki}{K_I\left(\|\varphi\|_{\cal H_w},\|\widetilde\varphi\|_{\cal H_w}\right)}
  \begin{align}\label{eqn:jest 1 for gronwall}
    \jestki:=\sum_{i=1}^N \|\charf{I_i}\|_{L^2_w}=N\|\charf{B_R(0)}\|_{L^2_w}\left[(1+\namer{cw}\|\varphi\|_{\cal H_w})^{\namer{pw}} +(1+\namer{cw}\|\widetilde\varphi\|_{\cal H_w})^{\namer{pw}}\right]^{\frac{1}{2}}
  \end{align}
  we get
  \begin{align}
    &\termr{t:n:3.1}_{\;n=2m+1} = 4\pi \sum_{i=1}^N \Big\| \nabla\triangle^m\left[\varrho_i(\cdot-\vect q_i)-\varrho_i(\cdot-\widetilde{\vect q}_i)\right] \wedge \vect v(\vect p_i)\Big\|_{L^2_w}\nonumber\\
    &\leq 4\pi \sum_{i=1}^N \Big\|\int_0^1d\kappa\; D[\nabla\triangle^m\varrho_i(\vect x-\vect z_i(\kappa))]\cdot(\vect q_i-\widetilde{\vect q}_i)\Big\|_{L^2_w}\leq 4\pi \namer{k rho} \sum_{i=1}^N \|\charf{I_i}\|_{L^2_w}\|\vect q_i-\widetilde{\vect q}_i\|_{\bb R^3}\nonumber\\
    &\leq 4\pi \namer{k rho} \jestki \|\varphi-\widetilde\varphi\|_{\cal H_w}=:\constl{n:3.1}{(2m+1,\|\varphi\|_{\cal H_w},\|\widetilde\varphi\|_{\cal H_w})}\;\|\varphi-\widetilde\varphi\|_{\cal H_w}.\label{eqn:jest 2 for gronwall}
  \end{align}
  Similarly the term $\termr{t:n:3.1}_{\;n}$ for even $n$ gives
  \begin{align*}
    \termr{t:n:3.1}_{\;n=2m} &= 4\pi \sum_{i=1}^N\Big\| \nabla\left(\vect v(\vect p_i)\cdot\nabla\triangle^{m-1}\left[\varrho_i(\cdot-\vect q_i)-\varrho_i(\cdot-\widetilde{\vect q}_i)\right]\right) +\\
    &\hskip 2cm - \vect v(\widetilde{\vect p}_i)\triangle^m\left[\varrho_i(\cdot-\vect q_i)-\varrho_i(\cdot-\widetilde{\vect q}_i)\right]\Big\|_{L^2_w}\\
    &\leq4\pi \sum_{i=1}^N\bigg(\Big\| \charf{I_i}\int_0^1d\kappa\; D\left[D\left[\nabla\triangle^{m-1}\varrho_i(\vect x-\vect z_i(\kappa))\right]\cdot(\vect q_i-\widetilde{\vect q}_i)\right]\cdot\vect v(\vect p_i)\Big\|_{L^2_w} +\\
    &\hskip 2cm + \Big\|\charf{I_i}\vect v(\vect p_i)\int_0^1d\kappa\; D\triangle^m\varrho_i(\vect x-\vect z_i(\kappa))\cdot(\vect q_i-\widetilde{\vect q}_i)\Big\|_{L^2_w}\bigg).
  \end{align*}
  Again we estimate the coefficients of the Jacobi matrices by $\namer{k rho}$ obtaining a factor $\sqrt 3$ in the first summand such that
  \begin{align}
    \termr{t:n:3.1}_{\;n=2m}&\leq 4\pi \jestki \namer{k rho}(\sqrt 3+1)\|\varphi-\widetilde\varphi\|_{\cal H_w}\nonumber\\
    &=:\constr{n:3.1}{(2m,\|\varphi\|_{\cal H_w},\|\widetilde\varphi\|_{\cal H_w})}\;\|\varphi-\widetilde\varphi\|_{\cal H_w}.\label{eqn:jest 3 for gronwall}
  \end{align}
  The last term to be estimated for odd $n$ is:
  \begin{align}
    \termr{t:n:3.2}_{\;n=2m} &= 4\pi\sum_{i=1}^N \Big\| \nabla\triangle^m\varrho_i(\cdot-\widetilde{\vect q}_i)\wedge\left[\vect v(\vect p_i)-\vect v(\widetilde{\vect p}_i)\right] \Big\|_{L^2_w}\nonumber\\
    &\leq 4\pi\sum_{i=1}^N \left\|\nabla\triangle^m\varrho_i(\cdot-\widetilde{\vect q}_i)\right\|_{L^2_w} \namer{k vel} \|\varphi-\widetilde\varphi\|_{\cal H_w}\nonumber\\
    &\leq 4\pi (1+\namer{cw}\|\widetilde \varphi\|_{\cal H_w})^{\frac{\namer{pw}}{2}}\sum_{i=1}^N\left\|\nabla\triangle^m\varrho_i\right\|_{L^2_w} \namer{k vel} \|\varphi-\widetilde\varphi\|_{\cal H_w}\nonumber\\
    &=:\constl{n:3.2}{(2m+1,\|\varphi\|_{\cal H_w},\|\widetilde\varphi\|_{\cal H_w})}\;\|\varphi-\widetilde\varphi\|_{\cal H_w},\label{eqn:jest 4 for gronwall}
  \end{align}
  and for even $n$:
  \begin{align}
    &\termr{t:n:3.2}_{\;n=2m} = 4\pi \sum_{i=1}^N \Big\|\nabla\left(\left[\vect v(\vect p_i)-\vect v(\widetilde{\vect p}_i)\right]\cdot\nabla\triangle^{m-1}\varrho_i(\cdot-\widetilde{\vect q}_i)\right) - \left[\vect v(\vect p_i)-\vect v(\widetilde{\vect p}_i)\right]\triangle^m\varrho_i(\cdot-\widetilde{\vect q}_i) \Big\|_{L^2_w}\nonumber\\
    &\leq 4\pi \sum_{i=1}^N \bigg(\left(\intdv x w(\vect x)\|D\nabla\triangle^{m-1}\varrho_i(\vect x-\widetilde{\vect q}_i)\|_{\bb R^3}\right)^{\frac{1}{2}} + \|\triangle^m\varrho_i(\cdot-\widetilde{\vect q}_i)\|_{L^2_w}\bigg)\namer{k vel} \|\varphi-\widetilde\varphi\|_{\cal H_w}\nonumber\\
    &\leq 4\pi N (1+\namer{cw}\|\widetilde \varphi\|_{\cal H_w})^{\frac{\namer{pw}}{2}} \bigg(\left(\intdv x w(\vect x)\|D\nabla\triangle^{m-1}\varrho_i(\vect x)\|_{\bb R^3}\right)^{\frac{1}{2}} + \|\triangle^m\varrho_i\|_{L^2_w}\bigg)\namer{k vel} \|\varphi-\widetilde\varphi\|_{\cal H_w}\nonumber\\
    &=:\constr{n:3.2}{(2m,\|\varphi\|_{\cal H_w},\|\widetilde\varphi\|_{\cal H_w})}\|\varphi-\widetilde\varphi\|_{\cal H_w}.\label{eqn:jest 5 for gronwall}
  \end{align}
  Collecting all estimates we finally arrive at the inequality (\pref{eqn:an jdiff}) for
  \begin{align*}
    &\constr{cj2}^{(n)}{(\|\varphi\|_{\cal H_w},\|\widetilde\varphi\|_{\cal H_w})} := K_{vel}+\constr{n=0:2.1}(\|\varphi\|_{\cal H_w},\|\widetilde\varphi\|_{\cal H_w})+\\
    &\quad+\constr{n=0:2.2}(\|\varphi\|_{\cal H_w},\|\widetilde\varphi\|_{\cal H_w})+\constr{n:3.1}{(2m+1,\|\varphi\|_{\cal H_w},\|\widetilde\varphi\|_{\cal H_w})}+\constr{n:3.2}{(2m+1,\|\varphi\|_{\cal H_w},\|\widetilde\varphi\|_{\cal H_w})}
  \end{align*}
  which for fixed $n$ is a continuous and non-decreasing function in the arguments $\|\varphi\|_{\cal H_w}$ and $\|\widetilde\varphi\|_{\cal H_w}$, and hence, $\constr{cj2}^{(n)}\in\bounds$.

  Assertion (ii): For $T>0$ let $t\mapsto\varphi_t$ be a mapping in $\cal C^n((-T,T),D_w(A^n))$  such that for all $k\leq n$ and $t\in(-T,T)$ it holds that $\frac{d^k}{dt^k}\varphi_t\in D_w(A^{n-k})$. We have to show that for all $j+l\leq n-1$, $t\mapsto\frac{d^j}{dt^j}A^lJ(\varphi_t)$ is continuous on $(-T,T)$ and take values in $D_w(A^{n-1-j-l})$. By formulas (\pref{eqn:Jformal}) and (\pref{eqn:AnJformal}) both properties are an immediate consequence of $\varrho_i\in\cal C^\infty_c$. In fact, one finds that $t\mapsto\frac{d^j}{dt^j}A^lJ(\varphi_t)$ takes values in $D_w(A^\infty)$ on $(-T,T)$.

  Finally, we prove assertion (iii), i.e. inequality (\pref{eqn:J estimate for Gronwall}): In principle we could use (\pref{eqn:J est main}) and the estimates (\pref{eqn:vel est}, \pref{eqn:jest 1 for gronwall}, \pref{eqn:jest 2 for gronwall}, \pref{eqn:jest 3 for gronwall}, \pref{eqn:jest 4 for gronwall}, and \pref{eqn:jest 5 for gronwall}) for $\widetilde\varphi=0$ so that we only had to treat the Lorentz force. However, this way we do not get an optimal dependence of the bounds on $\varrho$. Therefore, we regard
  \begin{align*}
    \|J(\varphi)\|&\leq \sum_{1\leq i\leq N}\bigg[\|\vect v(\vect p_i)\|_{\bb R^3}+\left\|\sum_{j\neq i}e_{ij}\intdv x\varrho_i(\vect x-\vect q_i)\left(\vect E_j(\vect x)+\vect v(\vect p_i)\wedge\vect B_j(\vect x)\right)\right\|_{\bb R^3}+\\
    &\quad +\|4\pi\vect v(\vect p_i)\varrho_i(\cdot-\vect q_i)\|_{L^2_w}\bigg]=:\terml{J bound 1}+\terml{J bound 2}+\terml{J bound 3}.
  \end{align*}
  The first term can be treated as before, cf. (\pref{eqn:vel est}),
  \[
    \termr{J bound 1}\leq N\namer{k vel}\;\|\varphi\|_{\cal H_w}.
  \]
  The second term
  \[
    \termr{J bound 2}=\sum_{i=1}^N \left\|\sum_{j=1}^Ne_{ij}\intdv x \varrho_i(\vect x-\vect q_{i})\left( \vect E_{j}(\vect x) + \vect v(\vect p_i) \wedge \vect B_{j}(x) \right)\right\|_{\bb R^3}
  \]
  can be bounded by
  \begin{align}\label{eqn:ML+-SI}
    \ldots\leq e\sum_{i,j=1}^N\intdv x \varrho_i(\vect x-\vect q_i)\left(\|\vect E_j(\vect x)\|_{\bb R^3}+\|\vect B_j(\vect x)\|_{\bb R^3}\right).
  \end{align}
  Using estimate (\pref{eqn:J_est2}) we find
  \begin{align*}
    \ldots&\leq e\sum_{i=1}^N(1+\namer{cw}\|\vect q_i\|)^{\frac{\namer{pw}}{2}} \left\|\frac{\varrho_i}{\sqrt w}\right\|_{L^2}\sum_{j=1}\left(\|\vect E_j(\vect x)\|_{L^2_w}+\|\vect B_j(\vect x)\|_{L^2_w}\right)\\
    &\leq 2Ne \sum_{i=1}^N \left\|\frac{\varrho_i}{\sqrt w}\right\|_{L^2} \sum_{i=1}^N (1+\namer{cw}\|\vect q_i\|)^{\frac{\namer{pw}}{2}}\|\varphi\|_{\cal H_w}.
  \end{align*}
  Finally, for the last term we obtain
  \[
    \termr{J bound 3}\leq 4\pi\namer{k vel} \sum_{i=1}^N\|\varrho_i(\cdot-\vect q_i)\|_{L^2_w}\;\|\varphi\|_{\cal H_w}\leq 4\pi \namer{k vel} \sum_{i=1}^N\|\varrho_i\|_{L^2_w}\sum_{i=1}^N(1+\namer{cw}\|\vect q_i\|)^{\namer{pw}}\;\|\varphi\|_{\cal H_w}.
  \]
  Hence, there is a $\namer{cj}\in\bounds$ for
  \[
    \namer{cj}\left(\|\varrho_i\|_{L^2_w},\|w^{-1/2}\varrho_i\|_{L^2}; 1\leq i\leq N\right):=N\namer{k vel}+2Ne\left\|\frac{\varrho_i}{\sqrt w}\right\|_{L^2}+4\pi\namer{k vel}\sum_{i=1}^N\|\varrho_i\|_{L^2_w}.
  \]
  This concludes the proof.
\end{proof}

\begin{remark}
 Note that both cases, $\text{ML}_\varrho$ (\pref{eqn:ML+SI}) as well as $\text{ML-SI}_\varrho$ (\pref{eqn:ML-SI}), are treated in estimate (\pref{eqn:ML+-SI}) because the summation goes also over $i=j$. This is only possible because of the smearing of $\varrho_i$. In the point-particle limit this $i=j$ summand is the problematic term which blows up. On the contrary, in the ML-SI case with an appropriate choice of norms on the field spaces and an a priori lower bound on the distance of the charges for all times (in order to prevent singular situations like in classical gravitation \cite{siegel_lectures_1971}), the point-particle limit bares no obstacles.
\end{remark}

Next, we prove the needed a priori bound; the one assumed in (\pref{eqn:apriori}).

\begin{lemma}[A Priori Bound]\label{lem:a priori}
  Let $t\mapsto\varphi_{t}$ be a solution to
  \[
    \varphi_t = W_t\varphi^0 + \int_0^t W_{t-s} J(\varphi_s)
  \]
  with $\varphi^0=\varphi_t|_{t=0}\in D_w(A)$. Then there is a $\constl{apriori ml}\in\bounds$ such that
  \begin{align}\label{eqn:gronwall sup}
    \sup_{t\in[-T,T]}\|\varphi_t\|_{\cal H_w} \leq e^{\gamma T}(1+\constr{apriori ml} T e^{\constr{apriori ml} T}) \|\varphi^0\|_{\cal H_w}
  \end{align}
  for $\constr{apriori ml}:=\constr{apriori ml}\left(\|w^{-1/2}\varrho_i\|_{L^2},\|\varrho_i\|_{L^2_w}; 1\leq i\leq N\right)$.
\end{lemma}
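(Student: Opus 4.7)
The plan is a standard Gr\"onwall argument, but exploiting the key observation that the relativistic velocity $\vect v(\vect p)$ is bounded by $1$, so positions cannot grow faster than linearly in $t$. This lets us convert the position-dependent factor $(1+\namer{cw}\|\vect q_i\|)^{\namer{pw}/2}$ appearing in inequality (\pref{eqn:J estimate for Gronwall}) into a constant depending only on $T$ and $\|\varphi^0\|_{\cal H_w}$, which is what makes Gr\"onwall closable.

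First I would take $\cal H_w$--norms on both sides of the integral equation $\varphi_t = W_t\varphi^0 + \int_0^t W_{t-s}J(\varphi_s)\,ds$ and use the $\gamma$--contractive bound $\|W_t\psi\|_{\cal H_w}\leq e^{\gamma|t|}\|\psi\|_{\cal H_w}$ from Lemma \pref{lem:contraction_group} to get, for $t\in[-T,T]$,
\[
   \|\varphi_t\|_{\cal H_w} \leq e^{\gamma T}\|\varphi^0\|_{\cal H_w} + \int_0^{|t|} e^{\gamma(T-s)} \|J(\varphi_s)\|_{\cal H_w}\,ds.
\]
Next I would bound the position component of $\varphi_t$. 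Reading off the first component of (\pref{eqn:dynamic_maxwell}), $\partial_t\vect q_{i,t}=\vect v(\vect p_{i,t})$ with $\|\vect v(\vect p)\|\leq 1$ uniformly, so for every $1\leq i\leq N$ and $s\in[-T,T]$,
\[
  \|\vect q_{i,s}\|\leq \|\vect q_{i,0}\|+|s|\leq \|\varphi^0\|_{\cal H_w}+T.
\]
Substituting this into (\pref{eqn:J estimate for Gronwall}) gives a bound of the form
\[
  \|J(\varphi_s)\|_{\cal H_w}\leq \widetilde C(T,\|\varphi^0\|_{\cal H_w},\|w^{-1/2}\varrho_i\|_{L^2},\|\varrho_i\|_{L^2_w})\;\|\varphi_s\|_{\cal H_w},
\]
where $\widetilde C$ depends non-decreasingly on all arguments, hence $\widetilde C\in\bounds$.

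Combining the two preceding displays yields
\[
  \|\varphi_t\|_{\cal H_w}\leq e^{\gamma T}\|\varphi^0\|_{\cal H_w} + e^{\gamma T}\widetilde C\int_0^{|t|} \|\varphi_s\|_{\cal H_w}\,ds.
\]
Standard Gr\"onwall applied to $u(t):=\|\varphi_t\|_{\cal H_w}$ gives $u(t)\leq e^{\gamma T}\|\varphi^0\|_{\cal H_w}\,(1+e^{\gamma T}\widetilde C\,T\,\exp(e^{\gamma T}\widetilde C\,T))$, which after absorbing the $e^{\gamma T}$ factors into a new constant $\constr{apriori ml}\in\bounds$ yields exactly (\pref{eqn:gronwall sup}).

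The only subtlety I anticipate is keeping track of the functional class $\bounds$: one must check that $\widetilde C$ (and then $\constr{apriori ml}$) remains continuous and non-decreasing in each of $T$, $\|w^{-1/2}\varrho_i\|_{L^2}$, and $\|\varrho_i\|_{L^2_w}$. This follows mechanically from the form of the bounds in Lemma \pref{lem:operatorJ}(iii), once one substitutes the linear-in-$T$ position bound. The crucial point that makes the argument work at all is the relativistic sub-luminal bound $\|\vect v(\vect p)\|\leq 1$, without which the polynomial factor in $\|\vect q_i\|$ in (\pref{eqn:J estimate for Gronwall}) would introduce a super-linear self-feedback and preclude closing Gr\"onwall on $\|\varphi_t\|_{\cal H_w}$ alone.
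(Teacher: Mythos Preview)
Your proposal is correct and follows essentially the same approach as the paper: bound $\|\varphi_t\|_{\cal H_w}$ via the group estimate $\|W_t\|\leq e^{\gamma|t|}$, invoke Lemma~\pref{lem:operatorJ}(iii) for $\|J(\varphi_s)\|_{\cal H_w}$, use the sub-luminal bound $\|\vect v(\vect p)\|\leq 1$ to control $\|\vect q_{i,s}\|\leq \|\varphi^0\|_{\cal H_w}+|s|$, and close with Gr\"onwall. The paper's proof is line-for-line the same argument, with the resulting constant written as $\constr{apriori ml}(T)=e^{\gamma T}\namer{cj}\,N\,(1+\namer{cw}(\|\varphi^0\|_{\cal H_w}+T))^{\namer{pw}/2}$.
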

\begin{proof}
  By Lemma \pref{lem:operatorA} we know that
  \[
    \|\varphi_t\|_{\cal H_w}=\|W_t\varphi^0+\int_0^t ds\;W_{t-s} J(\varphi_s)\|_{\cal H_w}\leq e^{\gamma T}\|\varphi^0\|_{\cal H_w}+\sign(t)e^{\gamma T}\int_0^t ds\; \|J(\varphi_s)\|_{\cal H_w}.
  \]
  Lemma \pref{lem:operatorJ} provides the bound to estimate the integrand
  \[
    \|J(\varphi_s)\|_{\cal H_w} \leq \namer{cj} \sum_{i=1}^N(1+\namer{cw}\|\vect q_{i,s}\|_{\bb R^3})^{\frac{\namer{pw}}{2}} \|\varphi_s\|_{\cal H_w}
  \]
  for any $s\in\bb R$. Moreover, as the velocities are bounded by the speed of light we get in addition
  \[
    \|\vect q_{i,s}\|=\left\|\vect q^0_i+\int_0^s dr\; \vect v(\vect p_{i_r})\right\|\leq \|\vect q_i^0\| + \sign(s)\int_0^s dr\; \left\|\frac{\sigma_i\vect p_{i,r}}{\sqrt{m_i^2+\vect p_{i,r}^2}}\right\|\leq \|\varphi^0\|_{\cal H_w}+|s|.
  \]
  Hence, for some finite $T>0$ and $|t|\leq T$ we infer the following integral inequality
  \begin{align*}
    \|\varphi_t\|_{\cal H_w} &\leq e^{\gamma T} \|\varphi^0\|_{\cal H_w} + \sign(t)\constr{apriori ml}(T) \int_0^t ds\; \|\varphi_s\|_{\cal H_w}
  \end{align*}
  for $\constr{apriori ml}(T) := e^{\gamma T}\namer{cj} N(1+\namer{cw}(\|\varphi^0\|_{\cal H_w}+|T|))^{\frac{\namer{pw}}{2}}$, according to which by Gr\"onwall's lemma
  \begin{align}
    \sup_{t\in[-T,T]}\|\varphi_t\|_{\cal H_w} \leq e^{\gamma T}(1+\constr{apriori ml} T e^{\constr{apriori ml} T}) \|\varphi^0\|_{\cal H_w}
  \end{align}
  holds with the parameter dependence of $\namer{cj}$ as claimed. This concludes the proof.
\end{proof}

Now we have all ingredients to apply Theorem \pref{lem:AJ_local_exist_uniqueness}. Lemma \pref{lem:operatorA} and Lemma \pref{lem:operatorJ} prove the needed properties of the operators $A$ and $J$. Lemma \pref{lem:a priori} proves the needed a priori bound. Hence, Theorem \pref{lem:AJ_local_exist_uniqueness} ensures global existence and uniqueness of solutions to (\pref{eqn:dynamic_maxwell}), to be precise, assertion (i) and (ii) except the bounds which we prove next:

Lemma \pref{lem:a priori} proves claim (\pref{eqn:apriori lipschitz no diff}) for
\[
  \constr{apriori ml rho}:=e^{\gamma T}(1+\constr{apriori ml} T e^{\constr{apriori ml} T})
\] 
while (\pref{eqn:apriori lipschitz}) can be verified as follows. Let $T\geq 0$ and $\varphi,\widetilde\varphi:[-T,T]\to D_w(A)$ be solutions to (\pref{eqn:dynamic_maxwell}), then for $t_0,t\in[-T,T]$ we have
\begin{align*}
  \|\varphi_t-\widetilde\varphi_t\|_{\cal H_w}&=\left\|W_{t-t_0}(\varphi_{t_0}-\widetilde\varphi_{t_0})+\int_{t_0}^t ds\; W_{t-s}(J(\varphi_s)-J(\widetilde\varphi_s))\right\|_{\cal H_w}\\
  &\leq e^{\gamma T}\|\varphi_{t_0}-\widetilde\varphi_{t_0}\|_{\cal H_w}+\sign(t-t_0)e^{\gamma T}\int_{t_0}^tds\;\constr{cj2}^{(1)}{(\|\varphi_s\|_{\cal H_w},\|\widetilde\varphi_s\|_{\cal H_w})}\;\|\varphi_s-\widetilde\varphi_s\|_{\cal H_w}
\end{align*}
by (\pref{eqn:an jdiff}). Now we use (\pref{eqn:gronwall sup}) and find
\[
  \constl{apriori lipschitz integrand}(T,\|\varphi_{t_0}\|_{\cal H_w},\|\widetilde\varphi_{t_0}\|_{\cal H_w}):=\sup_{s\in[-T,T]}\constr{cj2}^{(1)}{(\|\varphi_s\|_{\cal H_w},\|\widetilde\varphi_s\|_{\cal H_w})}<\infty.
\]
Hence, we can apply Gr\"onwall's lemma once again and find that (\pref{eqn:apriori lipschitz}) holds for
\begin{align*}
  \constr{apriori lipschitz}(T,\|\varphi_{t_0}\|_{\cal H_w},\|\widetilde\varphi_{t_0}\|_{\cal H_w}):=e^{\gamma T}(1+\constr{apriori lipschitz integrand}(T,\|\varphi_{t_0}\|_{\cal H_w},\|\widetilde\varphi_{t_0}\|_{\cal H_w})Te^{\constr{apriori lipschitz integrand}(T,\|\varphi_{t_0}\|_{\cal H_w},\|\widetilde\varphi_{t_0}\|_{\cal H_w}) T}).
\end{align*}

Finally we prove assertion (iii): We need to study whether solutions $t\mapsto\varphi_t$ respect the constraints (\pref{eqn:ml constraints}). Without loss of generality we may assume $t^*=0$. Say we are given an initial value $(\vect q^0_i,\vect p^0_i,\vect E^0_i,\vect B^0_i)_{1\leq i\leq N}=:\varphi^0\in D_w(A)$, then by part (i) and (ii) there exists a unique solution $t\mapsto \varphi_t$ in $\cal C^1(\bb R,D_w(A))$ of equation (\pref{eqn:dynamic_maxwell}). As before we use the notation $\varphi_t=:(\vect q_{i,t},\vect p_{i,t}, \vect E_{i,t},\vect B_{i,t})_{1\leq i\leq N}$ for $t\in\bb R$. Furthermore, let $\varphi^0$ be chosen in such a way that $\nabla\cdot\vect E^0_i=4\pi\varrho_i(\cdot-\vect q^0_i)$ and $\nabla\cdot\vect B^0_i=0$ hold in the distribution sense. We may write the divergence of the magnetic field of the $i$-th particle for each $t\in\bb R$ in the distribution sense as
\begin{align*}
  \nabla\cdot\vect B_{i,t}=\nabla\cdot\left(\vect B^0_i+\int_0^t \;\dot{\vect B}_{i,s}\;ds\right)=-\nabla\cdot\int_0^t ds\;\nabla\wedge \vect E_{i,s}
\end{align*}
where we have used the equation of motion (\pref{eqn:dynamic_maxwell}) and the assumption $\nabla\cdot\vect B_{i}^0=0$. Since $\varphi_t\in D_w(A)$, $\nabla\wedge \vect E_{i,s}$ is in $L^2_w$. Therefore, for any $\phi\in\cal C^\infty_c(\bb R^3,\bb R)$ we find by Fubini's theorem that
\begin{align}
  \intdv x\nabla\phi(\vect x)\cdot\int_0^t ds\;\nabla\wedge \vect E_{i,s}(\vect x)=\int_0^t ds\intdv x \nabla\phi(\vect x)\cdot(\nabla\wedge \vect E_{i,s}(\vect x)) = 0\label{eqn:div rot term}
\end{align}
as for any fixed $t$ 
\begin{align*}
  \int_0^t ds\intdv x|\nabla\phi(\vect x)\cdot(\nabla\wedge \vect E_{i,s}(\vect x)) |\leq \|\nabla\phi\|_{L^2_w}t\sup_{s\in[0,t]} \|\nabla\wedge \vect E_{i,s}\|_{L^2_w} \leq \infty
\end{align*}
holds. The supremum exists because of continuity. Analogously, we find for the electric fields
\begin{align*}
  \nabla\cdot \vect E_{i,t} &= \nabla\cdot\left(\vect E^0_i+\int_0^t ds \;\dot{\vect E}_{i,s}\right)\\
  &= 4\pi\varrho_i(\cdot-\vect q^0_i)+\nabla\cdot\int_0^t ds\;\nabla\wedge \vect B_{i,s} - 4\pi\nabla\cdot\int_0^t ds\;\vect v(\vect p_{i,s})\varrho_i(\cdot-\vect q_{i,s}).
\end{align*}
By the same argument as in (\pref{eqn:div rot term}) the second term is zero. We commute the divergence with the integration since $\vect q_{i,t}$, $\vect p_{i,t}$ are continuous functions of $t$ and $\varrho_i\in\cal C^\infty_c(\bb R^3,\bb R)$ and find
\begin{align*}
  \ldots & = 4\pi\varrho_i(\cdot-\vect q^0_i) - 4\pi\int_0^t ds\;\vect v(\vect p_{i,s})\cdot\nabla\varrho_i(\cdot-\vect q_{i,s})\\
  & = 4\pi\varrho_i(\cdot-\vect q^0_i) + 4\pi\int_0^t\;\frac{d}{ds}\varrho_i(\cdot-\vect q_{i,s})\;ds = 4\pi \varrho_i(\cdot-\vect q_{i,t})
\end{align*}
which concludes part (iii) and the proof.
\end{proof}

\subsection{Proof of Regularity of ML solutions}\label{sec:reg}

\begin{proof}[Proof of Theorem \pref{thm:regularity}]
   Assume the initial value $\varphi^0\in D_w(A^{2m})$ for some $m\in\bb N$. According to Theorem \pref{thm:globalexistenceanduniqueness} we know that there exists a unique solution $t\mapsto\varphi_{t}=(\vect q_{i,t},\vect p_{i,t},\vect E_{i,t},\vect B_{i,t})_{1\leq i\leq N}$ which is in $\cal C^{2m}(\bb R,D_w(A^{2m}))$ with $\varphi_t|_{t=0}=\varphi^0$. The first aim is to see whether the fields $\vect E_{i,t},\vect B_{i,t}$ are smoother than a typical function in $H^{curl}_w$.  We know that $(\nabla\wedge)^{2l}\vect E_{i,t},(\nabla\wedge)^{2l}\vect B_{i,t}\in H^{curl}_w$ for any $0\leq l\leq m$, but then
  \begin{align*}
    (\nabla\wedge)^{2l}\vect E_{i,t} &= (\nabla\wedge)^{2l-2} (\nabla\wedge)^2\vect E_{i,t}
    = \left( \nabla\nabla\cdot - \triangle \right)^{l-1}(\nabla\wedge)^2\vect E_{i,t}\\
    &= \sum_{k=0}^{l-1}\begin{pmatrix}
                          l-1\\
                          k
                        \end{pmatrix} (\nabla\nabla\cdot)^k (-\triangle)^{l-1-k} (\nabla\wedge)^2\vect E_{i,t}
    =(-\triangle)^{l-1}(\nabla(\nabla\cdot \vect E_{i,t})-\triangle\vect E_{i,t})
  \end{align*}
  in the distribution sense, where $\nabla\nabla\cdot$ denotes the gradient of the divergence. The same computation holds for $\vect B_{i,t}$. By inserting the constraints (\pref{eqn:ml constraints}) we find:
  \begin{align*}
    (\nabla\wedge)^{2l}\vect E_{i,t} = 4\pi(-1)^{l-1}\triangle^{m-1}\nabla\varrho_i(\cdot-\vect q_{i,t})+(-\triangle)^l \vect E_{i,t}, &&
    (\nabla\wedge)^{2l}\vect B_{i,t} = (-\triangle)^l\vect B_{i,t}.
  \end{align*}
  As $\varrho_i\in\cal C^\infty_c$ we may conclude that for any fixed $t\in\bb R$ we have $\triangle^l\vect E_{i,t},\triangle^l\vect B_{i,t}\in L^2_w$ for $0\leq l\leq m$ and therefore $\vect E_{i,t},\vect B_{i,t}\in H^{\triangle^m}_w$ which proves claim (i). In particular, for every open $O\subset\subset\bb R^3$, $\vect E_{i,t},\vect B_{i,t}$ are in $H^{\triangle^m}_w(O)$ which by Theorem \pref{thm:Htrianglekw equals H2kw} equals $H^{2m}(O)$. Lemma \pref{lem:local equivalence} then states $\vect E_{i,t},\vect B_{i,t}\in H^{2m}_{loc}$. This provides the necessary conditions to apply Theorem \pref{lem:sobolev}(ii) which guarantees: In the equivalence class of $\vect E_{i,t}$ as well as $\vect B_{i,t}$ there is a representative in $\cal C^l(\bb R^3,\bb R^3)$ for $0\leq l\leq 2m-2=n-2$. We denote these smooth representatives by the same symbols $\vect E_{i,t}$ and $\vect B_{i,t}$.

  Moreover, for any $0\leq k\leq n$ the mapping $t\mapsto \frac{d^k}{dt^k}\varphi_t$, and hence, the mappings $t\mapsto \frac{d^k}{dt^k}\vect E_{i,t}$ and $t\mapsto \frac{d^k}{dt^k}\vect E_{i,t}$ are continuous. Hence, for any open $\Lambda\subset\subset\bb R^4$ and for $k\leq n$ the integrals
  \begin{align*}
    \int_\Lambda ds\; d^3x\; w(\vect x)\left\|\frac{d^k}{dt^k}\vect E_{i,s}\right\|^2_{\bb R^3} && \text{and} && \int_\Lambda ds\; d^3x\; w(\vect x)\left\|\partial_{x_j}^k\vect E_{i,s}\right\|^2_{\bb R^3}\;\text{for}\;j=1,2,3
  \end{align*}
  are finite. Applying Sobolev's lemma in the form presented in \cite[Theorem 7.25]{rudin_functional_1973} we yield that within the equivalence classes $\vect E_i$ as well as $\vect B_i$ there is a representative in $\cal C^{n-2}(\bb R^4,\bb R^3)$ which proves claim (ii).

  Assume $w\in\cal W^k$ for $k\geq 2$. Then Theorem \pref{thm:Htrianglekw equals H2kw} yields that also $\vect E_{i,t},\vect B_{i,t}\in H^{2m=n}_w(\bb R^3)$, and by Theorem \pref{lem:sobolev}(iii) there is a constant $C$ such that (\pref{eqn:fieldbound}) holds for every $1\leq i\leq N$ which proves claim (iii) and concludes the proof.
\end{proof}

\section{Constants of Motion}

One is inclined to expect
\begin{align}\label{eqn:energy}
  H(t):=\sum_{i=1}^N\left[\sigma_i\sqrt{m_i^2+\vect p_{i,t}^2}+\frac{1}{8\pi}\intdv x \left(\vect E_{i,t}^2+\vect B_{i,t}^2\right)\right]
\end{align}
as the preserved energy. For the case  $\text{ML}_\varrho$ (\pref{eqn:ML+SI}), i.e. $e_{ij}=1$ for all $1\leq i,j\leq N$, and initial values $\varphi^0\in D_w(A)$ for weights $w\in\cal W$ such that $w(\vect x)=\bigoh_{\|\vect x\|\to\infty}(1)$ and $(\vect q_{i,t},\vect p_{i,t},\vect E_{i,t},\vect E_{i,t})_{1\leq i\leq N}=M_L(t,t_0)[\varphi^0]$ this is indeed true. By computing the time derivative one can also see the mechanism of radiation damping as the particle and its own field exchange energy. This is so beautiful that it is a pity it is ill-defined in the point-particle limit. However, for weights $w\in\cal W$ such that $w(\vect x)\to 0$ for $\|\vect x\|\to\infty$ the integrals in the expression of $H(t)$ diverge and the total energy is infinite. Also in the case of $\text{ML-SI}_\varrho$ (\pref{eqn:ML-SI}), i.e. $e_{ij}=1-\delta_{ij}$, the energy (\pref{eqn:energy}) is generically not conserved which can be understood as follows: In this case the time derivative of the electric field $\vect E_{i,t}$ in (\pref{eqn:maxwell equations}) depends on the position $\vect q_{i,t}$ and velocity $\vect v(\vect p_{i,t})$ of the $i$-th charge which means that the charge can transfer energy by means of radiation to the field degrees of freedom. On the other hand the Lorentz force law acting on the $i$-th charge (\pref{eqn:lorentz force}) does not depend on the $i$-th field since $e_{ii}=0$. Therefore, the $i$-th charge cannot be in turn decelerated whenever it radiates. This way the charges can ``pump'' energy into their field degrees of freedom without ``paying'' by loss of kinetic energy. Hence, with respect to the $\text{ML-SI}_\varrho$ expression (\pref{eqn:energy}) is completely unnatural. Using a similar method introduced by \cite{rohrlich_classical_1994} in the context of the Lorentz-Dirac equations one can nevertheless define a variation of action principle to derive the  $\text{ML-SI}_\varrho$ equations of motion (and also for the point-particle case ML-SI) from which all constants of motion can be inferred. These are, however, more implicit as (\pref{eqn:energy}) since they depend not only on data at one time instant $t$ but on whole intervals of the solution. In the special case of (\pref{eqn:crucial}) these constants of motion are discussed for point-particles in \cite{wheeler_classical_1949}.

\appendix
\section{An Abstract Global Existence Uniqueness Theorem}\label{sec:abstractglobalexistandunique}

For this section let $\cal B$ be a Banach space with norm $\|\cdot\|_{\cal B}$. We consider two abstract operators $A$ and $J$ with the following properties:
\begin{definition}[Abstract Operator $A$]\label{def:operatorA}
  Let $A:D(A)\subseteq\cal B\to\cal B$ be a linear operator with the properties:
  \begin{enumerate}[(i)]
    \item $A$ is closed and densely defined.
    \item There exists a $\gamma\geq0$ such that $(-\infty,-\gamma)\cup(\gamma,\infty)\subseteq\rho(A)$, the resolvent set of A.
    \item The resolvent $R_\lambda(A)=\frac{1}{\lambda-A}$ of $A$ with respect to $\lambda\in\rho(A)$ is bounded by $\frac{1}{|\lambda|-\gamma}$, i.e. for all $\phi\in\cal B,|\lambda|>\gamma$ we have $\|R_\lambda(A)\phi\|_{\cal B}\leq\frac{1}{|\lambda|-\gamma}\|\phi\|_{\cal B}$.
  \end{enumerate}
  For $n\in\bb N$ we define $D(A^n):=\{\varphi\in D(A)\;|\;A^k\varphi\in D(A),\; 0\leq k\leq n-1\}$.
\end{definition}
\begin{definition}[Abstract Operator $J$]\label{def:operatorJ}
  For an $n_J\in\bb N$ let $J:D(A)\to D(A^{n_J})$ be a mapping with the properties:
  \begin{enumerate}[(i)]
    \item For all $0\leq n\leq n_J$ there exist
        $\constr{cj1}^{(n)},\constr{cj2}^{(n)}\in\bounds$ such that for all $\varphi,\widetilde\varphi\in D(A)$
      \begin{align*}
        \|A^n J(\varphi)\|_{\cal B}\leq\constr{cj1}^{(n)}{(\|\varphi\|_{\cal B})}, &&
        \|A^n(J(\varphi)-J(\widetilde\varphi))\|_{\cal B}\leq\constr{cj2}^{(n)}{(\|\varphi\|_{\cal B},\|\widetilde\varphi\|_{\cal B})}\;\|\varphi-\widetilde\varphi\|_{\cal B}.
      \end{align*}
    \item \label{rem:operatorJ}For all $0\leq n\leq n_J$ and $T>0$, $t\in(-T,T)$ and any $\varphi_{(\cdot)}\in\cal C^{n}((-T,T),D(A^n))$ such that $\frac{d^k}{dt^k}\varphi_t\in D(A^{n-k})$ for $k\leq n$, the operator $J$ fulfills for $j+l\leq n-1$:
        \begin{enumerate}[(a)]
          \item $\frac{d^j}{dt^j}A^l J(\varphi_t)\in D(A^{n-1-j-l})$ and
          \item $t\mapsto \frac{d^j}{dt^j}A^l J(\varphi_t)$ is continuous on $(-T,T)$.
        \end{enumerate}
  \end{enumerate}
\end{definition}

  For those operators one can show:
\begin{theorem}[Abstract Global Existence and Uniqueness]\label{lem:AJ_local_exist_uniqueness}
  Let $A$ and $J$ be the operators introduced in Definitions (\pref{def:operatorA}) and (\pref{def:operatorJ}) then:
  \begin{enumerate}[(i)]
    \item (local existence) For each $\varphi^0\in D(A^n)$ with $n\leq n_J$, there exists a $T>0$ and a mapping $\varphi_{(\cdot)}\in\cal C^n((-T,T),D(A^n))$ which solves the equation
        \begin{align}\label{eqn:abstract_partial_diff_eq}
          \dot\varphi_t = A\varphi_t + J(\varphi_t)
        \end{align}
        for initial value $\varphi_t|_{t=0}=\varphi^0$. Furthermore, $\frac{d^k}{dt^k}\varphi_t\in D(A^{n-k})$ for $k\leq n$ and $t\in(-T,T)$.
    \item (uniqueness) If $\widetilde\varphi_{(\cdot)}\in\cal C^1((-\widetilde T,\widetilde T),D(A))$ for some $\widetilde T>0$ is also a solution to (\pref{eqn:abstract_partial_diff_eq}) and $\widetilde\varphi_t|_{t=0}=\varphi_t|_{t=0}$, then $\varphi_t=\widetilde \varphi_t$ for all $t\in (-T,T)\cap(-\widetilde T,\widetilde T)$.
    \item (global existence)  Assume in addition that for any solution $\varphi_{(\cdot)}$ of equation (\pref{eqn:abstract_partial_diff_eq}) with $\varphi_t|_{t=0}\in D(A^n)$ and $T<\infty$ there exists a $\constl{apriori}=\constr{apriori}(T)<\infty$ such that
        \begin{align}\label{eqn:apriori}
          \sup_{t\in[-T,T]}\|\varphi_t\|_{\cal B}\leq \constr{apriori}(T)
        \end{align}
        then (i) and (ii) holds for any $T\in\bb R$.
  \end{enumerate}
\end{theorem}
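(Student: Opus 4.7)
The plan is to convert (\pref{eqn:abstract_partial_diff_eq}) into Duhamel form, solve it via Banach fixed point in $C([-T,T],\cal B)$, then bootstrap the regularity using the properties of $J$, and finally continue local solutions to global ones. The two-sided resolvent estimates in Definition \pref{def:operatorA} put both $A$ and $-A$ into the Hille--Yosida class; by Lemma \pref{lem:contraction_group} they jointly generate a $\gamma$-contractive group $(W_t)_{t\in\bb R}$ with $\|W_t\|\leq e^{\gamma|t|}$. Any $C^1$ solution of (\pref{eqn:abstract_partial_diff_eq}) with values in $D(A)$ then satisfies the mild equation
\[
  \varphi_t = W_t\varphi^0 + \int_0^t W_{t-s}J(\varphi_s)\,ds,
\]
and conversely, any continuous fixed point of this right-hand side with values in $D(A)$ is a strong solution. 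For the $n=0$ part of (i), fix $\varphi^0\in D(A)$ and $R>\|\varphi^0\|_{\cal B}$; then $\Phi[\varphi]_t:=W_t\varphi^0+\int_0^t W_{t-s}J(\varphi_s)\,ds$ is a contraction on the closed $R$-ball of $C([-T,T],\cal B)$ for all $T$ small enough depending only on $R$, by the group bound together with the Lipschitz estimate of Definition \pref{def:operatorJ}(i) with $n=0$. Uniqueness (ii) follows by writing the difference of two solutions as a linear integral equation whose Lipschitz coefficient $\constr{cj2}^{(0)}$ is uniformly bounded on compact subintervals (since both solutions are continuous, hence bounded there) and applying Gr\"onwall.

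The higher-regularity part of (i) is proved by induction on $n\leq n_J$. Given $\varphi^0\in D(A^n)$, one first shows that $\varphi_t\in D(A^n)$ for all $t\in(-T,T)$: the free part $W_t\varphi^0$ stays in $D(A^n)$ because $W_t$ commutes with $A^n$ there, and for the Duhamel integral the continuity of $s\mapsto A^l J(\varphi_s)$ supplied by Definition \pref{def:operatorJ}(ii)(b) combined with closedness of $A^l$ allows one to push $A^l$ inside, giving $A^l\!\int_0^t W_{t-s}J(\varphi_s)\,ds=\int_0^t W_{t-s}A^l J(\varphi_s)\,ds$ for $0\leq l\leq n$. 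Differentiating the mild formula once then yields $\dot\varphi_t=A\varphi_t+J(\varphi_t)$, whose right-hand side lies in $D(A^{n-1})$ and is continuous in $t$ by Definition \pref{def:operatorJ}(ii)(a)--(b), so $\varphi_{(\cdot)}\in C^1((-T,T),D(A^{n-1}))$. Iterating this $n$ times produces $\varphi_{(\cdot)}\in C^n((-T,T),D(A^n))$ with $\frac{d^k}{dt^k}\varphi_t\in D(A^{n-k})$ for $k\leq n$.

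For global existence (iii), I will argue by continuation. Let $(T_-,T_+)$ be the maximal interval of existence around $0$ with the regularity of (i), and suppose for contradiction that $T_+<\infty$. By the a priori bound (\pref{eqn:apriori}), $M:=\sup_{t\in[0,T_+)}\|\varphi_t\|_{\cal B}<\infty$. The local existence time produced above depends on the initial datum only through $\|\varphi^0\|_{\cal B}$ via the non-decreasing constant $\constr{cj2}^{(0)}$, so there is a uniform lower bound $\tau(M)>0$ on the step size valid from any restart point in $[0,T_+)$. Restarting the fixed-point argument from $\varphi_{T_+-\tau(M)/2}\in D(A^n)$ then produces an extension past $T_+$, contradicting maximality; $T_-$ is handled symmetrically. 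I expect the main technical obstacle to be the inductive regularity step in the second paragraph, where the interchange of $A^l$ with the Duhamel integral and the requirement that every differentiation keeps $\frac{d^k}{dt^k}\varphi_t$ in the correct domain $D(A^{n-k})$ force one to use the full strength of the compatibility conditions in Definition \pref{def:operatorJ}(ii).
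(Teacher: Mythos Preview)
Your overall architecture---Duhamel form, contraction mapping, then continuation via the a~priori bound---matches the paper's. The continuation argument in (iii) and the Gr\"onwall uniqueness in (ii) are fine and essentially equivalent to what the paper does.

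There is, however, a genuine gap in your local existence step. You run the fixed-point argument in $C([-T,T],\cal B)$, but by Definition~\pref{def:operatorJ} the map $J$ is only defined on $D(A)$, not on all of $\cal B$; so $\Phi[\varphi]$ is not even well-defined for a generic $\varphi\in C([-T,T],\cal B)$. Even if you extend $J$ by density using the Lipschitz bound from Definition~\pref{def:operatorJ}(i), the resulting fixed point is a priori only a $\cal B$-valued mild solution, and you have given no argument that it actually takes values in $D(A)$---which you yourself note is required for it to be a strong solution of (\pref{eqn:abstract_partial_diff_eq}). The paper handles this by doing the contraction directly in the graph-norm space
\[
  X_{T,n}=\Big\{\varphi:[-T,T]\to D(A^n)\;\Big|\;t\mapsto A^j\varphi_t\in C^0,\ \|\varphi\|_{X_{T,n}}=\sup_t\textstyle\sum_{j=0}^n\|A^j\varphi_t\|_{\cal B}<\infty\Big\},
\]
so that $D(A^n)$-valuedness and the higher regularity come out of the fixed point itself, with no separate bootstrap needed. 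The key observation making this work is that the Lipschitz estimate in Definition~\pref{def:operatorJ}(i) controls $\|A^j(J(\varphi)-J(\widetilde\varphi))\|_{\cal B}$ by $\|\varphi-\widetilde\varphi\|_{\cal B}$ alone (not the graph norm), so the contraction constant in $X_{T,n}$ is still governed by $T e^{\gamma T}$ times a function of $\|\varphi^0\|_{\cal B}$.

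A smaller point: in your inductive regularity step you invoke Definition~\pref{def:operatorJ}(ii)(b) for the continuity of $s\mapsto A^l J(\varphi_s)$, but the hypotheses of (ii) already require $\varphi_{(\cdot)}\in C^n((-T,T),D(A^n))$ with $\frac{d^k}{dt^k}\varphi_t\in D(A^{n-k})$, which is what you are trying to prove. The continuity you need there actually follows from the Lipschitz estimate in (i), not from (ii)(b); the role of (ii) is only later, to control $\frac{d^j}{dt^j}A^lJ(\varphi_t)$ once the $C^n$ regularity of $\varphi$ is already in hand.
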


\begin{remark}
  Definition \pref{def:operatorJ}(\pref{rem:operatorJ}) is only needed if one aims at two or more times differentiable solutions.
\end{remark}

The proof of Theorem \pref{lem:AJ_local_exist_uniqueness} is a generalization of the main proof in \cite{bauer_maxwell-lorentz_2001} where a skew-adjoint operator $A$ was assumed. However, for our purpose the skew-adjointness must be loosened and, here, is replaced by the conditions on $A$ given in Definition \pref{def:operatorA}. These conditions are required to apply the Hille-Yosida theorem \cite{hille_functional_1974} to assure:
\begin{lemma}[Abstract Contraction Group]\label{lem:contraction_group}
  Operator $A$ introduced in Definition \pref{def:operatorA} generates a $\gamma$-contractive group $(W_t)_{t\in\bb R}$ on $\cal B$, i.e. a family of linear operators  $(W_t)_{t\in\bb R}$ on $\cal B$ with the properties that for all $\varphi\in D(A)$,$\phi\in\cal B$ and $s,t\in\bb R$:
  \begin{multicols}{3}
    \begin{enumerate}[(i)]
      \item $\lim_{t\to 0}W_t\;\phi=\phi$,
      \item $W_{t+s}\phi=W_t W_s \phi$,
      \item $W_t \varphi \in D(A)$,
      \item $AW_t \varphi=W_t A \varphi$,
      \item $W_{(\cdot)}\varphi\in\cal C^1(\bb R,D(A))$,
      \item $\frac{d}{dt}W_t \varphi = AW_t\varphi$,
      \item $\|W_t\phi\|_{\cal B}\leq e^{\gamma |t|}\|\phi\|_{\cal B}$.
    \end{enumerate}
  \end{multicols}
\end{lemma}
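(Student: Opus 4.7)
The plan is to reduce the assertion to the Hille--Yosida theorem (in the form for quasi-contractive semigroups) applied separately to the generators $A$ and $-A$, and then to glue the resulting semigroups into a group on all of $\bb R$. First I would verify that $A$ satisfies the Hille--Yosida hypothesis on the half-line $(\gamma,\infty)$: by Definition \pref{def:operatorA}(i) the operator is closed and densely defined, by (ii) the interval $(\gamma,\infty)$ lies in $\rho(A)$, and by (iii) the resolvent obeys $\|R_\lambda(A)\|_{\cal B}\le 1/(\lambda-\gamma)$. Iterating the resolvent identity and using the assumed bound yields $\|R_\lambda(A)^n\|_{\cal B}\le (\lambda-\gamma)^{-n}$ for all $n\in\bb N$ and $\lambda>\gamma$, which is exactly the Hille--Yosida condition for the generation of a strongly continuous semigroup $(W^+_t)_{t\ge 0}$ on $\cal B$ satisfying $\|W^+_t\|_{\cal B}\le e^{\gamma t}$.

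Next, I would observe that the symmetry of Definition \pref{def:operatorA}(ii)--(iii) around the origin means the same three properties hold for $-A$: its resolvent set contains $(\gamma,\infty)$ via the identification $R_\lambda(-A)=-R_{-\lambda}(A)$, and the bound $\|R_\lambda(-A)\|_{\cal B}\le 1/(\lambda-\gamma)$ is inherited from the bound $\|R_{-\lambda}(A)\|_{\cal B}\le 1/(|-\lambda|-\gamma)$. Hille--Yosida then produces a second quasi-contractive semigroup $(W^-_t)_{t\ge 0}$ generated by $-A$, with the same exponential bound. I would then set
\[
  W_t:=\begin{cases} W^+_t & t\ge 0,\\ W^-_{|t|} & t<0, \end{cases}
\]
and check that $(W_t)_{t\in\bb R}$ is a group. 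On $D(A)$ the map $t\mapsto W^+_t W^-_t\varphi$ is differentiable with derivative $(AW^+_t)W^-_t\varphi + W^+_t(-A)W^-_t\varphi=0$ (using that each semigroup commutes with its generator on the domain), so $W^+_t W^-_t\varphi=\varphi$ for $\varphi\in D(A)$; by the density statement in Definition \pref{def:operatorA}(i) and the uniform bound on both semigroups this extends to all of $\cal B$. The analogous identity $W^-_t W^+_t=\mathbbm{1}$ follows the same way, so the gluing is a genuine group.

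With the group in hand, properties (i)--(vi) are standard consequences of the Hille--Yosida construction applied on either side of $t=0$: strong continuity at $0$ (i) and the group law (ii) follow from the corresponding semigroup properties; the invariance $W_t\varphi\in D(A)$ (iii), the commutation $AW_t\varphi=W_t A\varphi$ (iv) for $\varphi\in D(A)$, the $\cal C^1$ regularity (v), and the generator identity $\frac{d}{dt}W_t\varphi=AW_t\varphi$ (vi) are obtained by differentiating the Yosida approximations $W^\pm_t=\lim_{n\to\infty}e^{tA_n^\pm}$ (with $A_n^\pm:=\pm nAR_{\pm n}(A)$ the Yosida regularisations) and passing to the limit in the standard way. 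The bound (vii), $\|W_t\phi\|_{\cal B}\le e^{\gamma|t|}\|\phi\|_{\cal B}$, is just the combination of the two one-sided quasi-contractivity estimates.

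The main technical point is the verification of $W^+_t W^-_t=\mathbbm{1}$ which identifies the two semigroups as inverses and thereby promotes them to a group; everything else is a bookkeeping exercise inside the Hille--Yosida framework. Since this lemma only serves as a black box for Lemma \pref{lem:operatorA} and the subsequent Duhamel representation in Definition \pref{def:ML time evolution}, no sharper information than the bound (vii) and the commutation (iv) is needed downstream.
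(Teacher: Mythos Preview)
Your proposal is correct and matches the paper's approach: the paper does not give an explicit proof of this lemma but simply states that the conditions in Definition \pref{def:operatorA} are chosen precisely so that the Hille--Yosida theorem \cite{hille_functional_1974} applies, and then records the resulting properties (i)--(vii). Your plan---verifying the Hille--Yosida hypotheses for $A$ and $-A$ separately, obtaining two quasi-contractive semigroups, and gluing them via the inverse identity $W^+_tW^-_t=\mathbbm{1}$---is exactly the standard derivation behind that citation, so you have supplied the details the paper omits.
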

With the help of $(W_t)_{t\in\bb R}$ one can then show the existence and uniqueness of local solutions to the integral equation
\begin{align*}
  \varphi_t = W_t \varphi^0 + \int_{0}^t\; W_{t-s}J(\varphi_s)\;ds
\end{align*}
via Banach's fixed point theorem which can be applied because of the convenient regularity condition imposed on operator $J$. One can show further that all these solutions are regular enough to solve the equation (\pref{eqn:abstract_partial_diff_eq}). Global existence is then achieved with the help of the a priori bound (\pref{eqn:apriori}). See also \cite{pazy_semigroups_1992} for a beautiful exposition on the topic of time-evolution equations. \if\arxiv 0 The proof of Theorem \pref{lem:AJ_local_exist_uniqueness} can be found in \cite{deckert_electrodynamic_2010}.\fi

\if\arxiv 1

\begin{proof}[Proof of Theorem \pref{lem:AJ_local_exist_uniqueness}][Proof of Theorem \pref{lem:AJ_local_exist_uniqueness}]
  (i) Since we want to apply Banach's fixed point theorem, we define a Banach space on which we later define our self-mapping. For $T>0$ let
  \begin{align*}
    X_{T,n} := \bigg\{\varphi_{(\cdot)}:[-T,T]\to D(A^n)\;\Big|\;&t\mapsto A^j\varphi_t\in\cal C^0([-T,T],D(A^n)) \text{ for }j\leq n\\
    &\text{and }\|\varphi\|_{X_{T,n}} := \sup_{T\in[-T,T]}\sum_{j=0}^n\|A^j\varphi_t\|_{\cal B}<\infty\bigg\}.
  \end{align*}
  $(X_{T,n},\|\cdot\|_{X_{T,n}})$ is a Banach space because it is normed and linear by definition, and complete because $A$ on $D(A)$ is closed. Note that the mapping $t\mapsto W_{t}\varphi^0$ is an element of $X_{T,n}$ because for all $t\in\bb R$ and $j\leq n$ we have $t\to A^j W_t\varphi^0=W_t A^j \varphi^0$ which is continuous and $\|W_t A^j\varphi^0\|_{\cal B}\leq e^{\gamma|t|}\|A^j\varphi^0\|_{\cal B}$ by Lemma \pref{lem:contraction_group} and because $\varphi^0\in D(A^n)$. Let
  \begin{align*}
    M_{T,n,\varphi^0}:=\bigg\{\varphi_{(\cdot)}\in X_{T,n}\;\Big |\; \varphi_t|_{t=0}=\varphi^0, \|\varphi_{(\cdot)}-W_{(\cdot)}\varphi^0\|_{X_{T,n}}\leq 1\bigg\},
  \end{align*}
  which is a closed subset of $X_{T,n}$. Next we show that
  \begin{align}\label{eqn:local_contract}
    S_{\varphi^0}: M_{T,n,\varphi^0} \to M_{T,n,\varphi^0} &&
                     \varphi_{(\cdot)} \mapsto S_{\varphi^0}[\varphi_{(\cdot)}]:= W_t\varphi^0 + \int_0^t W_{t-s}J(\varphi_s)\;ds
  \end{align}
  is a well-defined, contracting self-mapping provided $T$ is chosen sufficiently small. The following estimates are based on the fact that for all $\varphi_{(\cdot)}\in M_{T,n,\varphi^0}$ we have the estimate $\|\varphi_t\|_{\cal B}\leq 1+\|W_t\varphi^0\|_{\cal B}\leq 1+e^{\gamma |t|}\|\varphi^0\|_{\cal B}\leq 1+e^{\gamma T}\|\varphi^0\|_{\cal B}$
  for each $t\in [-T,T]$. Let also $\widetilde\varphi_{(\cdot)}\in M_{T,n,\varphi^0}$, then the properties of $J$, see Definition \pref{def:operatorJ}, yield the helpful estimates for all $t\in[-T,T]$:
  \begin{align}\label{eqn:J_estimate}
    \|A^j J(\varphi_t)\|\leq \constl{clocalJ1}(T) && \text{and} &&
    \|A^j (J(\varphi_t)-J(\widetilde\varphi_t))\|\leq \constl{clocalJ2}(T) \|\varphi_t-\widetilde\varphi_t\|_{\cal B}.
  \end{align}
  for
  \begin{align}\label{eqn:Banach consts}
    \begin{split}
      \constr{clocalJ1}(T)&:=\constr{cj1}^{(j)}(\|\varphi_t\|_{\cal B})\leq \constr{cj1}^{(j)}(1+e^{\gamma |t|}\|\varphi^0\|_{\cal B})\;\text{and}\\
      \constr{clocalJ2}(T) &:=\constr{cj2}^{(j)}(1+e^{\gamma |t|}\|\varphi^0\|_{\cal B},1+e^{\gamma |t|}\|\varphi^0\|_{\cal B}).
    \end{split}
  \end{align}
  Hence, $\constr{clocalJ1}(T),\constr{clocalJ2}(T)$ depend continuously and non-decreasingly on $T$.

  We show now that $S_{\varphi^0}$ is a self-mapping. Since $t\mapsto W_t\varphi^0$ is in $M_{T,n,\varphi^0}$, it suffices to show that the mapping $t\mapsto A^j\int_0^t\;W_{t-s}J(\varphi_s)\;ds$ is $D(A^{n-j})$ valued, continuous and that its $\|\cdot\|_{X_{T,n}}$ norm is finite for $j\leq n$. Consider $\varphi_{(\cdot)}\in M_{T,n,\varphi^0}$, so for some $h>0$ we get
  \begin{align*}
    &\|A^j W_{t-(s+h)}J(\varphi_{s+h})-A^j W_{t-s}J(\varphi_s)\|_{\cal B}\\
    &\leq e^{\gamma|t-(s+h)|}\|A^j (J(\varphi_{s+h})-J(\varphi_s))\|_{\cal B}+
    e^{\gamma|t-s|}\| (1-W_h)A^jJ(\varphi_{s})\|_{\cal B}\\
    &\leq e^{\gamma|t-(s+h)|}\constr{clocalJ2}\|\varphi_{s+h}-\varphi_s\|_{\cal B}+e^{\gamma|t-s|}\| (1-W_h)A^jJ(\varphi_{s})\|_{\cal B}\xrightarrow[h\to 0]{}0
  \end{align*}
  by continuity of $t\to\varphi_t$, estimate (\pref{eqn:J_estimate}) and properties of $(W_t)_{t\in\bb R}$. We may thus define
  \begin{align*}
    \sigma^{(j)}(t) := \int_0^t\; A^j W_{t-s} J(\varphi_s)\;ds
  \end{align*}
  as $\cal B$ valued Riemann integrals. Let $\sigma_N^{(j)}(t) := \frac{t}{N}\sum_{k=1}^N A^j W_{t-\frac{t}{N}k} J(\varphi_{\frac{t}{N}k})$ be the corresponding Riemann sums.
  Clearly $\sigma_N^j(t)\in D(A^{n-j})$ since $J:D(A)\to D(A^n_J)$ and $\lim_{N\to\infty} A^j\sigma_N(t)$ $ = \sigma^{(j)}(t)$
  for all $t\in\bb R$ and $j\leq n$. But $A$ is closed which implies $\sigma^0(t)\in D(A^{n})$ and $\sigma^j(t)=A^j\sigma^0(t)$. Next we show continuity. With estimate (\pref{eqn:J_estimate}) we get for $t\in(-T,T)$:
  \begin{align*}
    &\|A^j\sigma(t+h)-A^j\sigma(t)\|_{\cal B} = \|\sigma^j(t+h)-\sigma^j(t)\|_{\cal B}\\
    &\leq \int_t^{t+h}\;\left\|W_{t+h-s}A^jJ(\varphi_s)\right\|_{\cal B}\;ds +\int_0^{t}\;\left\|W_{t-s}(W_h-1)A^jJ(\varphi_s)\right\|_{\cal B}\;ds\\
    &\leq e^{\gamma|h|}\int_t^{t+h}\;\left\|A^jJ(\varphi_s)\right\|_{\cal B} + e^{\gamma T}\int_0^{t}\;\left\|(W_h-1)A^jJ(\varphi_s)\right\|_{\cal B}\;ds
  \end{align*}
  For $h\to 0$ the right-hand side goes to zero as the integrand of the second summand $\|(W_h-1)A^jJ(\varphi_s)\|_{\cal B}$ does, which by (\pref{eqn:J_estimate}) is also bounded by $(1+e^{\gamma T})\constr{clocalJ1}(T)$ so that dominated convergence can be used. The self-mapping property is ensured by (\pref{eqn:J_estimate}):
  \begin{align*}
    &\|S_{\varphi^0}[\varphi_{(\cdot)}]-W_{(\cdot)}\varphi^0\|_{X_{T,n}}
    = \sup_{t\in [-T,T]}\sum_{j=0}^n\left\|A^j\int_0^t W_{t-s} J(\varphi_s)\;ds\right\|_{\cal B}\\
    &\leq e^{\gamma T}\sup_{t\in [-T,T]}\sum_{j=0}^n\int_0^t \|A^jJ(\varphi_s)\|_{\cal B} \;ds
    \leq T e^{\gamma T}\constr{clocalJ1}(T)(n+1).
  \end{align*}
  On the other hand for some $\widetilde \varphi_{(\cdot)}\in M_{T,n,\varphi^0}$ we find
  \begin{align*}
    &\|S_{\varphi^0}[\varphi_{(\cdot)}]-S_{\varphi^0}[\widetilde\varphi_{(\cdot)}]\|_{X_{T,n}}
    = \sup_{t\in[-T,T]}\sum_{j=0}^n\|A^j\int_0^t W_{t-s} [J(\varphi_s)-J(\widetilde\varphi_s)]\;ds\|_{\cal B}\\
    &\leq e^{\gamma T}\sup_{t\in[-T,T]}\sum_{j=0}^n\int_0^t \|A^j[J(\varphi_s)-J(\widetilde\varphi_s)]\|_{\cal B} \;ds
    \leq T e^{\gamma T}\constr{clocalJ2}(T)\|\varphi_{(\cdot)}-\widetilde\varphi_{(\cdot)}\|_{X_{T,n}}.
  \end{align*}
  Since $T\mapsto\constr{clocalJ1}(T)$ and $T\mapsto\constr{clocalJ2}(T)$ are continuous and non-decreasing, there exists a $T>0$ such that
  \begin{align}\label{eqn:Banach time}
    Te^{\gamma T}[\constr{clocalJ1}(T)(n+1)+\constr{clocalJ2}(T)]<1.
  \end{align}
  Thus, for this choice of $T$, $S_{\varphi^0}$ is a contracting self-mapping on the closed set $M_{T,n,\varphi^0}$ so that due to Banach's fixed point theorem $S_{\varphi^0}$ has a unique fixed point $\varphi_{(\cdot)}\in M_{T,\varphi^0}$.

 Next we study the differentiability of this fixed point, in particular of $t\to A^j\varphi_{t}$ on $(-T,T)$ for $j\leq (n-1)$. As $\varphi_{(\cdot)}=S_{\varphi^0}[\varphi_{(\cdot)}]$, Definition (\pref{eqn:local_contract}), and $\varphi^0\in D(A^n)$  we have
  \begin{align*}
    \frac{A^j\varphi_{t+h}-A^j\varphi_t}{h} &= \frac{W_{t+h}-W_{t}}{h}A^j\varphi^0 +
    \frac{\sigma^j(t+h)-\sigma^j(t)}{h}=:\terml{A_diff}+\terml{int_diff}.
  \end{align*}
  By the properties of $(W_t)_{t\in\bb R}$ we know $\lim_{h\to 0}\termr{A_diff}=A^{j+1}\varphi^0$. Furthermore,
  \begin{align*}
    \termr{int_diff}=\frac{1}{h}\int_t^{t+h}\;W_{t+h-s}A^j J(\varphi_s)\;ds + \int_0^{t}\;W_{t-s}\frac{W_h-1}{h}A^j J(\varphi_s)\;ds
  \end{align*}
  For $h\to 0$ the first term on the right-hand side converges to $A^jJ(\varphi_t)$ because of
  \begin{align*}
    \frac{1}{h}\left\|\int_t^{t+h}\;W_{t+h-s}A^j J(\varphi_s)\;ds-A^jJ(\varphi_t)\right\|_{\cal B}= \sup_{s\in(t,t+h)}\left\|W_{t+h-s}A^j J(\varphi_s)-A^jJ(\varphi_t)\right\|_{\cal B}
  \end{align*}
  and the continuity of $W_{t+h-s}A^j J(\varphi_s)$ in $h$ and $s$. For $h\to 0$ the second term converges to \linebreak $\int_0^{t}\; W_{t-s} A^{j+1} J(\varphi_s) \;ds$ by dominated convergence as the integrand converges to $W_{t-s}A^{j+1} J(\varphi_s)$, and the following gives a convenient bound of it:
  \begin{align*}
    \left\|W_{t-s}\frac{W_h-1}{h}A^j J(\varphi_s)\right\|_{\cal B}= \left\|\frac{1}{h}\int_0^h\; W_{t-s} W_{h'} A^{j+1}J(\varphi_s)\;dh'\right\|_{\cal B}\leq e^{\gamma(T+1)}\|A^{j+1}J(\varphi_s)\|_{\cal B}.
  \end{align*}
 Collecting all terms, we have shown that
  \begin{align*}
    \frac{d}{dt}A^j \varphi_t = A^j W_t\varphi^0 + A^{j} J(\varphi_t) + A^{j+1}\int_0^t\;W_{t-s} J(\varphi_s)\;ds = A^{j+1}\varphi_t + A^{j} J(\varphi_t).
  \end{align*}
  Note that the right-hand side is continuous because $j\leq(n-1)$, $\varphi_{(\cdot)}\in M_{T,n,\varphi^0}$ and (\pref{eqn:J_estimate}). Hence $A^j\varphi_{(\cdot)}\in\cal C^1((-T,T),D(A^{n-j}))$ and $\frac{d}{dt}A^j\varphi_t\in D(A^{n-j-1})$ for all $t\in(-T,T)$. Next we prove for $t\in(-T,T)$ and $k\leq n$ that $\varphi_{(\cdot)}\in\cal C^n((-T,T),D(A^n))$, $\frac{d^k}{dt^k}\varphi_t\in D(A^{n-k})$ by induction. We claim that
  \begin{align*}
    \frac{d^k}{dt^k}\varphi_t = A^k\varphi_t+\sum_{l=0}^{k-1} \frac{d^{k-1-l}}{dt^{k-1-l}} A^l J(\varphi_t)
  \end{align*}
  holds, is continuous in $t$ on $(-T,T)$ and in $D(A^{n-k})$. We have shown before that this holds for $k=0$. Assume it is true for some $(k-1)\leq n-1$. We compute
  \begin{align*}
    \frac{d}{dt}\frac{d^{k-1}}{dt^{k-1}}\varphi_t &= A^{k}\varphi_t+A^{k-1} J(\varphi_t)+\sum_{l=0}^{k-2} \frac{d^{k-1-l}}{dt^{k-1-l}} A^l J(\varphi_t)
    =A^{k}\varphi_t+\sum_{l=0}^{k-1} \frac{d^{k-1-l}}{dt^{k-1-l}} A^l J(\varphi_t).
  \end{align*}
  The first term on the right-hand side is continuous in $t$ on $(-T,T)$ and in $D(A^{n-k})$ as shown before. Now Definition (\pref{def:operatorJ})(\pref{rem:operatorJ}), where we have defined the operator $J$, was chosen to guarantee that these properties hold also for the second term.

  (ii) Clearly, $\varphi_{(\cdot)}$ and $\widetilde\varphi_{(\cdot)}$ are both in $X_{T_1,1}$ for any $0<T_1\leq \min(T,\widetilde T)$ because they are at least once continuously differentiable. Since $\varphi_t|_{t=0}=\widetilde \varphi_t|_{t=0}$ holds, we can choose $T_1>0$ sufficiently small such that $\varphi_{(\cdot)}$ and $\widetilde\varphi_{(\cdot)}$ are also in $M_{T_1,1,\varphi^0}$ and in addition that $S_{\varphi^0}$ is a contracting self-mapping on $M_{T_1,1,\varphi^0}$. As in (i) we infer that there exists a unique fixed point $\varphi^1_{(\cdot)}\in M_{T_1,1,\varphi^0}$ of $S_{\varphi^0}$ which solves (\pref{eqn:abstract_partial_diff_eq}). Since $\varphi_{(\cdot)}$ and $\widetilde\varphi_{(\cdot)}$ also solve (\pref{eqn:abstract_partial_diff_eq}), it must hold that $\varphi_t=\varphi^1_t=\widetilde\varphi_t$ on $[-T_1,T_1]$. Let $\overline T$ be the supremum of all those $T_1$ and let us assume that $\overline T<\min(T,\widetilde T)$. We can repeat the above argument with e.g. initial values $\varphi_t|_{t=T_1}=\widetilde\varphi_t|_{t=T_1}$ at time $t=T_1$. Again we find a $T_2>0$ and a fixed point $\varphi^2_{(\cdot)}\in M_{T_2,1,\varphi_{\overline T}}$ of $S_{\varphi_{\overline T}}$ so that $\varphi_t=\varphi^2_{t-\overline T}=\widetilde\varphi_t$ on $[\overline T-T_2,\overline T-T_2]$. The same can be done for initial values $\varphi_t|_{t=-T_1}=\widetilde\varphi_t|_{t=-T_1}$ at time $t=-T_1$. This yields $\varphi_t=\widetilde\varphi_t$ for $t\in[\overline T-T_2,\overline T+T_2]$ and contradicts the maximality of $\overline T$. Hence, $\varphi_{(\cdot)}$ equals $\widetilde\varphi_{(\cdot)}$ on $[-T,T]\cap[-\widetilde T,\widetilde T]$.

  (iii) Fix any $\widetilde T>0$. The a priori bound (\pref{eqn:apriori}) tells us that if any solution $\varphi:(-\widetilde T,\widetilde T)\to D(A^n)$ with $\varphi_t|_{t=0}=\varphi^0\in D(A^n)$ exists, then $\sup_{t\in[-\widetilde T,\widetilde T]}\|\varphi_t\|_{\cal B}\leq \constr{apriori}(\widetilde T)<\infty$. By looking at equations (\pref{eqn:Banach time}) and (\pref{eqn:Banach consts}) we infer that there exists a $T_{min}>0$ such that for each $t\in[-\widetilde T,\widetilde T]$ the time span $T$ for which $S_{\varphi_t}$ on $M_{T,n,\varphi_t}$ fulfills $T_{min}\leq T$. Let $\varphi_{(\cdot)}$ be the fixed point of $S_{\varphi^0}$ on $M_{T_1,n,\varphi^0}$ for $T_1>0$, and let $\overline T$ be the supremum of such $T_1$. Assume $\overline T<\widetilde T$. By taking an initial value $\varphi_{\pm(\overline T-\epsilon)}$ for $0<\epsilon<T_{min}$ near to the boundary, (i) and (ii) extends the solution beyond $(-\overline T,\overline T)$ and contradicts the maximality of $\overline T$.
\end{proof}

\fi

\vskip1cm

\noindent\emph{G. Bauer}\\ FH M\"unster\\
Bismarckstra\ss e 11, 48565 Steinfurt, Germany\\

\noindent\emph{D.-A. Deckert}\\
Department of Mathematics, University of California Davis\\
One Shields Avenue, Davis, California 95616, USA\\

\noindent \emph{D. D\"urr}\\
Mathematisches Institut der LMU M\"unchen\\
Theresienstra\ss e 39, 80333 M\"unchen, Germany

\end{document}